\newcommand\arXiv[1]{\href{http://arxiv.org/abs/#1}{\nolinkurl{arXiv:#1}}}
\newcommand\DOI[1]{\href{http://dx.doi.org/#1}{\nolinkurl{DOI:#1}}}
\newcommand\MAILTO[1]{\href{mailto:#1}{\nolinkurl{#1}}}
\newtheorem{dummy}{Dummy}[section]
\newtheorem{theorem}[dummy]{Theorem}
\newtheorem{proposition}[dummy]{Proposition}
\newtheorem{conjecture}[dummy]{Conjecture}
\newtheorem{question}[dummy]{Question}
\newtheorem{definition}[dummy]{Definition}
\newtheorem*{mainquestion}{Main Question}
\theoremstyle{definition}
\newtheorem{examplenodiamond}[dummy]{Example}
\newtheorem{remarknodiamond}[dummy]{Remark}
\newenvironment{example}{\begin{examplenodiamond}}{\hfill\ensuremath{\Diamond}\end{examplenodiamond}}
\newenvironment{remark}{\begin{remarknodiamond}}{\hfill\ensuremath{\Diamond}\end{remarknodiamond}}
\renewcommand\mathbb\mathds
\newcommand\bC{\mathbb C}
\newcommand\bF{\mathbb F}
\newcommand\bM{\mathbb M}
\newcommand\bN{\mathbb N}
\newcommand\bZ{\mathbb Z}
\newcommand\cC{\mathcal C}
\newcommand\cN{\mathcal N}
\newcommand\rB{\mathrm B}
\newcommand\rH{\mathrm H}
\newcommand\rU{\mathrm U}
\DeclareMathOperator\homology{H}
\renewcommand\H{\homology}
\newcommand\longto\longrightarrow
\newcommand\mono\hookrightarrow
\newcommand\epi\twoheadrightarrow
\newcommand\<\langle
\renewcommand\>\rangle
\newcommand\sminus\smallsetminus
\newcommand\st{\text{ s.t.\ }}
\newcommand\Leech{\mathrm{Leech}}
\newcommand\SU{\mathrm{SU}}
\newcommand\Aut{\mathrm{Aut}}
\DeclareMathOperator\Fer{Fer}
\DeclareMathOperator\Bos{Bos}
\DeclareMathOperator\Sq{Sq}
\newcommand\tr{\mathrm{tr}}
\DeclareMathOperator\End{End}
\newcommand\TMF{\mathrm{TMF}}
\newcommand\MF{\mathrm{MF}}
\newcommand\ev{\mathrm{ev}}
\newcommand\odd{\mathrm{odd}}
\newcommand\gp{\mathrm{gp}}
\DeclareMathOperator\SH{SH}
\DeclareMathOperator\rSH{rSH}
\DeclareMathOperator\Index{Index}
\DeclareMathOperator\CWE{CWE}
\newcommand\OL{\mathrm{OddLeech}}
\newcommand\Niem{{\operatorname{Nie}(A_1^{24})}}
\newcommand\define[1]{\emph{#1}}
\newcommand\cat[1]{\textsc{#1}}
\newcommand\mqlink{\hyperref[mainquestion]{Main Question}}
\title{Holomorphic SCFTs with small index}
\author{Davide Gaiotto and Theo Johnson-Freyd}
\address{Perimeter Institute for Theoretical Physics, Waterloo, Ontario}
\thanks{We thank Noam D.\ Elkies for many valuable conversations, some of which were hosted by \url{mathoverflow.net}, and for helping with a number of calculations. We also thank Greg Moore and Jeff Harvey for comments on a draft of this paper.
Research at the Perimeter Institute for Theoretical Physics is supported by the Government of Canada through the Department of Innovation, Science and Economic Development Canada and by the Province of Ontario through the Ministry of Research, Innovation and Science.}
\begin{document}

\begin{abstract}
We observe that every self-dual ternary code determines a holomorphic $\cN=1$ superconformal field theory. This provides ternary constructions of some well-known holomorphic $\cN=1$ SCFTs, including Duncan's ``supermoonshine'' model and the fermionic ``beauty and the beast'' model of Dixon, Ginsparg, and Harvey.
Along the way, we clarify some issues related to orbifolds of fermionic holomorphic CFTs.
We give a simple coding-theoretic description of the supersymmetric index and conjecture that for every self-dual ternary code this index is divisible by $24$; we are able to prove this conjecture except in the case when the code has length $12$ mod $24$. Lastly, we discuss a conjecture of Stolz and Teichner relating $\cN=1$ SCFTs with Topological Modular Forms. This conjecture implies constraints on the supersymmetric indexes of arbitrary holomorphic SCFTs, and suggests (but does not require) that there should be, for each~$k$, a holomorphic $\cN=1$ SCFT of central charge $12k$ and index $24/\gcd(k,24)$. We give ternary code constructions of SCFTs realizing this suggestion for $k\leq 5$.
\end{abstract}
\maketitle

The motivation for this note comes from an attempt to construct two-dimensional superconformal field theories (SCFTs) that would explain some features of the generalized cohomology theory known as Topological Modular Forms (TMF). The connection between SCFTs and TMF is predicted by conjectures of Stolz and Teichner \cite{MR2742432} proposing a geometric model of TMF in terms of (not necessarily conformal) two-dimensional supersymmetric field theory. In particular, as we explain in Section~\ref{TMF section}, each holomorphic vertex operator algebra (VOA) of central charge $c$ equipped with $\cN=1$ supersymmetry should determine a class in $\TMF_{2c} = \pi_{2c}\TMF(\mathrm{pt})$. Let $\MF_c$ denote the space of integral modular forms of weight $c$. There is an edge map $\TMF_{2c} \to \MF_c$, which in the proposed model should take an SCFT to its supersymmetric index, multiplied by $\eta^{2c} = \Delta^{c/12}$. A curious feature of $\TMF$ is that the smallest multiple of $\Delta^k$ in the image of $\TMF_{24k} \to \MF_{12k}$ is~$\frac{24}{\gcd(k,24)}\Delta^k$.
 Our goal is to construct holomorphic $\cN=1$ VOAs realizing these values. We pose this goal as our Main Question:
\begin{mainquestion}\label{mainquestion}
For each $k$, does there exist a holomorphic $\cN=1$ VOA with central charge $c=12k$ and supersymmetric index $\frac{24}{\gcd(k,24)}$?
\end{mainquestion}

In order to answer the \mqlink, and to explore TMF more generally, it is useful to have a source for holomorphic $\cN=1$ VOAs.  As described in Section~\ref{SCFT section}, every linear ternary self-dual, also called ``Type III,'' code determines a VOA equipped with $\cN=1$ structure; 
it is a lattice VOA for what in \cite{MR2522420} is called a ``$3$-framed lattice,'' and so we will refer to the result as a ``$3$-framed VOA.''
 Further $\cN=1$ VOAs can be constructed by orbifolding $3$-framed VOAs by nonanomalous symmetries.
We study the supersymmetry-preserving automorphisms of $3$-framed VOAs in Section~\ref{symmetry section}.
The theory of fermionic orbifolds is studied in Section~\ref{orbifold section}, where we clarify the role of the 't Hooft anomaly and of its trivializations.
 Among the $\cN=1$ VOAs that arise as orbifolds of $3$-framed VOAs are the ``supermoonshine'' model of \cite{MR2352133} (Examples~\ref{eg.scfts with c=12} and~\ref{supermoonshine continued}) and of the ``beauty and the beast'' model of \cite{MR968697} (Example~\ref{eg.bandb}).

The supersymmetric index of the $3$-framed VOA and of its orbifolds can be computed directly from the code (Proposition~\ref{prop.index=index} and Theorem~\ref{thm.answers}). This suggests that one study the ``index'' of self-dual ternary codes for their own sake. Our results and conjectures lead to the following  divisibility statement (Theorem~\ref{thm.index of a code} and Conjecture~\ref{code conjecture}): the index of a self-dual ternary code is always divisible by $24$. 
Our proof requires  nontrivial divisibility results about modular forms.
Section~\ref{code section} contains some elementary background on ternary codes and explains the definition of the index of a code without reference to supersymmetric field theory.

All together, one way to affirmatively answer the \mqlink\
is to construct self-dual ternary codes with  index $24$ (and an appropriate orbifoldable symmetry). 
For small $c$, the complete classification of self-dual ternary codes of length $c$ is known \cite{MR2522420}, but for large $c$ the best approach available is an exponential-time computer search.
With help from Noam D.\ Elkies implementing such a search, we succeeded at finding $\cN=1$ VOAs of central charge $c = 12k$ and supersymmetric index $\frac{24}{\gcd(k,24)}$ for $k\leq 5$.

This work suggests several interesting future directions of inquiry:
\begin{question}
Is there a systematic construction of self-dual ternary codes of  index $24$ which would avoid the use of intensive computation? Such a construction could provide a systematic construction of $\cN=1$ VOAs realizing the modular forms $\frac{24}{\gcd(k,24)}\Delta^k$.
\end{question}
\begin{question}
 After $\frac{24}{\gcd(k,24)}\Delta^k$, the next most interesting  modular forms in the image of $\TMF \to \MF$ are the Theta functions of even unimodular lattices. These will not come from holomorphic VOAs, but might come from full CFTs with $\cN=(0,1)$ supersymmetry. Is there a systematic construction that inputs an even unimodular lattice $\Lambda$ of rank $r$ and produces a $\cN=(0,1)$ full CFT of total central charge $c_R - c_L = r$ and Ramond--Ramond partition function $\Theta_\Lambda / \eta^r$?
\end{question}
\begin{question}
Does the Stolz--Teichner proposal generalize to involve supersymmetric field theories 
equipped with discrete symmetry group actions of specified 't Hooft anomaly? 
 It would be nice to find 
what the corresponding ``equivariant'' TMF theory 
should look like. The holomorphic SCFTs we build have large discrete symmetry groups and should 
have interesting images under such a conjectural correspondence.
\end{question}
\begin{question}
What is the physical meaning of TMF classes, especially of torsion type?
Which physical relation between two theories implies that they represent the same 
class in TMF?
\end{question}

\section{The index of a ternary code}\label{code section}

Let $\bF_3$ denote the field of order $3$.
By definition, a \define{ternary code} of length $c$ is a linear subspace  $C \subset \bF_3^c$. 
The ambient space $\bF_3^c$ has a standard ``Cartesian'' inner product 
$\langle (v_1,\dots,v_c), (w_1,\dots,w_c) \rangle = \sum_i v_i w_i \in \bF_3$,
 and so given a code $C \subset \bF_3^c$ one can construct its \define{dual code} $C^\perp = \{w \in \bF_3^c \st \langle w,v \rangle = 0 ~\forall v \in C\}$. A ternary code is \define{self-dual} if  $C^\perp = C$.

Elements of $\bF_3^c$ are called \define{words}, and if a code $C$ is fixed, elements of $C$ are called \define{code words}. The \define{Hamming weight} of a  word is its number of non-zero entries. A word is \define{maximal} if its Hamming weight is $c$.
In a self-dual ternary code, every code word is self-orthogonal, and so in particular has Hamming weight divisible by~$3$.
It is well-known that self-dual ternary codes of length $c$ can occur only when $c$ is divisible by~$4$, and so a self-dual ternary code  can contain maximal codewords only when $c = 12k$ for some $k\in \bN$.

Suppose $w \in \bF_3^c$ is a maximal word. Let us say that $w$ is \define{even} or \define{odd} according to the number of $1$s among its entries, mod $2$.

\begin{definition}\label{defn.indexofcode}
Suppose $C$ is a self-dual ternary code. The \define{index} of $C$ is
$$ \Index(C) = \# \{\text{even maximal codewords}\} - \#\{\text{odd maximal codewords}\}.$$
\end{definition}

\begin{remark}\label{rem.CWE}
The index is the value at $(x,y,z) = (0,1,-1)$ of the \define{complete weight enumerator} defined as
$$ \CWE_C(x,y,z) = \sum_{w\in C} x^{n_0(w)} y^{n_1(w)} z^{n_{-1}(w)} = \sum_{w \in C} \prod_{i=1}^c x_{w(i)}$$
where $n_i(w)$ is the number of $i$s among the coordinates of $w$, $(x_0,x_1,x_{-1}) = (x,y,z)$, and $w(i) \in \{0,1,-1\} = \bF_3$ is the $i$th coordinate of $w$.
\end{remark}

The first main result of this paper is:
\begin{theorem}\label{thm.index of a code}
  If $C$ is a self-dual ternary code of length divisible by $24$, then $\Index(C)$ is divisible by $24$. If the length of $C$ is merely divisible by $12$, then $\Index(C)$ is divisible by $12$.
\end{theorem}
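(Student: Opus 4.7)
The plan is to realize $\Index(C)$ as a Fourier coefficient of an integral modular form of weight $12k$ attached to $C$ (where $c = 12k$), and then to invoke nontrivial divisibility results for the coefficients of such forms.

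\textbf{Construction of the modular form.} By Construction~A, the self-dual ternary code $C$ of length $c = 12k$ yields a rank-$c$ lattice $L_C = \tfrac{1}{\sqrt{3}}\{v \in \bZ^c : v \bmod 3 \in C\}$. Substituting appropriate signed theta functions $\theta^-_a(\tau)$ of the cosets $\tfrac{1}{\sqrt 3}(a + 3\bZ)$, $a \in \{0, 1, -1\}$, into $\CWE_C$, with signs chosen to match the parity of $n_1(w)$, produces a series $\Theta_C^-(\tau)$ whose lowest nonvanishing $q$-coefficient is $\Index(C)$: non-maximal codewords contribute only at strictly higher orders in $q$, while maximal codewords contribute with precisely the signs $(-1)^{n_1(w)}$ from Definition~\ref{defn.indexofcode}. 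Using the self-duality of $C$ — which, via the MacWilliams identity, corresponds to the S-transformation of the $\theta^-_a$ — one shows that a suitable $\eta$-multiple $F_C(\tau)$ of $\Theta_C^-(\tau)$ is an integral modular form of weight $12k$ for the full modular group $\mathrm{SL}_2(\bZ)$, whose $q$-expansion begins with $\Index(C) \cdot q^k + O(q^{k+1})$.

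\textbf{Divisibility.} Since $F_C$ vanishes at $i\infty$ to order $\geq k$, one may write $F_C = \Index(C) \cdot \Delta^k + R_C$ with $R_C$ an integer linear combination of basis forms of $M_{12k}(\mathrm{SL}_2(\bZ); \bZ)$ of $q$-order strictly less than $k$. Imposing integrality of the Fourier coefficients of $F_C$ below $q^k$ (where $F_C$ vanishes) produces nontrivial congruences on $\Index(C)$: the denominators of the Bernoulli numbers appearing in the relevant Eisenstein series (controlled by the von~Staudt--Clausen theorem) force $\Index(C) \equiv 0 \pmod{12}$. The stronger divisibility by $24$ when $24 \mid c$ follows from a refined congruence at weights $12k$ with $k$ even, leveraging extra structure in $M_{12k}(\bZ)$ available only when the weight is itself a multiple of $24$.

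The main obstacle is the second step: passing from integrality of $F_C$ to the sharp divisibility of $\Index(C)$. This is precisely where the ``nontrivial divisibility results about modular forms'' alluded to in the abstract enter, and where the split between the cases $12 \mid c$ and $24 \mid c$ emerges. The construction step itself is largely formal, though it requires careful bookkeeping with theta multipliers and the sign conventions that identify $\Index(C)$ with the leading coefficient of $\Theta_C^-$.
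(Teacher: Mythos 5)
Your construction step is sound and essentially reproduces the paper's own computation (Remark~\ref{remark.alternateproof}): the signed theta series of $\Lambda(C)$ equals $\CWE_C(0,\eta,-\eta)=\Index(C)\,\eta^c$. The genuine gap is in your divisibility step. For $c=24k$ your $F_C$ is a weight-$12k$ form for $\mathrm{SL}_2(\bZ)$ vanishing to order $\geq k$ at the cusp; since the space of such forms has rank $k+1$ and a nonzero form of weight $12k$ vanishes at $i\infty$ to order at most $k$, this forces $F_C=\Index(C)\,\Delta^k$ exactly, so your $R_C$ is $0$ and there are no coefficients ``below $q^k$'' to constrain. Moreover $m\Delta^k$ has integral Fourier coefficients for \emph{every} integer $m$, so integrality of $F_C$ alone yields no congruence on $\Index(C)$; von~Staudt--Clausen-type congruences bite only when one knows more than integrality (e.g.\ the constant term). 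The missing idea is to write $F_C=\Theta_{\Lambda^+}-\Theta_{\Lambda^-}$, where $\Lambda^\pm$ are the two \emph{even unimodular neighbors} of $\Lambda(C)$ (these exist and are even precisely because $8\mid c$). Each $\Theta_{\Lambda^\pm}$ individually has constant term $1$, and the nontrivial input --- \cite[Theorem 12.1]{MR1323986}, see also \cite[Theorem 5.10]{MR1989190} --- is that in the expansion $\Theta_{\Lambda^\pm}=\sum_i a_i^\pm\,\Delta^{k-i}c_4^{3i}$ the coefficient $a_0^\pm$ is divisible by $24$; then $\Index(C)=a_0^+-a_0^-$. That is where the congruences you allude to actually live: they apply to each neighbor's theta function separately, not to the difference.

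Your handling of the case $12\mid c$, $24\nmid c$ also cannot work as described: there the rank of $\Lambda(C)$ is not divisible by $8$, the neighbors are odd unimodular, and $\eta^c=\Delta^{c/24}$ is not a level-one form, so there is no level-one weight-$12k$ object to analyze (and, contrary to your last sentence, it is the $24\mid c$ case that admits the level-one argument, not a ``refinement'' of the $12\mid c$ case). The paper sidesteps this entirely by multiplicativity: $\Index(C\oplus C)=\Index(C)^2$ is divisible by $24$ by the first case, and $24\mid n^2$ forces $12\mid n$.
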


If the length of $C$ is not divisible by $12$, then $\Index(C)$ vanishes, as there are simply no maximal codewords.
The second sentence in the Theorem follows immediately from the first together with the easy observation that $\Index(C' \oplus C'') = \Index(C') \Index(C'')$ (and the even easier observation that if $24$ divides $n^2$, then $12$ divides $n$). We will prove the first sentence in the next section, after Remark~\ref{remark.alternateproof}. Our proof relies on some nontrivial facts about Theta functions of lattices, and we do not know if there is an elementary proof. Furthermore, we expect that:

\begin{conjecture} \label{code conjecture}
  The index of any self-dual ternary code is divisible by $24$, even if the length is merely divisibly by $12$. For all lengths $c = 12k$, there exists a self-dual ternary code of index $24$.
\end{conjecture}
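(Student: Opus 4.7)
The strategy is to realize $\Index(C) \cdot \Delta^{c/12}$ as a holomorphic level-$1$ integer modular form of weight $c$, and then to deduce divisibility from the structure of the space $M_c(\mathrm{SL}_2(\bZ);\bZ)$.

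Starting from the code $C$, attach the even unimodular lattice $\Lambda_C$ of rank $2c$ obtained by gluing $c$ copies of $A_2$ via the isomorphism $A_2^*/A_2\cong\bF_3$; for $12\mid c$ the rank $2c$ is divisible by $24$. Substituting the three $A_2$ sub-theta series into the complete weight enumerator gives
\[
\Theta_{\Lambda_C}(\tau) \;=\; \CWE_C\bigl(\theta_{A_2,0}(\tau),\,\theta_{A_2,+}(\tau),\,\theta_{A_2,-}(\tau)\bigr),
\]
a weight-$c$ level-$1$ modular form. This does not directly capture $\Index(C) = \CWE_C(0,1,-1)$, however, since $\theta_{A_2,+}=\theta_{A_2,-}$ and the ordinary $A_2$ sub-thetas never vanish on the upper half-plane. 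To reach $(0,1,-1)$, I would upgrade to Jacobi theta functions of $A_2$ in an elliptic variable, or equivalently to character-twisted theta series on $\Lambda_C$ distinguishing the two nontrivial cosets of $A_2^*/A_2$; or, following Section~\ref{SCFT section}, one uses Proposition~\ref{prop.index=index} to identify $\Index(C)$ with the Ramond-sector Witten index of the associated $3$-framed SCFT, which by $\cN{=}1$ supersymmetry is manifestly $\tau$-independent. Multiplying by $\eta^{2c}=\Delta^{c/12}$ then packages it as a specific integer modular form
\[
F_C \;=\; \Index(C) \cdot \Delta^{c/12} \;\in\; M_c\bigl(\mathrm{SL}_2(\bZ);\bZ\bigr)
\]
built explicitly from lattice theta series.

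The divisibility statement then reduces to analyzing the $\Delta^{c/12}$-coefficient of a theta-constructed element of $M_c(\mathrm{SL}_2(\bZ);\bZ)$. For $c = 24k$ one compares $F_C$ against a standard integer basis of monomials $E_4^a E_6^b \Delta^j$ with $4a+6b+12j=c$; the theta origin of $F_C$ together with classical integrality of Fourier coefficients of lattice theta series and the $24$-torsion in Eisenstein denominators of this weight should force $24 \mid \Index(C)$. For $c$ divisible by $12$ but not $24$, the parallel analysis yields only divisibility by $12$, matching the weaker claim. Combined with the multiplicativity $\Index(C'\oplus C'') = \Index(C')\Index(C'')$ noted in the introduction, this covers both assertions of the theorem.

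The main obstacle is the middle step: producing the modular identity $F_C = \Index(C)\cdot\Delta^{c/12}$ from the ``wrong'' specialization $(0,1,-1)$ of $\CWE_C$, a point not lying in the image of the $A_2$-theta map. Some genuinely new ingredient is required here --- Jacobi theta functions at special elliptic arguments, or the SCFT detour through Proposition~\ref{prop.index=index} --- and it is precisely at this juncture that the ``nontrivial facts about Theta functions of lattices'' promised in the introduction must enter. Once this identification is secured, the divisibility step should be essentially routine integrality analysis in the ring of level-$1$ integer modular forms.
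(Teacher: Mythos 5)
This statement is a \emph{conjecture} that the paper deliberately leaves open: the authors prove only the weaker Theorem~\ref{thm.index of a code} (divisibility by $24$ when $24\mid c$, by $12$ when $12\mid c$), state explicitly that they cannot handle $c\equiv 12\pmod{24}$, and support the existence claim only by classification for $c\le 24$ and random search for $c\le 60$. Your proposal does not close either gap. You explicitly settle for divisibility by $12$ when $c$ is divisible by $12$ but not $24$ (``matching the weaker claim''), which is the Theorem, not the Conjecture; and you never address the second sentence of the Conjecture, the existence for every $c=12k$ of a code of index exactly $24$, which is the part needed for the \mqlink\ and for which no proof strategy is known.

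Even judged as a proof of Theorem~\ref{thm.index of a code}, the endgame is not ``routine integrality analysis.'' Knowing that $\Index(C)\cdot\Delta^{k}$ is an integral level-one modular form forces nothing: $\Delta^k$ itself is integral with leading coefficient $1$. The actual input is Borcherds' theorem \cite[Theorem 12.1]{MR1323986} that for an even unimodular lattice of rank $24k$, the coefficient $a_0$ in $\Theta=\sum_i a_i\Delta^{k-i}c_4^{3i}$ is divisible by $24$ --- precisely the ``nontrivial fact about Theta functions'' the paper says it cannot replace by an elementary argument. To invoke it you must exhibit $\Index(C)\Delta^k$ as a \emph{difference of theta functions of even unimodular lattices}; the paper does this via the two even neighbors $\Lambda^{\pm}$ of the odd rank-$c$ lattice $\Lambda(C)$, which is exactly why $24\mid c$ is needed (the neighbors are even only when $8\mid c$, and the $\Delta^k$-expansion needs weight divisible by $12$). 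Your alternative route through the rank-$2c$ Eisenstein lift $\Lambda_C=A_2^c$-glued-by-$C$ would, if completed, be \emph{stronger} than the paper's theorem (rank $2c=24k$ for every $c=12k$), but the obstacle you flag is fatal as stated: the specialization $(0,1,-1)$ amounts to twisting the weight enumerator by the quadratic character mod $3$, so the resulting ``twisted theta series'' has level $9$ (or thereabouts) rather than level $1$ and is not visibly a difference of theta functions of even unimodular lattices, so Borcherds' theorem does not apply to it. The missing identification is the whole difficulty, not a technical detail.
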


Conjecture~\ref{code conjecture} follows from the deeper Conjecture~\ref{holomorphic conjecture}, which, together with Proposition~\ref{prop.index=index}, implies that $\Index(C)$ is divisibly by $8$ whenever $C$ is a self-dual ternary code of length $c = 12 \bmod{24}$.

\begin{example}\label{eg.codes of length 12}
Up to signed coordinate permutations, there are only three self-dual ternary codes of length $c=12$:

\begin{enumerate}
\item
First is the direct sum of three copies of the unique code of length $4$. It can be generated by the matrix:
$$ \left( \begin{array}{cccccccccccc}
1 & 1 & -1 & 0  &  0 & 0 & 0 & 0  &  0 & 0 & 0 & 0 \\
0 & 1 & 1 & -1  &  0 & 0 & 0 & 0  &  0 & 0 & 0 & 0 \\
0 & 0 & 0 & 0  &  1 & 1 & -1 & 0  &  0 & 0 & 0 & 0 \\
0 & 0 & 0 & 0  &  0 & 1 & 1 & -1  &  0 & 0 & 0 & 0 \\
0 & 0 & 0 & 0  &  0 & 0 & 0 & 0  &  1 & 1 & -1 & 0 \\
0 & 0 & 0 & 0  &  0 & 0 & 0 & 0  &  0 & 1 & 1 & -1
\end{array} \right) $$
This code has no maximal codewords, hence index $0$. 

\item
Second, there is a self-dual ternary code of length $c=12$ with eight even and eight odd maximal codewords, hence index $0$. A generating matrix is:
$$ \left( \begin{array}{cccccccccccc}
1 & 1 & 1  &  0 & 0 & 0  &  0 & 0 & 0  &  0 & 0 & 0  \\
0 & 0 & 0  &  1 & 1 & 1  &  0 & 0 & 0  &  0 & 0 & 0  \\
0 & 0 & 0  &  0 & 0 & 0  &  1 & 1 & 1  &  0 & 0 & 0  \\
0 & 0 & 0  &  0 & 0 & 0  &  0 & 0 & 0  &  1 & 1 & 1  \\
0 & 1 & -1 &  0 & 1 & -1 &  0 & -1 & 1 &  0 & 0 & 0  \\
0 & 0 & 0  &  0 & 1 & -1 &  0 & 1 & -1 &  0 & -1 & 1
\end{array} \right) $$

\item
Finally, there is the Ternary Golay code from \cite{Golay}, which has 24 even and no odd maximal codewords, hence index $24$. It can be generated by the matrix:
$$ \left( \begin{array}{cccccccccccc}
1 & 0 & 0 & 0 & 0 & 0 & 0 & 1 & 1 & 1 & 1 & 1\\
0 & 1 & 0 & 0 & 0 & 0 & -1 & 0 & 1 & -1 & -1 & 1\\
0 & 0 & 1 & 0 & 0 & 0 & -1 & 1 & 0 & 1 & -1 & -1\\
0 & 0 & 0 & 1 & 0 & 0 & -1 & -1 & 1 & 0 & 1 & -1\\
0 & 0 & 0 & 0 & 1 & 0 & -1 & -1 & -1 & 1 & 0 & 1\\
0 & 0 & 0 & 0 & 0 & 1 & -1 & 1 & -1 & -1 & 1 & 0
\end{array} \right) $$
\end{enumerate}
In particular, Conjecture~\ref{code conjecture} holds for $c=12$.
\end{example}

\begin{example}
For $c=24$, we may confirm  Conjecture~\ref{code conjecture} by appealing
to the classification of self-dual ternary codes from  \cite{MR2522420}. The first step of their classification is to convert each self-dual ternary code $C$ of length $c$ into an odd unimodular lattice $\Lambda(C)$ of rank $c$; we will review that construction in the next section. Odd unimodular lattices of rank $\leq 24$ were classified by Borcherds and listed in \cite[Chapter 17]{MR1662447}. As we will explain during the proof of Theorem~\ref{thm.index of a code}, when $c=24$ the index of $C$ is precisely the difference between the numbers of roots of the two ``even neighbors'' of $\Lambda(C)$. Inspecting the list, we find that of the $338$ inequivalent self-dual ternary codes of length~$24$, $30$ have index~$24$.
\end{example}

Conjecture~\ref{code conjecture} is further supported by the following experimental evidence. Together with Noam D.\ Elkies, we randomly generated self-dual codes and calculated their indexes.  (The runtime of this calculation grows exponentially with the length $c$.) A random sampling of hundreds of self-dual ternary codes of length $c=36$ and of dozens of self-dual ternary codes of length $c=48$ and $c=60$ always produced codes of index divisible by $24$.  Codes of index precisely $24$ appeared fairly often. A coarse estimate predicts that the expected value of $\Index(C)^2$ grows slowly with $c$, contributing to the difficulty of finding codes of index $24$ by random search, but the total number of codes grows very quickly with $c$, so the existence of a code of index $24$ remains likely.

\section{From ternary codes to SCFTs}\label{SCFT section}

Suppose $C \subset \bF_3^c$ is a ternary code of length $c$ such that $C \subset C^\perp$. One can construct an integral lattice $\Lambda(C)$ from $C$ as follows. Consider the lattice $\sqrt3\bZ$, i.e.\ the rank-1 lattice with basis vector of square-length $3$, and its dual lattice $(\sqrt3\bZ)^* = \frac1{\sqrt3}\bZ$. 
Identify $\bF_3$ with the quotient $\frac1{\sqrt3}\bZ/\sqrt3\bZ$. Then $\Lambda(C)$ is defined by the following pullback:
$$ \begin{tikzpicture}[anchor=base]
  \path (0,0) node (C) {$C$}
    (0,2) node (L) {$\Lambda(C)$}
    (3,0) node (F) {$\bF_3^c$}
    (3,2) node (Z) {$\frac1{\sqrt3}\bZ^c$}
    (.5,1.5) node {$\lrcorner$};
  \draw[right hook->] (C) -- (C -| F.west);
  \draw[->>] (L) -- (C);
  \draw[right hook->] (L) -- (L -| Z.west);
  \draw[->>] (Z) -- (F);
\end{tikzpicture}$$
This lattice is integral because $C \subset C^\perp$. It is unimodular exactly when $C$ is self-dual.
Note that one automatically has an injection $\sqrt3\bZ^c \mono \Lambda(C)$. Following \cite{MR2522420}, we will call a rank-$c$ lattice $\Lambda$ equipped with an injection $\sqrt3\bZ^c \mono \Lambda$ \define{3-framed}. Each 3-framed lattice arises from a unique ternary code $C = \Lambda / \sqrt3\bZ^c$.

There is a well-known construction that produces from each integral lattice $\Lambda$ of rank $c$ a vertex operator algebra $V_\Lambda$ of central charge $c$.  For the definition of ``vertex operator algebra'' and for details of the construction, we refer the reader to the standard textbooks \cite{MR1651389,MR2082709}. In outline, the construction of $V_\Lambda$ goes as follows. One begins with a ``free boson'' VOA $\Bos(\mathfrak{h})$ built functorially from the vector space $\mathfrak{h} = L \otimes_\bZ \bC$. The irreducible vertex modules for $\Bos(\mathfrak{h})$ are naturally indexed by the points in $\mathfrak{h}$: given $\lambda \in \mathfrak{h}$, there is a module $M_\lambda$ generated over $\Bos(\mathfrak{h})$ by an element $\Gamma_\lambda \in M_\lambda$; as a vector space, $M_\lambda \cong \Bos(\mathfrak{h}) \otimes \bC\Gamma_\lambda$, where $\bC\Gamma_\lambda$ is the one-dimensional space with basis $\Gamma_\lambda$. The element $\Gamma_\lambda$ has conformal dimension $\lambda^2/2$.
Finally, we set
$$ V_\Lambda = \bigoplus_{\lambda \in \Lambda} (-1)^{\lambda^2} M_\lambda,$$
where the notation ``$(-1)^{\lambda^2}$'' denotes that the summands corresponding to vectors of odd square-length are considered to be fermionic, and those of even square-length are bosonic. If $\Lambda$ is an integral lattice, then $V_\Lambda$ is naturally a VOA \cite[Theorem 8.10.2]{MR996026}.
It is holomorphic (i.e.\ it has only one irreducible module) exactly when $\Lambda$ is unimodular.

The above description of $V_\Lambda$ can be made quite explicit when $\Lambda = \Lambda(C)$ for a ternary code $C \subset C^\perp \subset \bF_3^c$. Consider first the case $c=1$ and $C=\{0\}$, so that $\Lambda(C) = \sqrt3\bZ$.  Then $V_{\sqrt3\bZ}$ is generated by a ``free boson at level 3'' $\alpha$ together with fermionic operators $\Gamma_{\pm\sqrt3}$. The operator product expansions are
\begin{align*}
  \alpha(z)\,\alpha(w) &\sim \frac3{(z-w)^2} \\
  \alpha(z)\,\Gamma_{\pm\sqrt3}(w) &\sim \frac{\pm 3 \Gamma_{\pm\sqrt3}(w)}{z-w} \\
  \Gamma_{\pm\sqrt3}(z)\,\Gamma_{\pm\sqrt3}(w) &\sim 0 \quad\quad \text{(same sign)}\\
  \Gamma_{+\sqrt3}(z)\,\Gamma_{-\sqrt3}(w) & \sim \frac1{(z-w)^3} + \frac{\alpha(w)}{(z-w)^2} + \frac{\frac12 {:} \alpha^2(w) {:} + \frac12\partial\alpha(w)}{z-w}
\end{align*}
The subalgebra $\Bos(\sqrt3\bZ \otimes \bC)$ is generated by $\alpha$ alone. For general $\lambda = \pm n\sqrt3 \in \sqrt3\bZ$ with $n\geq 0$, we have $\Gamma_{\pm n\sqrt3} = {{:} \Gamma_{\pm \sqrt3}^n {:}}$.

When $c$ is arbitrary and $C=\{0\}$, we set
$$ V_{\sqrt3\bZ^c} = V_{\sqrt3\bZ}^{\otimes c}.$$
The irreps of $V_{\sqrt3\bZ}^{\otimes c}$ are naturally indexed by $(\sqrt3\bZ^c)^*/\sqrt3\bZ^c = \bF_3^c$ \cite{MR1245855}, and for arbitrary $C \subset \bF_3^c$ we set
$$ V_{\Lambda(C)} = \bigoplus_{w \in C} \left(V_{\sqrt3\bZ^c}\text{-module indexed by }w\right).$$

The VOA  $V_{\sqrt3\bZ}$ is also called ``$U(1)$ at level $3$.'' It is well known to be an $\cN=2$ minimal model: the boson $\alpha$ generates the R-symmetry, and the polarized supersymmetry generators are $\Gamma_{\pm\sqrt3}$. It follows that $V_{\Lambda(C)} \supset V_{\sqrt3\bZ}^{\otimes c}$  has the structure of $\cN=2$ supersymmetric VOA. We will focus on the induced $\cN=1$ structure. The generator of $\cN=1$ supersymmetry on $V_{\sqrt3\bZ}$ is $G = \Gamma_{+\sqrt3} + \Gamma_{-\sqrt3}$. Letting $G_i = 1 \otimes \dots \otimes 1 \otimes G \otimes 1\otimes \dots \otimes 1 \in V_{\sqrt3\bZ}^{\otimes c}$, with the $G$ in the $i$th spot, the generator of $\cN=1$ supersymmetry on $V_{\Lambda(C)}$ is $\sum_{i=1}^c G_i$. Based on the lattice notion from \cite{MR2522420}, we will use the term \define{3-framed VOA} for a VOA $V$ of central charge $c\in \bZ$ equipped with an injection $V_{\sqrt3\bZ}^{\otimes c} \mono V$. A 3-framed VOA $V$ is necessarily of the form $V_{\Lambda(C)}$ for some ternary code $C \subset C^\perp$.

\begin{example}\label{eg.scfts with c=3}
  Consider the non-self-dual code $C_3$ of length $3$ spanned by the vector $(1,1,1) \in \bF_3^3$. The corresponding lattice is $\Lambda(C_3) \cong A_2 \times \bZ$, where $A_2$ denotes the root lattice of $\mathfrak{sl}(3)$. Indeed, $\Lambda(C_3) \subset \frac1{\sqrt3}\bZ^3$ has two vectors of length $1$, namely $\pm\frac1{\sqrt3}(1,1,1)$; orthogonal to these are six vectors of square-length $2$, namely the cyclic permutations of $\pm\frac1{\sqrt3}(1,1,-2)$. We therefore find an isomorphism $V_{\Lambda(C_3)} \cong V_{A_2} \times V_\bZ$. The famous ``{boson-fermion correspondence}'' identifies the lattice VOA $V_\bZ$ with the VOA $\Fer(2)$ of two  Majorana fermions, and so $V_{\Lambda(C_3)} \cong V_{A_2} \times \Fer(2)$.
  
  For any rank-$r$ integral lattice $\Lambda$, the VOA $V_\Lambda \otimes \Fer(r)$ can be equipped with an $\cN=1$ supersymmetry that simply exchanges the $r$ free fermions generating $\Fer(r)$ with the $r$ free bosons generating $\Bos(r) = \Bos(\Lambda \otimes \bC) \subset V_\Lambda$ \cite{HeluaniKac2007}. In the basis we are working in, the $\cN=1$ supersymmetry on $V_{\Lambda(C_3)}\cong V_{A_2} \times \Fer(2)$ is not this one. However, we claim that there is an automorphism of $V_{\Lambda(C_3)}$ intertwining the two $\cN=1$ structures. In general, to write a rational VOA $V$ of central charge $c\in \bZ$ as a lattice VOA, one must choose a subalgebra $\Bos(c) \subset V$, i.e.\ one must choose $c$ many commuting fields of conformal weight $1$; the lattice is then the lattice of weights of the action of $\Bos(c)$ on~$V$~\cite{MR1325775}. (The choice of $\Bos(c) \subset V$ is analogous to a choice of Cartan subalgebra of a semisimple Lie algebra.) Starting with $V_{\Lambda(C_3)} \cong V_{A_2} \otimes V_\bZ$, we choose to keep the free boson pointing in the $(1,1,1)$-direction, but to replace the other two bosons (spanning the $(A_2 \otimes \bC)$-plane inside $\Lambda(C_3) \otimes \bC$) with $X = \Gamma_{\frac1{\sqrt3}(1,1,-2)} + \Gamma_{\frac1{\sqrt3}(1,-2,1)} + \Gamma_{\frac1{\sqrt3}(-2,1,1)}$ and $Y = \Gamma_{\frac1{\sqrt3}(-1,-1,2)} + \Gamma_{\frac1{\sqrt3}(-1,2,-1)} + \Gamma_{\frac1{\sqrt3}(2,-1,-1)}$. One may check that $X$ and $Y$ commute, and that the supersymmetry $G = \Gamma_{(\sqrt3,0,0)} + \dots + \Gamma_{(0,0,-\sqrt3)}$ exchanges $X$ and $Y$ with the free fermion fields $\Gamma_{\frac1{\sqrt3}(1,1,1)}$ and $\Gamma_{-\frac1{\sqrt3}(1,1,1)}$ respectively. Thus for this new Cartan we find the $\cN=1$ structure on $V_{A_2} \otimes \Fer(2)$ of  \cite{HeluaniKac2007}.
\end{example}

\begin{example}\label{eg.scfts with c=4}
  Let $C_4\subset \bF_3^4$ denote the unique self-dual ternary code of length~$4$, which has generator matrix
  $$ \left( \begin{array}{cccc} 1 & 1 & -1 & 0 \\ 0 & 1 & 1 & -1 \end{array} \right).$$
  Then $\Lambda(C_4) \cong \bZ^4$, and so $V_{\Lambda(C_4)} \cong \Fer(8)$ by the boson-fermion correspondence. $\cN=1$ supersymmetry structures on $\Fer(n)$ were classified in \cite{MR791865}, and correspond to $n$-dimensional semisimple Lie algebras. 
  There is a unique semisimple Lie algebra of dimension eight, namely $\mathfrak{su}(3)$, and so we must have an isomorphism of $\cN=1$ VOAs $V_{\Lambda(C_4)} \cong \Fer(\mathfrak{su}(3))$.
\end{example}

\begin{example}\label{eg.scfts with c=12}
  Let us work out the VOAs corresponding to the three self-dual ternary codes of length~$12$ from Example~\ref{eg.codes of length 12}.
  \begin{enumerate}
    \item The first code from Example~\ref{eg.codes of length 12} was $C_4^{\oplus 3}$, where $C_4$ is the unique self-dual ternary code of length~$4$ from Example~\ref{eg.scfts with c=4}. The $\cN=1$ structure on $V_{\Lambda(C_4 \oplus C_4 \oplus C_4)} \cong V_{\Lambda(C_4)}^{\otimes 3} \cong \Fer(8)^{\otimes 3} \cong \Fer(24)$ is the one coming from the Lie algebra $\mathfrak{su}(3)^3$.
    \item Let $C_3$ denote the non-self-dual code of length $3$ from Example~\ref{eg.scfts with c=3}, and let $C$ denote the second self-dual ternary code $C$ in Example~\ref{eg.codes of length 12}. Then $C$ is an extension of $C_3^{\oplus 4}$, and so $V_{\Lambda(C)}$ is an extension of $V_{\Lambda(C_3)}^{\otimes 4} = V_{A_2^4} \otimes \Fer(8)$.  The lattice $A_2^4$ has only one unimodular extension, namely the $E_8$ lattice.  After changing bases as in Example~\ref{eg.scfts with c=3}, we find an isomorphism of $\cN=1$ SCFTs between $V_{\Lambda(C)}$ and $V_{E_8} \otimes \Fer(8)$ made into an $\cN=1$ SCFT as in \cite{HeluaniKac2007}.
    \item Finally, consider the Ternary Golay code, listed third in Example~\ref{eg.codes of length 12}. The lattice $\Lambda(\text{Golay})$ turns out to be isomorphic to the $D_{12}^+$ lattice, i.e.\ the $D_{12}$ root lattice together with its coset containing the highest weight of the (positive) half-spin representation of $\operatorname{Spin}(24)$. (Indeed, $\Lambda(\text{Golay})$ contains no vectors of length~$1$, and $D_{12}^+$ is the unique unimodular lattice of rank~$12$ with this property.) 
    The corresponding VOA $V_{D_{12}^+}$ is isomorphic to the 
    Duncan's ``supermoonshine'' VOA $V^{f\natural}$ from \cite{MR2352133}. One of the main results of that paper is that $V^{f\natural}$ carries a unique-up-to-isomorphism $\cN=1$ supersymmetry structure (and is the unique $c=12$ and $\cN=1$ SCFT with no fields of conformal dimension $1/2$). See also Example~\ref{supermoonshine continued}.
  \end{enumerate}
\end{example}

Suppose $V$ is a holomorphic $\cN=1$ VOA of central charge $c$, and let $V_R$ denote its Ramond sector, i.e.\ the Hilbert space assigned by $V$ to the circle with non-bounding spin structure.
Consider the Ramond--Ramond partition function $Z_{RR}(V) = \tr_{V_R} (-1)^f q^{L_0 - c/24}$, i.e.\ the partition function of $V$ evaluated on elliptic curves with spin structure induced from the Lie group framing of the elliptic curve.
If $V$ were merely a non-supersymmetric holomorphic VOA, then
 $Z_{RR}(V)$ would be merely a (level one, meromorphic) modular function in $q$. But the  supersymmetry generator determines an odd endomorphism of $V_R$ whose square is $L_0 - \frac{c}{24}$, and so all contributions to $Z_{RR}(V)$ with $L_0 - \frac{c}{24} \neq 0$ cancel. 
 Thus $Z_{RR}(V) \in \bZ$ merely counts (with signs) the \define{Ramond-sector ground states}, i.e.\ the Ramond-sector states of conformal dimension $c/24$. This count is called the \define{(supersymmetric) index} of $V$.

When $V = V_\Lambda$ for an odd self-dual lattice $\Lambda$, the Ramond sector $(V_\Lambda)_R$ can be constructed analogously to the construction of $V_\Lambda$ above. Indeed, call a vector $\chi \in \Lambda$ \define{characteristic} if $\langle \chi,\lambda\rangle = \langle \lambda,\lambda \rangle \mod 2$ for all $\lambda \in \Lambda$. It is easy to see, using unimodularity, that a characteristic vector exists. The coset $\Lambda + \frac\chi2$ of $\Lambda$ does not depend on the choice of characteristic vector, and $V_R$ is the sum of $\Bos(\mathfrak{h})$-modules indexed by vectors in $\Lambda + \frac\chi2$. Actually, ``the'' Ramond sector of a holomorphic VOA 
 is well-defined only up to an overall fermion parity. The choice of characteristic vector $\chi$ determines this parity operator: one sets
$$ (V_\Lambda)_R = \bigoplus_{\lambda \in \Lambda} (-1)^{\langle \chi,\lambda\rangle} M_{\lambda+\frac\chi2}.$$
As in the construction of $V_\Lambda$ from the beginning of this section, ``$(-1)^{\langle \chi,\lambda\rangle}$'' indicates  the fermion parity to use for the $\Bos(\mathfrak{h})$-module $M_{\lambda+\frac\chi2}$.

\begin{proposition}\label{prop.index=index}
  Suppose $C$ is a self-dual ternary code. Then $Z_{RR}(V_{\Lambda(C)}) = \Index(C)$.
\end{proposition}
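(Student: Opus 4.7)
The plan is to use supersymmetry to reduce $Z_{RR}(V_{\Lambda(C)})$ to a signed count of Ramond ground states, and then identify these with the maximal codewords of $C$. Since the zero-mode supercharge $G_0$ satisfies $G_0^2 = L_0 - c/24$ on the Ramond sector and anticommutes with $(-1)^f$, every state with $L_0 > c/24$ is paired with one of opposite parity, so $Z_{RR}(V)$ equals the trace of $(-1)^f$ on the $L_0 = c/24$ subspace. Using $(V_\Lambda)_R = \bigoplus_\lambda (-1)^{\langle\chi,\lambda\rangle} M_{\lambda+\chi/2}$ together with the fact that each $M_v$ is generated over its one-dimensional ground space $\bC\Gamma_v$ (of conformal weight $v^2/2$) by strictly $L_0$-raising bosonic oscillators, the Ramond ground states are in bijection with vectors $v \in \Lambda(C) + \chi/2$ satisfying $v^2 = c/12$, each contributing the scalar $(-1)^{\langle\chi, v-\chi/2\rangle}$.

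For the characteristic vector I would take $\chi = \sqrt 3(1,\ldots,1) \in \sqrt 3\bZ^c \subset \Lambda(C)$. To verify $\chi$ is characteristic, write $\lambda = w/\sqrt 3$ with $w \in \bZ^c$ reducing mod~$3$ to a codeword, so $\lambda^2 = |w|^2/3$ and $\langle\chi,\lambda\rangle = \sum_i w_i$. The congruence $\lambda^2 \equiv \langle\chi,\lambda\rangle \pmod 2$ is equivalent to $|w|^2 \equiv 3\sum_i w_i \pmod 6$: mod~$2$ both sides are $\sum_i w_i$ (using $w_i^2 \equiv w_i$), and mod~$3$ both sides vanish (the left by self-orthogonality of $w \bmod 3 \in C$, the right trivially).

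With this $\chi$ the Ramond vectors take the form $v = (w + \tfrac{3}{2}(1,\ldots,1))/\sqrt 3$, and $(w_i + 3/2)^2 \geq 1/4$ with equality iff $w_i \in \{-1,-2\}$. Hence $v^2 \geq c/12$ with equality iff every $w_i \in \{-1,-2\}$, which happens precisely when $w \bmod 3$ is a maximal codeword of $C$; each maximal codeword has a unique such lift. For the signs, $(-1)^{\langle\chi,\lambda\rangle} = (-1)^{\sum_i w_i}$; parametrizing such $w$ by $n_1 = \#\{i : w_i = -2\}$ (which is the number of $1$s in the associated codeword), we get $\sum_i w_i = -2 n_1 - (c - n_1) = -(n_1 + c) \equiv n_1 \pmod 2$ since $c = 12k$ is even. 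Thus each even maximal codeword contributes $+1$ and each odd one $-1$, giving $Z_{RR}(V_{\Lambda(C)}) = \Index(C)$. The only step demanding real care is verifying that $\chi = \sqrt 3(1,\ldots,1)$ is characteristic; the rest is routine bookkeeping.
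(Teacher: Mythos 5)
Your proof is correct and follows essentially the same route as the paper's: reduce $Z_{RR}$ to a signed count of Ramond ground states, take $\chi=\sqrt3(1,\dots,1)$, bound $(\lambda+\chi/2)^2$ below by $c/12$ with equality exactly at the lifts of maximal codewords, and match the parity $(-1)^{\langle\chi,\lambda\rangle}$ with the even/odd count of $1$s. Your explicit verification that $\chi$ is characteristic and your sign bookkeeping fill in details the paper leaves implicit, but the argument is the same.
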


\begin{proof}
  By the remarks above, $Z_{RR}(V_{\Lambda(C)})$ merely counts states in $(V_{\Lambda(C)})_R$ of conformal dimension~$c/24$: the contributions from all other states cancel. The states of minimal conformal dimension are the $\Gamma_{\lambda + \frac\chi2}$ where $\frac12(\lambda + \frac\chi2)^2$ is as small as possible. Since $\sqrt3\bZ^c \subset \Lambda(C) \subset \frac1{\sqrt3} \bZ^c$, as our characteristic vector we may take $\chi = \sqrt3(1,1,\dots,1)$. Then $\lambda + \frac\chi2 \in \frac1{\sqrt3} (\bZ + \frac12)^c$, and so $\frac12(\lambda + \frac\chi2)^2$ is bounded below by $\frac12 \bigl( \frac1{\sqrt3}(\frac12,\frac12,\dots,\frac12)\bigr)^2 = c/{24}$, with equality only when $\lambda + \frac\chi2 \in \frac1{\sqrt3}\{ \pm \frac12 \}^c$. This, in turn, forces $\lambda + \frac\chi2 = \frac1{2\sqrt3} \iota(w)$, where $w \in C$ is a maximal codeword and $\iota : \bF_3 = \{-1,0,1\} \hookrightarrow \bZ$ is the canonical injection. So Ramond-sector ground states of $V_{\Lambda(C)}$ are in natural bijection with maximal codewords in $C$, and their signed count is precisely the index of $C$.
\end{proof}

\begin{remark} \label{remark.alternateproof}
  One can alternately prove Proposition~\ref{prop.index=index} by formula. (The following argument was pointed out to us by Noam D.\ Elkies.) Let $\eta(q) = q^{1/24} \prod_{n=1}^\infty(1-q^n)$ denote Dedekind's eta function. For an arbitrary odd unimodular lattice $\Lambda$, it follows easily from the construction of the Ramond sector of $V_\Lambda$ indicated above that
  $$ Z_{RR}(V_\Lambda) = \eta(q)^{-c} \sum_{\lambda \in \Lambda} (-1)^{\langle\lambda,\chi\rangle} q^{\frac12(\lambda + \frac\chi2)^2}. $$
  Set $\Lambda = \Lambda(C)$, choose the characteristic vector $\chi = \sqrt3(1,1,\dots,1)$, and let $\iota : \bF_3 \cong \{-1,0,1\} \hookrightarrow \bZ$ denote the natural injection. Then 
\begin{align*}
  \Lambda(C)  &  = \left\{ \sqrt 3 v + \frac1{\sqrt3}\iota(w) \st v \in \bZ^c \text{ and } w \in C\right\}.
\end{align*}
Expand each vector $v \in \bZ^c$ and each codeword $w \in C \subset \bF_3^c$ in coordinates: $v = (v_1,\dots,v_c)$, $w = (w_1,\dots,w_c)$. Note that $\langle \sqrt 3 v + \frac1{\sqrt3}\iota(w), \chi\rangle = \sum_{i=1}^c \left( v_i + \iota(w_i) \right) \bmod 2$.
Then
\begin{align*}
\hspace*{-1in}
  \sum_{\lambda \in \Lambda(C)} (-1)^{\langle\lambda,\chi\rangle} q^{\frac12(\lambda + \frac\chi2)^2} & = \sum_{w \in C} \sum_{v \in \bZ^c} (-1)^{\sum v_i + \sum\iota(w_i)} q^{\frac32\sum \left( v_i + \frac12 + \frac13\iota(w_i) \right)^2} \\
  & = \sum_{w \in C} \prod_{i=1}^c \sum_{n \in \bZ} (-1)^{n+\iota(w_i)} q^{\frac32 \left(n + \frac12 + \frac13\iota(w_i)\right)^2} \\
  & = \CWE_C\left( \sum_{n \in \bZ} (-1)^{n} q^{\frac32 \left(n + \frac12\right)^2}, \sum_{n \in \bZ} (-1)^{n+1} q^{\frac32 \left(n + \frac56\right)^2}, \sum_{n \in \bZ} (-1)^{n-1} q^{\frac32 \left(n + \frac16\right)^2}\right).
\hspace*{-1in}
\end{align*}
Here $\CWE_C$ is the complete weight enumerator mentioned in Remark~\ref{rem.CWE}. Similar formulas present the ordinary Theta function of $\Lambda(C)$ in terms of $\CWE_C$; see for example the last section of \cite{MR1794130}.

But
$$ \sum_{n \in \bZ} (-1)^{n} q^{\frac32 \left(n + \frac12\right)^2} = 0 $$
because the $n$th summand cancels with the $(1-n)$th summand. Furthermore, the Euler identity implies
$$ \sum_{n \in \bZ} (-1)^{n+1} q^{\frac32 \left(n + \frac56\right)^2} = \eta(q), \quad \sum_{n \in \bZ} (-1)^{n-1} q^{\frac32 \left(n + \frac16\right)^2} = -\eta(q). $$
All together, we find
$$ Z_{RR}(V_{\Lambda(C)}) = \eta(q)^{-c} \CWE_C(0,\eta(q),-\eta(q)) = \CWE_C(0,1,-1) = \Index(C)$$
since $\CWE_C$ is homogeneous of degree $c$.
\end{remark}

We may now prove Theorem~\ref{thm.index of a code}, which asserts that if $C$ is a self-dual ternary code of length divisible by $24$, then $\Index(C)$ is divisible by $24$.

\begin{proof}[Proof of Theorem~\ref{thm.index of a code}] 
Let $\Lambda$ be an odd unimodular lattice of rank $c$ and,  as in the proof of Proposition~\ref{prop.index=index}, choose a characteristic vector $\chi \in \Lambda$. Decompose $\Lambda = \Lambda_\ev \sqcup \Lambda_\odd$, where $\Lambda_\ev$, resp.\ $\Lambda_\odd$, is the set of vectors in $\Lambda$ of even, resp.\ odd, square-length. Consider the sets
\begin{align*}
  \Lambda^+ & = \Lambda_\ev \sqcup \left(\Lambda_\ev + \frac\chi2\right) \\
  \Lambda^- & = \Lambda_\ev \sqcup \left(\Lambda_\odd + \frac\chi2\right)
\end{align*}
When $c$ is divisible by $4$, $\Lambda_\pm$ are unimodular lattices, called the \define{neighbors} of $\Lambda$. They are even lattices when $c$ is divisible by $8$. (A different choice of characteristic vector might exchange $\Lambda_+ \leftrightarrow \Lambda_-$.) Recall that the \define{Theta function} of a lattice $\Lambda$ is
$$ \Theta_L(q) = \sum_{\lambda \in \Lambda} q^{\lambda^2/2}.$$
Then, when $c$ is divisible by $4$, we find
$$ Z_{RR}(V_\Lambda) = \eta^{-c} (\Theta_{\Lambda^+} - \Theta_{\Lambda^-}).$$

Suppose now that $C$ is a self-dual ternary code of length $c=24k$, and take $\Lambda = \Lambda(C)$. Since $\Lambda^\pm$ is even unimodular of rank $24k$,
$\Theta_{\Lambda^\pm}$ is an integral modular form of weight $12k$, and so has an expansion of the form
$$
  \Theta_{\Lambda^\pm}  = a^\pm_0 \Delta^k + a^\pm_1 \Delta^{k-1} c_4^3 + \dots + a^\pm_k c_4^{3k} 
$$
where $c_4$ denotes the weight-4 Eisenstein series and $\Delta = \eta^{24}$ is the discriminant. Since $\Index(C) = \Delta^{-k} ( \Theta_{\Lambda^+} - \Theta_{\Lambda^-})$ is a supersymmetric index, it is an integer, and so $a^+_i = a^-_i$ for $i>0$ and the index is $\Index(C) = a^+_0 - a^-_0$. But by \cite[Theorem 12.1]{MR1323986} (see also \cite[Theorem 5.10]{MR1989190}), $a^\pm_0$ is divisible by $24$.
\end{proof}

\section{Automorphisms of 3-framed SCFTs} \label{symmetry section}

The standard definition of \define{automorphism} of a self-dual ternary code $C \subset \bF_3^c$ is the following. Consider the group $2^c{:}S_c$. (The notation follows that of the ATLAS \cite{ATLAS}. In particular, the colon denotes a semidirect product, $p^n$ denotes an elementary abelian group of that order, and $S_n$ denotes the symmetric group.) It acts by signed coordinate permutations on $\bF_3^c$, and
$$ \Aut(C) = \{ g \in 2^c{:}S_c \st g(C) = C \text{ as a set}\}.$$
The signed coordinate permutation action of $2^c{:}S_c$ on $\bF_3$ lifts to actions on $\sqrt3\bZ^c$ and on $\frac1{\sqrt3}\bZ^c$, and so $\Aut(C)$ acts naturally on the lattice $\Lambda(C)$ built from $C$.

In general, the automorphism group $\Aut(\Lambda)$ of a lattice $\Lambda$ does not act on the corresponding vertex algebra $V_\Lambda$.
Indeed, the modules $M_\lambda$, or equivalently their generators $\Gamma_\lambda$, are defined only up to phase, and so the action of $\Aut(\Lambda)$ suffers from a phase ambiguity. 
 Let $\widehat{\Lambda} = \hom(\Lambda,\rU(1))$ denote the Pontryagin dual torus to $\Lambda$. Then $\Aut(V_\Lambda)$ contains a subgroup of shape $\widehat{\Lambda} .\Aut(\Lambda)$, where the dot denotes an extension which might or might not split. This subgroup, in turn, contains a subgroup of shape $\widehat{\Lambda} [2].\Aut(\Lambda)$, where $\widehat{\Lambda} [2] = \hom(\Lambda,2)$ is the 2-torsion subgroup of $\widehat{\Lambda} $; if $\Lambda$ has rank $c$, then $\widehat{\Lambda} [2] \cong 2^c$.
 (The subgroup $\widehat{\Lambda} [2].\Aut(\Lambda) \subset \widehat{\Lambda} .\Aut(\Lambda)$ is not canonical, but it is canonical up to conjugation by an element of $\widehat{\Lambda}$. Said another way, the phase ambiguity in defining $\Gamma_\lambda$ can be resolved to a sign ambiguity. That sign ambiguity cannot
 be resolved in general.)
 The subgroups $\widehat{\Lambda} [2].\Aut(\Lambda) \subset \widehat{\Lambda} .\Aut(\Lambda) \subset \Aut(V_\Lambda)$ were first studied in \cite{MR820716}, and the complete calculation of $\Aut(V_\Lambda)$ is due to \cite{MR1745258}. These results are nicely surveyed in \cite[Section 5.3]{MollerThesis}.
 
  In spite of the fact that in general the extension $\widehat{\Lambda} [2].\Aut(\Lambda) \subset \Aut(V_\Lambda)$ might not split, we claim:

\begin{theorem}\label{thm.symmetries lift}
  Let $C \subset \bF_3^c$ be a self-dual ternary code of length $c$. Then $\Aut(C) \subset \Aut(\Lambda(C))$ acts on $V_{\Lambda(C)}$ --- the extension $\widehat{\Lambda(C)} [2].\Aut(C)$ splits. The group $\Aut_{\cN=1}(V_{\Lambda(C)})$ of automorphisms of $V_{\Lambda(C)}$ preserving the $\cN=1$ supersymmetry contains a subgroup of shape ${C^*}{:}\Aut(C)$, where $C^* = \bF_3^c / C \cong 3^{c/2}$.
  The group $\Aut_{\cN=2}(V_{\Lambda(C)})$ of automorphisms preserving the $\cN=2$ structure contains a subgroup of shape $C^* {:} (\Aut(C) \cap S_c)$.
\end{theorem}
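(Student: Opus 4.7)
The plan is to construct the two pieces of the semidirect product explicitly and verify that together they preserve the relevant supercurrent(s).

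\textbf{The torus piece.} The lattice-VOA torus $\hom(\Lambda(C),\rU(1))$ acts on $V_{\Lambda(C)}$ by $\phi\cdot\Gamma_\lambda = \phi(\lambda)\Gamma_\lambda$. I would identify its subtorus preserving the $\cN=1$ supercurrent $G = \sum_i(\Gamma_{+\sqrt 3 e_i}+\Gamma_{-\sqrt 3 e_i})$: since $\Gamma_{+\sqrt 3 e_i}$ and $\Gamma_{-\sqrt 3 e_i}$ transform with inverse weights, preserving their sum forces $\phi(\sqrt 3 e_i) = 1$ for every $i$, so $\phi$ descends to a character of $\Lambda(C)/\sqrt 3\bZ^c \cong C$. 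As $C$ has exponent $3$, this gives $\hom(C,\mu_3)$, which is identified with $C^* = \bF_3^c/C$ via the nondegenerate pairing on $\bF_3^c$ combined with $C = C^\perp$. Exactly the same conditions cut out the subtorus preserving the polarized supercurrents $G^\pm = \sum_i \Gamma_{\pm\sqrt 3 e_i}$ individually, so the $\cN=2$-preserving torus is again $C^*$.

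\textbf{The automorphism piece.} The group $2^c{:}S_c$ acts naturally on $V_{\sqrt 3\bZ}^{\otimes c}$: sign-flips act factorwise by $\alpha\mapsto -\alpha$, $\Gamma_{\pm\sqrt 3}\mapsto\Gamma_{\mp\sqrt 3}$ (lifting the lattice involution $-1$ of $\sqrt 3\bZ$ to the rank-one VOA), and $S_c$ permutes the tensor factors. I would extend the action of the subgroup $\Aut(C)$ to $V_{\Lambda(C)}$ using the canonical lift $w\mapsto \tilde w = \frac1{\sqrt 3}\iota(w)\in\Lambda(C)$ via $\iota:\bF_3\hookrightarrow\bZ$ with image $\{-1,0,1\}$, declaring $g\cdot\Gamma_{\tilde w} = \Gamma_{g\tilde w}$ for $g\in\Aut(C)$, which sends the summand indexed by $w$ to the one indexed by $gw$. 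The essential technical content of the theorem --- that the extension $\widehat{\Lambda(C)}[2].\Aut(C)$ splits --- reduces to showing that the $\pm 1$-valued OPE cocycle $\epsilon$ on $\Lambda(C)$ can be chosen $\Aut(C)$-equivariant. I would take $\epsilon$ as a tensor product $\epsilon = \prod_i \epsilon_0$ of a single rank-one cocycle on $\frac1{\sqrt 3}\bZ$: permutation equivariance is then immediate, and sign-flip equivariance reduces to arranging $\epsilon_0(\lambda,\mu) = \epsilon_0(-\lambda,-\mu)$, which is always possible in rank one since the cocycle is unique up to coboundary.

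\textbf{Assembly and supersymmetry.} Because $\Aut(C)\subset 2^c{:}S_c$ acts on the ambient $\bF_3^c$, it normalizes $C^* = \bF_3^c/C$ inside $\Aut(V_{\Lambda(C)})$, yielding the semidirect product $C^*{:}\Aut(C)$. Each $g\in\Aut(C)$ preserves $G$ because permutations reorder its summands while sign-flips exchange $\Gamma_{+\sqrt 3 e_i}\leftrightarrow \Gamma_{-\sqrt 3 e_i}$ within a single summand; hence $C^*{:}\Aut(C)\subset \Aut_{\cN=1}(V_{\Lambda(C)})$. For the $\cN=2$ statement, sign-flips instead swap $G^+\leftrightarrow G^-$, so only the unsigned permutations $\Aut(C)\cap S_c$ preserve the $\cN=2$ structure, giving $C^*{:}(\Aut(C)\cap S_c)$. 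The main obstacle will be the cocycle equivariance in the second paragraph: for a general lattice VOA the extension $\widehat{\Lambda}[2].\Aut(\Lambda)$ need not split, so the argument must genuinely use that $\Lambda(C)\subset \frac1{\sqrt 3}\bZ^c$ sits inside a rank-one tensor product, reducing the splitting obstruction to a one-dimensional calculation.
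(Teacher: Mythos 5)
Your construction of the two pieces matches the paper's on the torus side (the supersymmetry-preserving subtorus is the kernel of $\widehat{\Lambda(C)}\to\widehat{\sqrt3\bZ^c}$, Pontryagin dual to $\Lambda(C)/\sqrt3\bZ^c\cong C$, hence $\cong C^*$) and on the explicit $2^c{:}S_c$-action on $V_{\sqrt3\bZ}^{\otimes c}$ via $\Gamma_{+\sqrt3}\leftrightarrow\Gamma_{-\sqrt3}$ and permutation of tensor factors. Where you genuinely diverge is in how the splitting of $\widehat{\Lambda(C)}[2].\Aut(C)$ is established. The paper never chooses a cocycle on anything larger than $\sqrt3\bZ^c$: it observes that the restriction map $\widehat{\Lambda(C)}[2]\to\widehat{\sqrt3\bZ^c}[2]$ is an isomorphism because the index $[\Lambda(C):\sqrt3\bZ^c]=3^{c/2}$ is odd, so the extension over $\Aut(C)$ is pulled back from the (manifestly split) extension $\widehat{\sqrt3\bZ^c}[2].(2^c{:}S_c)$, and the splitting transfers for free. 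Your route instead asks for an $\Aut(C)$-equivariant choice of structure constants on all of $\frac1{\sqrt3}\bZ^c$ as a product of rank-one data. This can be made to work, but be careful with the phrase ``$\pm1$-valued OPE cocycle on $\frac1{\sqrt3}\bZ$'': that lattice is not integral ($\langle\lambda,\mu\rangle\in\frac13\bZ$ coordinatewise), so the rank-one object you need is not a group $2$-cocycle with commutator $(-1)^{\langle\lambda,\mu\rangle}$ but the structure constants of the abelian intertwining algebra $\bigoplus_{\mu\in\frac1{\sqrt3}\bZ}M_\mu$ over $V_{\sqrt3\bZ}$, valued in higher roots of unity and satisfying a condition that also involves an associator; the justification ``unique up to coboundary in rank one'' does not literally apply, and you must check that charge conjugation can be realized equivariantly on that richer structure (it can, but it is a computation, not an immediate cohomological vanishing). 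Only the product restricted to the integral sublattice $\Lambda(C)$ is an honest $\pm1$-cocycle. The payoff of your approach is an explicit formula for the lifted action; the payoff of the paper's is that all of this bookkeeping is replaced by the single arithmetic fact that $3^{c/2}$ is odd.
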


\begin{proof}
  We first observe that the signed permutation action of $2^c{:}S_c$ on $\sqrt3\bZ^c$ lifts to an action on~$V_{\sqrt3\bZ^c}$. Indeed, $V_{\sqrt3\bZ^c} = V_{\sqrt3\bZ}^{\otimes c}$ carries a permutation action by $S_c$, and the reflections $2^c$ also act since the automorphism $n \sqrt3 \leftrightarrow -n\sqrt 3$ of $\sqrt3\bZ$ lifts to an order-2 automorphism $\Gamma_{+\sqrt3} \leftrightarrow \Gamma_{-\sqrt3}$ of~$V_{\sqrt3\bZ}$. Note further that this automorphism preserves the $\cN=1$ supersymmetry generator $G = \Gamma_{+\sqrt3} + \Gamma_{-\sqrt3}$ on $V_{\sqrt3\bZ}$, and so $2^c{:}S_c \subset \Aut_{\cN=1}(V_{\sqrt3\bZ^c})$. The reflections $\Gamma_{+\sqrt3} \leftrightarrow \Gamma_{-\sqrt3}$ do not preserve the $\cN=2$ structure, but the permutations do: $S_c \subset \Aut_{\cN=2}(V_{\sqrt3\bZ^c})$.
  
  The action of the group $\widehat{\Lambda(C)}[2].\Aut(C)$ on $V_{\Lambda(C)}$ preserves (as a set) the subalgebra $V_{\sqrt3\bZ^c}$, and so we have a map
  $$ \widehat{\Lambda(C)}[2].\Aut(C) \to \widehat{\sqrt3\bZ^c}[2].(2^c{:}S_c)$$
  extending the map $\widehat{\Lambda(C)}[2] \to \widehat{\sqrt3\bZ^c}[2]$ (dual to the inclusion $\sqrt3\bZ^c \mono \Lambda(C)$) and
  covering the map $\Aut(C) \mono 2^c{:}S_c$.
  But the remarks in the previous paragraph imply that the latter extension $\widehat{\sqrt3\bZ^c}[2].(2^c{:}S_c)$ splits. Since $\sqrt3\bZ^c$ has odd index in $\Lambda(C)$, the map $\widehat{\Lambda(C)}[2] \to \widehat{\sqrt3\bZ^c}[2]$ is an isomorphism. Thus the extension $\widehat{\Lambda(C)}[2].\Aut(C)$ splits. The subgroup $\Aut(C)$ manifestly preserves the $\cN=1$ structure, and $\Aut(C) \cap S_c$ preserves the $\cN=2$ structure.
  
  Not all of the torus $\widehat{\Lambda(C)} \subset \Aut(V_{\Lambda(C)})$ preserves the supersymmetry, but  the subgroup that acts trivially on the subalgebra $V_{\sqrt3\bZ^c} \subset V_{\Lambda(C)}$ does.   This subgroup is the kernel of the map $\widehat{\Lambda(C)} \to \widehat{\sqrt3\bZ^c}$, and so is Pontryagin dual to the cokernel of the inclusion $\sqrt3\bZ^c \mono \Lambda(C)$. But there is a canonical isomorphism $\Lambda(C) / \sqrt3\bZ^c \cong C$, and its Pontryagin dual is canonically isomorphic to $C^*$.
\end{proof}

\begin{example} \label{supermoonshine continued}
Let $c=12$ and consider the  Ternary Golay code mentioned in Example~\ref{eg.scfts with c=12}, so that $V_{\Lambda(\text{Golay})} \cong V^{f\natural}$ is the ``supermoonshine'' module of \cite{MR2352133}. According to \cite{MR2352133}, $\Aut_{\cN=1}(V^{f\natural}) \cong \mathrm{Co}_1$, Conway's largest simple sporadic group. Since $\Aut(\text{Golay}) \cong 2M_{12}$, our construction makes manifest the maximal subgroup $3^6{:}2M_{12} \subset \mathrm{Co}_1$, which contains the 3-Sylow subgroup. For comparison, the construction from \cite{MR2352133} makes manifest the 2-Sylow-containing maximal subgroup of shape $2^{11}{:}M_{24}$. (The groups $M_n$ are those of Mathieu, and the ``$2$'' in $2M_{12}$ denotes the Schur cover.)

The Ternary Golay code is unique up to signed permutations, but it is not unique if one uses only unsigned coordinate permutations, and different choices produce different $\cN=2$ structures. By adjusting the signs of the coordinates, one may find a copy of the Ternary Golay code which contains the all-$1$s word $(1,1,\dots,1)$; the version given in Example~\ref{eg.codes of length 12} has this property. For this choice of Ternary Golay code, $\Aut(\text{Golay}) \cap S_c \cong M_{11}$, and so for this $\cN=2$ structure, $\Aut_{\cN=2}(V^{f\natural}) \supset 3^6{:}M_{11}$.  According to \cite{MR3373711}, a choice of $\cN=2$ structure on $V^{f\natural}$ corresponds to a choice of oriented 2-plane inside the real span of the Leech lattice, and its group of automorphisms is the subgroup of $\mathrm{Co}_1$ whose lift to $\mathrm{Co}_0=2\mathrm{Co}_1$ (the Schur cover of $\mathrm{Co}_1$) preserves this 2-plane.  There is only one 2-plane, up to isomorphism, whose automorphism group contains a group of shape $3^6{:}M_{11}$. In the notation of \cite[Chapter 10]{MR1662447}, 
it is the 2-plane spanned by a simplex of ``type 333,'' and its full automorphism group is $*333 \cong 3^6{:}M_{11}$.
\end{example}

\section{Orbifolds of 3-framed SCFTs} \label{orbifold section}

Let $V$ be a holomorphic, possibly fermionic, VOA and $G \subset \Aut(V)$ a finite group of automorphisms of $V$.  A typical question in conformal field theory is to ``gauge'' the action of~$G$ on~$V$, to produce a new holomorphic VOA $V\sslash G$ --- in the language of VOAs, the result of such a gauging procedure is the called ``twisted orbifold'' of $V$ by $G$.  One expects on physical grounds that there might be choices involved when gauging a symmetry group, and that the problem might be obstructed. Specifically, one expects to encounter an \define{'t Hooft anomaly} living in some cohomology group of $G$ such that trivializations of the anomaly correspond to choices of $V\sslash G$. 
The bosonic case is well studied in the VOA literature: the 't Hooft anomaly is an ordinary cohomology class $\alpha \in \H^3(G;\bC^\times)$~\cite{MR1003430,MR1923177,MR2730815}, and is known to exist when $G$ is solvable~\cite{CarnahanMiyamoto} and expected to exist in general.
We will need the fermionic generalization, which has been only partially explored in the VOA literature. We first need some terminology:

\begin{definition}
  Let $\cat{SVec}_\bC$ denote the symmetric monoidal category of complex super vector spaces, and $\cat{SVec}_\bC^\times$ its symmetric monoidal subcategory of $\otimes$-invertible objects and isomorphisms. The classifying space $|\cat{SVec}_\bC^\times|$ of $\cat{SVec}_\bC^\times$ has two nonvanishing homotopy groups: $\pi_0 = \bZ_2$ and $\pi_1 = \bC^\times$. Since $\cat{SVec}_\bC^\times$ is symmetric monoidal, $|\cat{SVec}_\bC^\times|$ is an infinite loop space,
  and so defines a  generalized cohomology theory that we will call \define{restricted supercohomology} and denote by $\rSH^\bullet(-)$. We choose degree conventions so that $\rSH^0(\mathrm{pt}) = \bC^\times$ and $\rSH^{-1}(\mathrm{pt}) = \bZ_2$.
  The cohomology theory $\rSH^\bullet$ is the ``supercohomology'' of \cite{GuWen2014} and the ``$E$-theory'' of \cite{MR2434259}.
   The connecting map (k-invariant) is the second Steenrod square $\Sq^2$, encoding the Koszul sign rules in $\cat{SVec}_\bC$.
  
  Let $\cat{SAlg}_\bC$ denote the symmetric monoidal bicategory of complex super algebras and super bimodules (so that equivalences in $\cat{SAlg}_\bC$ are Morita equivalences), and $\cat{SAlg}_\bC^\times$ its symmetric monoidal subbicategory of Morita-invertible algebras, invertible bimodules, and isomorphisms. Its homotopy groups are $\pi_0 = \bZ_2$, $\pi_1 = \bZ_2$, and $\pi_2 = \bC^\times$. Again its classifying space is an infinite loop space and so defines a generalized cohomology theory, called \define{extended supercohomology} $\SH^\bullet(-)$, indexed so that $\SH^0(\mathrm{pt}) = \bC^\times$. 
  It has been studied under various names. In the condensed matter literature it was introduced in \cite{WangGu2017} and studied for example in \cite[\S5.4]{GJF2017}.
\end{definition}

By construction, for any space $X$ there are long exact sequences 
\begin{gather*}
 \dots \to \H^\bullet(X;\bC^\times) \to \rSH^\bullet(X) \to \H^{\bullet-1}(X;\bZ_2) \to \H^{\bullet+1}(X) \to \dots,\\
 \dots \to \rSH^\bullet(X) \to \SH^\bullet(X) \to \H^{\bullet-2}(X;\bZ_2) \to \rSH^{\bullet+1}(X) \to \dots
\end{gather*}
corresponding to the inclusions $\rB \bC^\times \mono \cat{SVec}_\bC^\times$ and $\rB \cat{SVec}_\bC^\times \mono \cat{SAlg}_\bC^\times$, where the letter $\rB$ denotes turning a commutative algebra (resp.\ symmetric monoidal category) into a symmetric monoidal category (resp.\ bicategory) with one object.
Given a class in $\SH^\bullet(X)$, its image in $\H^{\bullet-2}(X;\bZ_2)$ is called its \define{Majorana layer}. Given a class in $\rSH^\bullet(X)$, its image in $\H^{\bullet-1}(X;\bZ_2)$ is called its \define{Gu--Wen layer}. The connecting maps in the above long exact sequences are stable cohomology operations, and so vanish in very low degrees. In particular, for $\bullet=3$, we have inclusions
$$ \H^3(X;\bC^\times) \mono \rSH^3(X) \mono \SH^3(X). $$

When $G$ is a finite group and $\rH^\bullet(-)$ a cohomology theory, we will write $\rH^\bullet_\gp(G)$ in place of $\rH^\bullet(\rB G)$. We will denote reduced cohomology by $\tilde{\rH}$, $\tilde{\SH}$, etc. Recall that a VOA is \define{regular} if its category of admissible modules is finite semisimple.

\begin{theorem}\label{thm.anomaly}
  Let $V$ be a holomorphic, possibly fermionic, VOA and $G$ a finite group acting faithfully on (the NS sector of) $V$. Assume that the $G$-fixed sub-VOA $V^G$ is regular. Then there is a well-defined 't~Hooft anomaly $\alpha \in \tilde{\SH}^3_\gp(G)$. Each trivialization of $\alpha$ determines an orbifold VOA $V \sslash G$. In particular, there are $\tilde{\SH}^2_\gp(G)$-many orbifolds.
  
  When $V$ is bosonic, the 't Hooft anomaly $\alpha$ lives in $\tilde{\H}^3_\gp(G;\bC^\times) \subset \tilde{\SH}^3_\gp(G)$. Trivializations in $\tilde{\H}^2_\gp(G;\bC^\times) \subset \tilde{\SH}^2_\gp(G)$ give bosonic orbifolds.
\end{theorem}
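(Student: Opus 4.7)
The plan is to extract from $V$ a $G$-graded braided (super-)fusion category extending the module category of $V$, and then apply an abstract classification of such extensions to identify the anomaly.

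First I would decompose the category of admissible $V^G$-modules as a $G$-crossed braided (super-)fusion category $\cD = \bigoplus_{g \in G} \cD_g$, where $\cD_g$ is the category of $g$-twisted $V$-modules viewed as $V^G$-modules. That such a decomposition exists is the VOA incarnation of work of Kirillov--Müger; in the bosonic case it follows from the results of Carnahan--Miyamoto under the regularity hypothesis on $V^G$, and in the fermionic case one can either invoke an analogous statement or reduce to the bosonic setting by tensoring with $\Fer(1)$ and stripping off the extra factor at the end. Because $V$ is holomorphic, the neutral component $\cD_e$ is just the module category of $V$, which is $\cat{Vec}_\bC$ in the bosonic case and $\cat{SVec}_\bC$ in the fermionic case. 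Holomorphy also forces each $\cD_g$ to be an invertible $\cD_e$-module category, so $\cD$ is a faithful $G$-graded extension of $\cD_e$.

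Second, I would classify such extensions. Extensions of $\cat{Vec}_\bC$ to a $G$-graded fusion category are classified by $\tilde{\H}^3_\gp(G;\bC^\times)$, and the Etingof--Nikshych--Ostrik obstructions automatically vanish because the extension is given to exist. For $\cat{SVec}_\bC$-extensions, the analogous statement replaces the target $|\rB\cat{Vec}_\bC^\times|$ by $|\cat{SAlg}_\bC^\times|$, and the classifying group becomes $\tilde{\SH}^3_\gp(G)$ essentially by definition of $\SH$. The class of $\cD$ in this group is the 't~Hooft anomaly $\alpha$.

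Third, a trivialization of $\alpha$ is a tensor equivalence between $\cD$ and the trivial $G$-graded extension of $\cD_e$. Such an equivalence is the same data as a Lagrangian (super-)algebra object $A \in \cD$ meeting each $\cD_g$ in an invertible object; the corresponding orbifold VOA $V \sslash G$ is the simple-current extension of $V^G$ by $A$, or equivalently the internal endomorphism algebra of $A$ inside the vertex tensor category of $V^G$. The set of trivializations is a torsor over automorphisms of the trivial extension, which is $\tilde{\SH}^2_\gp(G)$. When $V$ is bosonic the construction stays in the bosonic world throughout, so $\alpha$ lies in the image of $\tilde{\H}^3_\gp(G;\bC^\times) \mono \tilde{\SH}^3_\gp(G)$ and bosonic orbifolds form a torsor over $\tilde{\H}^2_\gp(G;\bC^\times)$.

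The main obstacle is the first step in the fermionic holomorphic case: proving that the $g$-twisted modules of a fermionic holomorphic VOA assemble into a braided $G$-crossed super-fusion category. This requires extending the bosonic twisted-module machinery to keep careful track of NS and R sectors and of the interaction between $G$ and fermion parity; the delicacy here is exactly what allows the anomaly to occupy the full group $\tilde{\SH}^3_\gp(G)$ rather than just $\tilde{\H}^3_\gp(G;\bC^\times)$.
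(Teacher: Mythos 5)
Your proposal follows essentially the same route as the paper: both build the $G$-graded super-fusion category $\bigoplus_{g}\cat{Rep}(V,g)$ of twisted modules (using regularity of $V^G$ for the fusion product, following Kirillov/M\"uger), observe that holomorphy forces the neutral component to be $\cat{SVec}_\bC$ and each graded piece to be invertible, read off the anomaly as the resulting class in $\tilde{\SH}^3_\gp(G)$ (equivalently, the monoidal map $G\to\cat{SAlg}_\bC^\times$), and recover orbifolds from trivializations via Lagrangian algebras in $\cat{Rep}(V^G)$ identified with the super Drinfeld center. The only cosmetic difference is that you phrase the classification in Etingof--Nikshych--Ostrik extension-theoretic language where the paper directly exhibits the monoidal functor and verifies uniqueness of the simple object in each twisted sector; you also correctly flag the fermionic twisted-module machinery as the main point requiring care, which the paper likewise treats by asserting the super analogue of the bosonic argument.
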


Note that if $G$ is solvable, then regularity of $V^G$ follows from the main result of~\cite{CarnahanMiyamoto}. (That paper discusses the bosonic case, but the fermionic case follows immediately by passing to the bosonic subalgebra of $V$.)

\begin{proof}
The bosonic statement is the main result of \cite{MR1923177}. The proof of the fermionic version is essentially the same,  replacing everywhere the word ``category'' with ``supercategory.'' (A $\bC$-linear \define{supercategory} is a category enriched and tensored over $\cat{SVec}_\bC$.) For each $g \in G$, consider the supercategory $\cat{Rep}(V,g)$ of $g$-twisted $V$-modules. It is nonempty, and regularity of $V^G$ provides a fusion product $\otimes: \cat{Rep}(V,g) \times \cat{Rep}(V,g') \to \cat{Rep}(V,gg')$. Thus the direct sum $\cC = \bigoplus_{g\in G} \cat{Rep}(V,g)$ is a $G$-graded ``superfusion category.'' Its neutral component $\cat{Rep}(V,1)$ is a copy of $\cat{SVec}_\bC$ since $V$ is holomorphic. The tensor product of nonzero objects in a (super) fusion category never vanishes. (Indeed, let $X,Y \in \cC$ be nonzero. Then $\operatorname{id}_Y \neq 0 \in \hom(Y,Y)$, which by adjunction says that the coevaluation map $\operatorname{coev}_Y : \mathbf{1}_\cC \to Y \otimes Y^*$ is nonzero, hence an injection since the unit object $\mathbf{1}_\cC$ is simple. But then $\operatorname{id}_X \otimes \operatorname{coev}_Y : X \to X \otimes Y \otimes Y^*$ is an injection, and so $X \otimes Y$ cannot vanish.) Fix $g$, and suppose $X,Y \in \cat{Rep}(V,g)$. Then $X \otimes Y^* \in \cat{Rep}(V,1) = \cat{SVec}_\bC$ is nonzero, and so $\hom(Y,X) = \hom(\mathbf{1}_\cC,X \otimes Y^*)$ is a nonzero super vector space. It follows that for each $g$, $\cat{Rep}(V,g)$ has only one (up to possibly odd isomorphism) simple object,
  with endomorphism algebra either $\bC$ or $\operatorname{Cliff}(1,\bC)$.
  
  There is a natural ``Deligne'' tensor product $\boxtimes$ of supercategories. Consider the assignment $g \mapsto \cat{Rep}(V,g)$. The previous paragraph shows that $\cat{Rep}(V,g)$ is $\boxtimes$-invertible, and the fusion $\otimes: \cat{Rep}(V,g) \times \cat{Rep}(V,g') \to \cat{Rep}(V,gg')$ turns the assignment $g \mapsto \cat{Rep}(V,g)$ into a monoidal functor
  $$ G \to \{\text{$\boxtimes$-invertible supercategories}\},$$
  where $G$ is considered a monoidal bicategory with only identity 1- and 2-morphisms. The functor that to each superalgebra assigns its supercategory of modules provides an equivalence $$\cat{SAlg}_\bC^\times \simeq \{\text{$\boxtimes$-invertible supercategories}\}$$ of (symmetric) monoidal bicategories. All together, we have constructed a monoidal map $\alpha : G \to \cat{SAlg}_\bC^\times$, 
  which is precisely the data of a class $\alpha \in \SH^3_\gp(G)$.
  
  Following \cite{MR1923177}, we can identify $\cat{Rep}(V^G)$ with the Drinfeld center of $\cC = \bigoplus_{g\in G} \cat{Rep}(V,g)$, understood of course in the super sense. Trivializations of $\alpha$ determine algebra objects in $\cat{Rep}(V^G)$ with certain nice properties (in particular, they are ``Lagrangian''), and hence  extensions of $V^G$ to holomorphic VOAs. These extensions are the orbifolds  $V\sslash G$.
\end{proof}

\begin{example}\label{eg.Klein4}
  Suppose $V$ is a holomorphic VOA with a nonanomalous action by the Klein-4 group~$G=\bZ_2^2$. We will work out all possible orbifolds of $V$ by subgroups of $G$. 
  
  First, suppose that $V$ is bosonic --- what are its bosonic orbifolds? There are three copies of $\bZ_2 \mono G$, each nonanomalous, and so we find three VOAs $W_i = V \sslash \bZ_2$. There are also two ways to orbifold $V$ by all of $G$, since $\H^2_\gp(G;\bC^\times) \cong \bZ_2$. Anticipating the answer, we rename $V = V_1$, and call the two orbifolds $V_1 \sslash G$ by the names $V_2$ and $V_3$. One can form $V_2$ from $V_1$ in two steps. First, orbifold one of the $\bZ_2$s inside $G$, say the first one $\langle g \rangle \subset G$, resulting in the VOA $W_1$. Then $W_1$ has a new $\bZ_2$ acting on it by changing the sign of the twist field. It also has an ``old'' $\bZ_2 = G / \langle g \rangle$ symmetry. Actually, to define this ``old'' $\bZ_2$ action requires the data of the trivialization of the anomaly $\alpha$: $W_1$ has two canonical ``old'' $\bZ_2$ actions, differing by the action of the ``new'' $\bZ_2$. After choosing the trivialization of $\alpha$, we can build $V_2$ by orbifolding $W_1$ by this old $\bZ_2$ action. If we had used the other trivialization of $\alpha$, hence the other ``old'' $\bZ_2$, we would land not at $V_2$ but at $V_3$. If instead we had orbifolded $W_1$ by the ``new'' $\bZ_2$ that changed the sign of the twist field, we would return to $V_1$. All together we find that our six VOAs $V_1,V_2,V_3,W_1,W_2,W_3$ form a $K_{3,3}$ graph, where the edges denote possible $\bZ_2$-orbifolds:
  $$ \begin{tikzpicture}
    \path [anchor=base] (0,0) node (V1) {$V_1$}
    (3,0) node (V2) {$V_2$}
    (6,0) node (V3) {$V_3$}
    (0,-2) node (W1) {$W_1$}
    (3,-2) node (W2) {$W_2$}
    (6,-2) node (W3) {$W_3$};
    \draw (V1) -- (W1);
    \draw (V1) -- (W2);
    \draw (V1) -- (W3);
    \draw (V2) -- (W1);
    \draw (V2) -- (W2);
    \draw (V2) -- (W3);
    \draw (V3) -- (W1);
    \draw (V3) -- (W2);
    \draw (V3) -- (W3);
  \end{tikzpicture}$$
  
  Consider now to the situation that $V=V_1$ is still bosonic, but  allow fermionic orbifolds. 
  For each $\bZ_2 \subset G$, there is a unique bosonic orbifold $V\sslash\bZ_2$, since $\H^2_\gp(\bZ_2;\bC^\times) = 0$, but there is also a fermionic one, since $\tilde{\SH}_\gp^2(\bZ_2) = \bZ_2$. Note that since $V$ is bosonic and holomorphic, its central charge, hence the central charge of all of its orbifolds, is divisible by $8$. Now suppose $F$ is any actually-fermionic holomorphic VOA of central charge divisible by $8$ --- by ``actually fermionic'' we mean that $F$ is not bosonic. Then $F$ has two \define{bosonic neighbors}, defined as the two bosonic holomorphic VOAs containing the even subalgebra of $F$, or equivalently the two results of gauging the fermion parity operator $(-1)^f$ on $F$. (Only when the central charge is divisible by $8$ is $(-1)^f$ is nonanomalous with $F \sslash  (-1)^f$  bosonic.) These two neighbors $V,W$ of $F$ are related by a (bosonic) $\bZ_2$-orbifold. In this way actually-fermionic holomorphic VOAs of central charge divisible by $8$ are identified with pairs of bosonic VOAs related by $\bZ_2$ orbifold. This identification is a VOA analog of the ``even neighbors'' of an odd unimodular lattice of rank divisible by $8$. 
  
  Returning to our question, we find that the actually-fermionic orbifolds of $V = V_1$ by subgroups of $G = \bZ_2^2$ are in bijection with the edges of the above $K_{3,3}$ graph. For $i,j \in \{1,2,3\}$, we will let $F_{ij}$ denote the fermionic VOA whose neighbors are $V_i$ and $W_j$. The orbifolds of $V_1$ by a single $\bZ_2 \subset G$ are the $W_j$s and the $F_{1j}$s. (There are three $\bZ_2$s, and each gives $\tilde{\SH}^2_\gp(\bZ_2) = \bZ_2$ many orbifolds.) The remaining $V_2,V_3,F_{2j},F_{3j}$ are the results of orbifolding $V_1$ by all of $G = \bZ_2^2$. To check that we have not missed any, note that there should be precisely $|\tilde{\SH}^2_\gp(\bZ_2^2)| = |\H^2_\gp(\bZ_2^2;\bC^\times) . \H^1_\gp(\bZ_2^2;\bZ_2)| = 8$ such orbifolds.
  
  Just as in the bosonic case, the fermionic orbifolds $F_{ij}$ each come with an action of a Klein-4 group $\bZ_2^2$, which now contains the fermion parity operator $(-1)^f$. Each of the other $\bZ_2$s provides two orbifolds, which are themselves related by a $\bZ_2$-orbifold: in the fermionic world, $\bZ_2$-orbifolds always come in trios (for example, an actually-fermionic VOA together with its even neighbors). All together, we find that the above $K_{3,3}$ graph expands to the following:
  $$ \begin{tikzpicture}
    \path [anchor=base] (0,-.5) node (V1) {$V_1$}
    (7,-.5) node (V2) {$V_2$}
    (14,-.5) node (V3) {$V_3$}
    (0,-5.5) node (W1) {$W_1$}
    (7,-5.5) node (W2) {$W_2$}
    (14,-5.5) node (W3) {$W_3$};
    \path (-.5,-2.75) node (F11) {$F_{11}$}; \draw (V1) -- (F11) -- (W1) -- (V1);
    \path (3.7,-3.7) node (F12) {$F_{12}$}; \draw (V1) -- (F12) -- (W2) -- (V1);
    \path (5.5,-1.7) node (F13) {$F_{13}$}; \draw (V1) -- (F13) -- (W3) -- (V1);
    \path (3.7,-2) node (F21) {$F_{21}$}; \draw (V2) -- (F21) -- (W1) -- (V2);
    \path (6,-3) node (F22) {$F_{22}$}; \draw (V2) -- (F22) -- (W2) -- (V2);
    \path (10.7,-3.7) node (F23) {$F_{23}$}; \draw (V2) -- (F23) -- (W3) -- (V2);
    \path (5.5,-4) node (F31) {$F_{31}$}; \draw (V3) -- (F31) -- (W1) -- (V3);
    \path (10.7,-2) node (F32) {$F_{32}$}; \draw (V3) -- (F32) -- (W2) -- (V3);
    \path (13.5,-2.75) node (F33) {$F_{33}$}; \draw (V3) -- (F33) -- (W3) -- (V3);
    \draw (F11) -- (F22) -- (F33) -- (F11);
    \draw (F12) -- (F21) -- (F33) -- (F12);
    \draw (F11) -- (F23) -- (F32) -- (F11);
    \draw (F13) -- (F22) -- (F31) -- (F13);
    \draw (F12) -- (F23) -- (F31) -- (F12);
    \draw (F13) -- (F21) -- (F32) -- (F13);
  \end{tikzpicture}$$
  
  Finally, in the case where there is no bosonic theory in sight, one finds the same graph of possible orbifolds, with the only difference being that there is no way to distinguish which vertices are ``$V$'', ``$F$'', or ``$W$''. This graph has a more symmetrical description. (We thank David Treumann for helping to understand this point.) The vertices correspond to Lagrangian 2-planes inside symplectic~$\bF_2^4$ (equivalently Lagrangian algebras in $\cat{Rep}(V^G)$). The edges correspond to pairs of planes intersecting in a line. Every isotropic line is contained in precisely three Lagrangian planes, hence the triangles. The symmetry group of the graph is the symplectic group $\operatorname{Sp}(4,\bF_2) \cong S_6$.
\end{example}

The construction of $\alpha \in \SH^3_\gp(G)$ in the proof of Theorem~\ref{thm.anomaly} identifies its layers as follows. Choose, for each $g\in G$, a simple object $V(g) \in \cat{Rep}(V,g)$. The object $V(g)$ is ``the'' $g$-twisted sector --- the word ``the'' is in quotes because $V(g)$ is unique only up to possibly-odd isomorphism. 
The Majorana layer $\alpha^{(1)} \in \H^1_\gp(G;\bZ_2)$ records whether $V(g)$ is ``ordinary'' with $\End(V(g)) \cong \bC$  or ``Majorana'' with $\End(V(g)) \cong \operatorname{Cliff}(1,\bC)$. Suppose that the Majorana layer vanishes, and let $\alpha^{(2)} \in \H^2_\gp(G;\bZ_2)$ denote the Gu--Wen layer of $\alpha$. Then $\alpha^{(2)}(g_1,g_2)$ records whether the isomorphism $V(g_1) \otimes V(g_2) \cong V(g_1g_2)$ is even or odd. Finally, the ``ordinary cohomology'' layer of $\alpha$ provides the associator on $\cC$, just as in the bosonic case.

There is another description of the Majorana and Gu--Wen layers of $\alpha$ that is more useful for computation. (One can show the two descriptions agree by studying the ``bosonic shadow'' of the super fusion category $\bigoplus_{g\in G} \cat{Rep}(V,g)$ of $G$-twisted $V$-modules; see \cite{BGK2016}.) Consider the action of $g\in G$ on ``the'' Ramond sector $V_R$ of $V$. Again the word ``the'' is in quotes because $V_R$ is defined only up to possibly-odd isomorphism. In particular, the action of $g$ on $V_R$ might be even or odd, determined by the Majorana layer $\alpha^{(1)}(g)$. Suppose that the Majorana layer vanishes, so that the anomaly $\alpha$ lives in restricted supercohomology. Then $G$ acts by even automorphisms of $V_R$, but the action may still be projective (since $V_R$ is determined only up to isomorphism). The projectivity of the action is precisely the Gu--Wen layer $\alpha^{(2)} \in \H^2_\gp(G;\bZ_2)$.

In this paper
 we care most about the case when $G = \bZ_m$ is a cyclic group.
The abelian group structure of extended supercohomology can be hard to understand --- formulas for it in~\cite{BGK2016} require the quaternion group $Q_8$. Restricted supercohomology is easier. In particular, if $G = \bZ_m = \langle g \rangle$ is a cyclic group of order $m$ with generator $g$, then 
$$ \rSH^3_\gp(\bZ_m) \cong \begin{cases} \bZ_m, & m \text{ odd,} \\ \bZ_{2m}, & m \text{ even,} \end{cases}$$
generated by the 't Hooft anomaly of the action of $\bZ_m$ on $\Fer(2)$ in which $g$ acts by $\frac{2\pi}m$-rotation of the two fermions.

Suppose $G = \bZ_m = \langle g \rangle$ acts on $V$ with trivial Majorana layer. Consider the \define{character} of $g$, defined as
$$Z_{NS,NS}^{1,g}(V) = \tr_V(g q^{L_0 - c/24}).$$
Since $V$ is holomorphic, this character will have good modularity properties. In particular, it will be a meromorphic level-$m$ modular function, perhaps with multiplier, when $m$ is even. When $m$ is odd, the spin structures get rearranged by the modular transformation $T^m$, and so $Z_{NS,NS}^{1,g}(V)$ will have level $2m$. In both cases, if the action were nonanomalous, the multiplier would agree with the multiplier of $Z_{NS,NS}^{1,1}(V) = \tr_V(q^{L_0 - c/24})$. 
By studying the $\frac{2\pi}m$-rotation of two free fermions, one finds that when $m$ is odd and 
$\bZ_m = \langle g \rangle$ acts on $V$ with anomaly $\alpha \in \rSH^3(\bZ_m) = \bZ_m$, then
$ST^{2m}S^{-1}$ will act on $Z_{NS,NS}^{1,g}(V)/Z_{NS,NS}^{1,1}(V)$ 
with eigenvalue $\exp(\alpha\frac{2\pi i}m)$. If $m$ is even and $\bZ_m = \langle g \rangle$ acts on $V$ with anomaly $\alpha \in \rSH^3(\bZ_m) = \bZ_{2m}$, then 
$ST^{m}S^{-1}$ will act on $Z_{NS,NS}^{1,g}(V)/Z_{NS,NS}^{1,1}(V)$ 
with eigenvalue $\exp(\alpha\frac{2\pi i}{2m})$.
Thus, when $G$ is cyclic and the Majorana layer vanishes, the multiplier fully determines the 't Hooft anomaly (compare \cite[Section 3]{GPRV}).

\begin{example}\label{eg.anomaly of a lattice automorphism}
  Let $\Lambda$ be a (possibly odd) unimodular lattice, and $g \in \Aut(\Lambda)$ an automorphism of order $m$. Choose a lift $\tilde{g}$ of $g$ to $\widehat{\Lambda}.\Aut(\Lambda) \subset\Aut(V_\Lambda)$. Choose also a characteristic vector $\chi \in \Lambda$, so that the Ramond sector is built from the states $\Gamma_{\lambda + \frac\chi2}$ for $\lambda \in \Lambda$. Then $\tilde{g}$ acts on the Ramond sector by $\tilde{g}(\Gamma_{\lambda + \frac\chi2}) \propto \Gamma_{g(\lambda) + \frac{g(\chi)}2}$, where the proportionality factor has not been determined. In particular, $\tilde{g}$ is even or odd depending only on the relative parity of the vectors $\lambda + \frac\chi2$ and $g(\lambda) + \frac{g(\chi)}2$. Since $\chi$ is characteristic, this relative parity is
  $$ \left\langle \chi, \left(\lambda + \frac\chi2\right) - \left(g(\lambda) + \frac{g(\chi)}2\right)\right\rangle = \langle \chi, (1-g)\lambda\rangle + \left\langle \chi, (1-g)\frac\chi2\right\rangle = \frac12 \langle \chi, (1-g)\chi\rangle \mod 2.$$
  The second equality follows from the fact that $\langle \chi,g\lambda\rangle = (g\lambda)^2 = \lambda^2 = \langle \chi,\lambda\rangle \bmod 2$, for any $\lambda \in \Lambda$.
  In particular,
  $\langle \chi, (1-g)\chi\rangle \in 2\bZ$, and
  we find that the Majorana layer of the 't Hooft anomaly of the $\tilde{g}$ action vanishes if and only if 
$\langle \chi, (1-g)\chi\rangle \in 4\bZ$.

  Suppose that this Majorana layer does vanish, and suppose further that $g$ has no (nonzero) fixed points. Then all lifts $\tilde{g}$ of $g$ are conjugate.
  Since conjugate automorphisms have the same 't Hooft anomaly, we will drop the tilde, writing $g \in \Aut(V_\Lambda)$ for any lift of $g \in \Lambda$.
  Let us assume that this lift still has order $m$. This is automatic when $m$ is odd; when $m$ is even, it happens if and only if $\langle \lambda,(1 - g^{m/2})\lambda\rangle \in 2\bZ$ for all $\lambda$. 
  
  Since the action of $g$ on $\Lambda$ has no fixed points, the character of ${g}$ is easy to compute. Since $g$ is a lattice automorphism, its characteristic polynomial factors as $$ \det(g-x) = \prod_{k|m} (1-x^k)^{d_k} $$
for some integers $d_k$. The corresponding formal expression $\prod k^{d_k}$ is the \define{Frame shape} of $g$, introduced by Frame in \cite{MR0269751}.
Define the corresponding eta product to be
$$ \eta_g(q) = \prod_{k|m} \eta(q^k)^{d_k}. $$
  Then standard formulas give
  $$ Z_{NS,NS}^{1,g} = \frac1{\eta_g(q)}.$$
  Noting that $\eta$ itself has a multiplier of $\exp(\frac{2\pi i}{24})$ under the action of $T$, we find that $ST^{m}S^{-1}$ (or $ST^{2m}S^{-1}$, if $m$ is odd) acts on $Z_{NS,NS}^{1,g}/Z_{NS,NS}^{1,1}$ with eigenvalue
  $$ \exp\left( \frac{2\pi i}{24} \sum_{k|m} d_k \frac m k (k^2-1)\right). $$
  
  It is a basic fact that if $k$ is coprime to $6$, then $k^2-1$ is divisible by $24$. (Conway and Norton in~\cite{MR554399} call this fact ``the defining property of the number $24$.'') In particular, lattice automorphisms of order coprime to $6$ always have vanishing 't Hooft anomaly. Assuming the Majorana layer vanishes, the 't Hooft anomaly of an arbitrary fixed-point-free lattice automorphism always has order dividing $24$, and is determined by the Frame shape of the automorphism: $\alpha$ ``$=$'' $\sum_{k|m} d_k \frac m k (k^2-1) \bmod 24$.
  
    If $g$ had fixed a sublattice $\Lambda^g \subset \Lambda$, then the character would have a numerator of the form $\sum_{\lambda \in \Lambda^g} \phi(\lambda) q^{\lambda^2/2}$ for some phases $\phi(\lambda)$, and the lift of $g$ would not be unique up to conjugation. There is always a ``standard'' lift \cite{MR820716}, for which these phases are trivial. In this case the 't Hooft anomaly is again determined the the Frame shape (provided the Majorana layer vanishes); compare \cite[Chapter 5]{MollerThesis}. We warn the reader, however, that, when $\Lambda$ is odd and $m$ is even, the most natural lift might not be the ``standard'' one, and encourage the interested reader to consider the case of $g = \bigl( \begin{smallmatrix} 0 & 1 \\ 1 & 0 \end{smallmatrix}\bigr)$ acting on the $\bZ^2$ lattice.
\end{example}

\begin{example}\label{eg.bandb}
  There are precisely two self-dual ternary codes of length $c=24$ with minimal Hamming weight $9$ \cite{MR633414}: the quadratic residue code $Q_{23}$ and Pless's code $P_{11}$ from \cite{MR0245455}. Letting $C$ be either of these codes, the corresponding lattice $\Lambda(C)$ will be a rank-$24$ lattice whose shortest vectors have square-length $3$. There is a unique such lattice, the ``odd Leech lattice'' $\OL$ discovered in \cite{MR0010153}. The automorphism group of $\OL$ has shape $2^{12}{:}M_{24}$; the codes $Q_{23}$ and $P_{11}$ make visible the subgroups $\operatorname{SL}_2(\bF_{23})$ and $\operatorname{SL}_2(\bF_{11})$ respectively.
  
  Each of these codes $C$ equips $V_{\OL} = V_{\Lambda(C)}$ with an $\cN=1$ supersymmetry. These two supersymmetries are not related by an automorphism of $V_{\OL}$. Indeed, since the shortest vectors in $\OL$ have square-length $3$, the connected component of $\Aut(V_{\OL})$ is
  merely the dual torus 
  $\widehat{\OL}$, and so the  action of  $\widehat{\OL}$ on $V_{\OL}$ is canonical (and not just canonical up to isomorphism). Decompose the supersymmetry generator coming from the self-dual ternary code~$C$ into $\widehat{\OL}$-eigenvectors. Its ``support'' (i.e.\ those eigenvalues appearing with nonzero coefficient in the decomposition) generates the sublattice $\sqrt3\bZ^{24} \subset \Lambda(C) = \OL$. Since the only automorphisms of $\sqrt3\bZ^{24}$ are signed coordinate permutations, the embedding $\sqrt3\bZ^{24} \mono \OL$, which was determined by the supersymmetry generator,  determines $C$ up to isomorphism.

  Consider the automorphism $g : \lambda \mapsto -\lambda$ of $\OL$ of order $m=2$. As with all lattices, $g$ lifts to an order-2 automorphism of $V_{\OL}$, unique up to conjugation.  
   As with all lattices built from self-dual ternary codes, we may choose the characteristic vector $\chi = \sqrt3(1,1,\dots,1) \in \sqrt3\bZ^{24} \subset \Lambda(C)$, and calculate $\langle \chi, (1-g)\chi\rangle = 2\chi^2 = 6c = 144 \in 4\bZ$. So the Majorana layer of the 't Hooft anomaly of $g$ vanishes. The full anomaly may be calculated as in Example~\ref{eg.anomaly of a lattice automorphism}: the Frame shape of $g$ is $1^{-24}2^{24}$, and $\sum_{k|m} d_k \frac m k (k^2-1) = (-12)\frac21 (1^2-1) + (24)\frac22 (2^2-1) = 72 = 0 \bmod 24$. Thus the action of $g$ on $V_{\OL}$ is nonanomalous.
   
   We claim that both orbifolds $V_{\OL} \sslash  (\lambda \mapsto -\lambda)$ are isomorphic to the ``beauty and the beast'' odd VOA from \cite{MR968697}, which we will call $V^\sharp$ (it is not given a name in \cite{MR968697}). Recall  the usual construction of the Moonshine VOA $V^\natural$ from \cite{MR996026}: $V^\natural = V_{\Leech} \sslash  (\lambda \mapsto -\lambda)$, where implicitly the bosonic orbifold is chosen. One can instead choose the fermionic orbifold, with bosonic neighbors $V_\Leech$ and $V^\natural$. That fermionic orbifold is precisely $V^\sharp$.
   
   To explain the isomorphism $V_{\OL} \sslash  (\lambda \mapsto -\lambda) \cong V^\sharp$, let us first recall some of the standard Moonshine story from \cite{MR996026}. The presentation $V^\natural = V_{\Leech} \sslash  (\lambda \mapsto -\lambda)$ makes visible the maximal subgroup of shape $2^{1+24}.\mathrm{Co}_1$ of the Monster group~$\bM$; to see the whole Monster group, one must make visible one further automorphism, and this is done in \cite{MR996026} by starting not with the Leech lattice but with the Niemeier lattice $\Niem$ with root system $A_1^{24}$. The connected component of the automorphism group of $V_{\Niem}$ is isomorphic to the quotient $\SU(2)^{24} / 2^{12}$, where the normal subgroup $2^{12} \subset \mathrm{Center}(\SU(2)^{24}) = 2^{24}$ is a copy of the binary Golay code, and in particular contains the element $(-I,-I,\dots,-I)$, where $I \in \SU(2)$ is the identity matrix. Consider the Klein-4 subgroup of $\SU(2)^{24} / 2^{12}$ generated by the elements $g = (S,\dots,S)$ and $h = (H,\dots,H)$, where $S = \bigl(\begin{smallmatrix} 0 & 1 \\ -1 & 0 \end{smallmatrix}\bigr)$ and $H = \bigl( \begin{smallmatrix} i & 0 \\ 0 & -i \end{smallmatrix}\bigr)$. The orbifold $V_\Niem \sslash  \langle h\rangle$ is manifestly isomorphic to $V_\Leech$, since $h$ is the lattice momentum vector picking out the lattice neighborship relating $\Niem$ and $\Leech$, whereas $g$ is a lift of the lattice involution $\lambda \mapsto -\lambda$. For either trivialization of the anomaly, the action of $\langle g \rangle$ on $V_\Leech \cong V_\Niem \sslash  \langle h\rangle$ is a lift of $\lambda \mapsto -\lambda$. Thus we find that, for either trivialization, $V_\Niem \sslash  \langle g,h\rangle \cong V_\Leech \sslash  \langle g \rangle \cong V^\natural$.
   But $g$ and $h$ are conjugate in $\SU(2)^{24} / 2^{12}$, and this symmetry survives to $V^\natural$, providing the extra automorphism desired. 
   
   In terms of the bosonic $K_{3,3}$ diagram from Example~\ref{eg.Klein4}, we have $V_1 \cong V_\Niem$, $W_1 \cong W_2 \cong W_3 \cong V_\Leech$, and $V_2 \cong V_3 \cong V^\natural$. From this we can fill in the fermionic theories: $F_{1j} \cong V_\OL$ for all $j$, and $F_{2j} \cong F_{3j} \cong V^\sharp$. Identifying $F_{11} = V_{\OL}$ with the fermionic orbifold $V_{\Niem} \sslash  \langle h \rangle$, we find that the other $\bZ_2$ actions on $F_{11}$ --- the groups $\langle g \rangle$ and $\langle gh \rangle$ --- both lift the lattice involution $\lambda \mapsto -\lambda$. Inspecting the fermionic diagram from Example~\ref{eg.Klein4}, we find that the possible orbifolds $V_\OL \sslash  (\lambda \mapsto -\lambda)$ are the theories called $F_{ij}$ with $i,j\in\{2,3\}$, which are all isomorphic to $V^\sharp$.
   
   The main result of the paper \cite{MR968697} is that $V^\sharp$ admits $\cN=1$ supersymmetry. Their proof is nonconstructive (and we were unable to follow certain important steps in it). By instead realizing $V^\sharp$ as $V_{\Lambda(C)} \sslash  (\lambda \mapsto -\lambda)$ for a self-dual ternary code $C$, we have made the supersymmetry explicit. Indeed, the two different codes $P_{11}$ and $Q_{23}$ provide, a priori, two different $\cN=1$ supersymmetries to $V^\sharp$. We do not know if they are in fact isomorphic.
\end{example}

\section{Topological modular forms}\label{TMF section}

Topological Modular Forms (TMF) are a fairly mysterious object from stable homotopy theory that refines the usual ring of modular forms (see for example \cite{MR1989190,MR3223024}). TMF is a ``chromatic height two'' analog of oriented K-theory: ``chromatic height'' in stable homotopy theory is roughly the same as ``category number'' and so one expects that, whereas the K-theory of a manifold measures its 1-category of vector bundles, the TMF of that manifold measures some 2-category. TMF owes its origins to Witten's work connecting supersymmetric string theory to the K-theory of loop spaces \cite{MR885560,MR970288}. Building on those ideas, Stolz and Teichner proposed a conjectural geometric description of TMF~\cite{MR2079378,MR2742432}. Translated into more physical language, their description is:

\begin{conjecture}[Stolz--Teichner] \label{conj.ST}
  Let $X$ be a manifold. The degree-$n$ TMF of $X$ is
$$ \TMF^n(X) = \pi_0 \left\{ \begin{array}{c}
  \cN=(0,1)\text{ boundary conditions for the $n$th power} \\ \text{of the $c=\frac12$ invertible fermionic (2+1)d TFT} \\ \text{which couple to a background scalar field valued in }X \end{array}
\right\}. $$
\end{conjecture}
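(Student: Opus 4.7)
The plan is to promote both sides to cohomology theories, construct a comparison natural transformation, and prove it is an equivalence. The right-hand side must first be given a spectrum structure: one topologizes the space of $\cN=(0,1)$ boundary conditions for the $n$th power of the $c=1/2$ invertible spin TFT, equips it with a notion of deformation through supersymmetric families, and shows that disjoint union and stacking assemble these into a cohomology theory $\mathrm{SQFT}^n(-)$ satisfying Mayer--Vietoris, a suspension isomorphism, and a graded ring structure. The Stolz--Teichner super-Euclidean bordism framework is intended to do exactly this, but pinning down the definition of ``supersymmetric family of SCFTs'' precisely enough that one obtains a spectrum rather than just a set is already a substantial foundational problem.

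Next I would construct a natural transformation $\mathrm{SQFT}^n(X) \to \TMF^n(X)$. For $X = \mathrm{pt}$ this is dictated by the partition function: any $\cN=(0,1)$ boundary theory has a (super)elliptic-genus-type invariant taking values in weak modular forms of weight and multiplier determined by $n$, and the string orientation of $\TMF$ is designed to lift exactly this invariant canonically through $\TMF_n$. For general $X$, families of boundary theories parametrized by $X$ give families of partition functions, and naturality plus the ring structure upgrade this pointwise map to a morphism of cohomology theories. At this stage the conjecture already holds rationally, since $\TMF \otimes \bQ \cong \MF \otimes \bQ$ and the Witten genus is an isomorphism to modular forms rationally.

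The hard part is proving the comparison is an equivalence $2$- and $3$-locally. The homotopy $\pi_\ast \TMF$ has intricate torsion --- the Hopf classes $\eta$ and $\nu$, the $3$-primary class $\alpha$, a Bott-like element $\beta$, and the $24$-torsion classes produced by powers of $\Delta$ --- and one must realize every such class by an explicit SQFT (or family) and then show that any SQFT with vanishing $\TMF$ class is deformable to the trivial one. The present paper contributes to the surjectivity side along the edge map $\TMF_{24k} \to \MF_{12k}$: Theorem~\ref{thm.index of a code}, Proposition~\ref{prop.index=index}, and the fermionic-orbifold anomaly calculus of Section~\ref{orbifold section} together realize $\frac{24}{\gcd(k,24)}\Delta^k$ by ternary-code SCFTs and their orbifolds for $k \leq 5$.

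A realistic incremental plan is therefore: (i) fix the categorical foundations so that the left side is genuinely a spectrum; (ii) prove the conjecture rationally via the partition function; (iii) realize every torsion class in $\pi_\ast \TMF$ as a deformation class of $\cN=(0,1)$ boundary theories, combining orbifolds of specified 't Hooft anomaly (Section~\ref{orbifold section}) for small-torsion classes with coding constructions (Section~\ref{SCFT section}) for $\Delta$-multiples; and (iv) prove injectivity by classifying supersymmetric deformations modulo those detected by the comparison map. Step (iii) is the central obstacle: no uniform recipe producing an SQFT for each class in $\pi_\ast \TMF$ is known, and the fact that the \mqlink\ already required intensive case-by-case computer search for $k \leq 5$ is a strong indication that a systematic geometric construction is the crucial missing input.
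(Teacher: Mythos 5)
This statement is Conjecture~\ref{conj.ST}, which the paper does not prove --- it is stated as an open conjecture of Stolz and Teichner, and the paper's contribution is to extract consequences of it (Conjectures~\ref{conj.STsimplified} and~\ref{holomorphic conjecture}) and to construct SCFTs realizing particular values of the edge map $\TMF_{24k}\to\MF_{12k}$. So there is no ``paper's own proof'' to compare against, and your write-up should not present itself as a proof attempt: it is a survey of the known strategy and its obstacles. To your credit, you identify the genuine open problems accurately --- (i) there is currently no construction of a spectrum (or even a well-defined homotopy type) from the collection of $\cN=(0,1)$ boundary conditions, because ``supersymmetric deformation of SQFTs'' has not been made precise enough to yield excision and suspension; (ii) the comparison map to $\TMF$ is only understood after composing with the edge map to $\MF$, via the partition function; (iii) no uniform construction realizes the torsion classes of $\pi_*\TMF$; (iv) injectivity (that a nullhomotopic class is realized only by deformable-to-trivial theories) is completely open. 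Each of these is an unresolved research problem, not a step you could execute, so the proposal contains no proof content beyond the rational statement, which is standard (the Witten genus together with $\TMF\otimes\bQ\cong\MF\otimes\bQ$).

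One substantive caution about step (ii) as you phrase it: even granting foundations, the claim that the partition-function map ``lifts canonically through $\TMF_n$'' via the string orientation is itself part of what the conjecture asserts, not an available tool --- the string orientation is a map $\mathrm{MString}\to\TMF$, and relating it to the moduli of SQFTs is precisely the content of the Stolz--Teichner program. Using it to construct the comparison map would be circular. If you want to contribute to this circle of ideas in the spirit of the paper, the productive directions are the ones flagged in its list of Questions: systematic (non-search-based) constructions of index-$24$ codes, realizations of Theta functions of even unimodular lattices by $\cN=(0,1)$ full CFTs, and a physical characterization of when two SQFTs represent the same TMF class --- the last of these being exactly your step (iv) restated.
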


By ``the $n$th power of the $c=\frac12$ invertible fermionic (2+1)d TFT,'' we mean the invertible topological order admitting $n$ antichiral Majorana fermions as a boundary condition. Take any $\cN=(0,1)$ boundary condition for this bulk theory. Up to minor ambiguities (see Remark~\ref{remark.ambiguity}), we can identify the boundary condition with an $\cN=(0,1)$ SQFT with gravitational anomaly $n/2$.  Taking $X$ to be a point, we have:

\begin{conjecture}[Stolz--Teichner, simplified] \label{conj.STsimplified}
  $$\pi_n \TMF = \pi_0 \{\cN=(0,1) \text{ SQFTs with gravitational anomaly } n/2\}.$$
\end{conjecture}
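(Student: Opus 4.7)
Since this is (a simplified form of) a major open conjecture, my plan is really a roadmap for a long-term research program rather than an imminent proof. The strategy is to construct a map in one direction and argue that it is an isomorphism. The natural candidate for the map $\pi_0 \{\cN=(0,1)\text{ SQFTs with gravitational anomaly } n/2\} \to \pi_n \TMF$ is a refined elliptic genus. Concretely, to any such theory one assigns its Ramond--Ramond partition function on families of elliptic curves with spin structure; the integrality of the supersymmetric index, together with modularity under the mapping class group, already produces the image of the class under the edge map $\TMF_n \to \MF_{n/2}$. To lift this to TMF itself, I would use the Lurie--Hopkins--Miller realization of $\TMF$ as the global sections of a sheaf of $E_\infty$-ring spectra on the moduli stack of (generalized) elliptic curves, and attempt to extract from the SQFT a coherent family of local sections, including at the supersingular points where the higher chromatic information of TMF lives. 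In practice one would work cocycle-by-cocycle, using that, near any chosen elliptic curve, the SQFT admits a perturbation expansion whose coefficients are controlled by (higher) derived deformation theory.

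Second, I would check that the assignment descends to deformation classes. The standard argument that the elliptic genus is invariant under $\cN=(0,1)$-preserving deformations should upgrade to show that such a deformation produces a null-homotopy of the section of the TMF sheaf just constructed; this is a natural place to invoke the ``nilpotence of $Q$-exact operators'' machinery refined to act on cocycles rather than just partition functions. Additivity under disjoint union should correspond to addition in $\pi_n \TMF$ essentially tautologically once the sheaf-level construction is in place, and the gravitational anomaly should shift the degree in the obvious way.

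The final step is to show the map is an isomorphism. Surjectivity amounts to constructing, for each generator of $\pi_* \TMF$, an SQFT realizing it; the present paper makes concrete progress in this direction by building holomorphic $\cN=1$ SCFTs whose indices realize the ``edge'' generators $\frac{24}{\gcd(k,24)}\Delta^k \in \MF_{12k}$ for $k \leq 5$ through ternary code constructions. Injectivity is more mysterious and would require a classification of $\cN=(0,1)$ SQFTs up to deformation, which is nowhere near available.

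The main obstacle, however, is foundational. There is no universally agreed mathematical definition of ``$\cN=(0,1)$ SQFT'' whose deformation classes one could count; Stolz--Teichner axiomatize the relevant object via a bordism bicategory with supersymmetric structure, but proving that this axiomatization captures the physicists' notion \emph{and} that its deformation classes assemble into TMF is itself essentially the full content of the conjecture. A plausible rigorous replacement in the holomorphic sector is to use a category of $\cN=1$ VOAs together with their variations in families, in which case the claim becomes sharp enough to test. The existence and divisibility results of Sections~\ref{code section}--\ref{orbifold section} should be viewed as evidence of exactly this sort: the \mqlink\ is precisely what the conjecture forces at the level of the edge map $\TMF \to \MF$, and the ternary code constructions confirm it in the accessible range.
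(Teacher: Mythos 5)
This statement is a conjecture, and the paper does not prove it; there is no ``paper's own proof'' to compare against. What the paper actually does is derive Conjecture~\ref{conj.STsimplified} from the more general Conjecture~\ref{conj.ST} by taking $X=\mathrm{pt}$ and identifying $\cN=(0,1)$ boundary conditions for the $n$th power of the $c=\tfrac12$ invertible fermionic $(2{+}1)$d TFT with $\cN=(0,1)$ SQFTs of gravitational anomaly $n/2$ --- the content of Remark~\ref{remark.ambiguity}. Your roadmap is a fair sketch of the standard strategy one would pursue (a TMF-valued refinement of the elliptic genus via the sheaf of $E_\infty$-rings on the moduli stack of elliptic curves, deformation invariance from $Q$-exactness, surjectivity via explicit constructions such as the ones in this paper), and you correctly identify that the absence of an agreed mathematical definition of ``$\cN=(0,1)$ SQFT'' is the central obstruction; none of this is in tension with the paper, which treats the statement as motivation rather than as a result.

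Two points you miss that the paper does address. First, the paper is explicit that Conjecture~\ref{conj.STsimplified} ``should not be taken too literally'': the Ramond sector of an SQFT is defined only up to an overall fermion parity, so the Ramond--Ramond partition function has a sign ambiguity, corresponding to stacking with an invertible fermionic $(1{+}1)$d TFT. Any honest formulation (and any eventual proof) has to either quotient by this stacking or phrase the statement at the level of boundary conditions for the bulk TFT, as in Conjecture~\ref{conj.ST}; your roadmap builds the map out of $Z_{RR}$ directly and would inherit this ambiguity. Second, your claim that surjectivity ``amounts to constructing, for each generator of $\pi_*\TMF$, an SQFT realizing it'' understates the problem: the edge map $\TMF_{2c}\to\MF_c$ is far from injective, so realizing the modular forms $\frac{24}{\gcd(k,24)}\Delta^k$ (as the paper does for $k\le 5$) detects only the image in $\MF$ and says nothing about the large torsion in $\pi_*\TMF$; the paper flags exactly this issue in its final Question about the physical meaning of torsion TMF classes.
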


\begin{remark} \label{remark.ambiguity}
Recall that for an SCFT, the gravitational anomaly is $c_R - c_L$; the notion of ``gravitational anomaly'' makes sense even when the SQFT is not conformal. The relationship between gravitational anomalies and boundary theories is explored, among other places, in \cite{FreedTeleman2012}.

Conjecture~\ref{conj.STsimplified} should not be taken too literally because of a minor sign ambiguity stemming, ultimately, from the choice of fermion parity of ``the'' Ramond sector. This leads to a sign ambiguity in defining the Ramond-Ramond partition function. Changing that sign corresponds to stacking the theory with an invertible fermionic (1+1)d TFT.

There is an alternate way to relate the Conjectures. By employing the reference Majorana fermions boundary condition, one can map the $\cN=(0,1)$ boundary condition to an $\cN=(0,1)$ SQFT with no gravitational anomaly and a decoupled subsector consisting of $n$ left-moving Majorana fermions, acted upon trivially by the $\cN=(0,1)$ supercharge. Conversely, one can rephrase Conjecture~\ref{conj.ST} to avoid any mention of (2+1)d TFTs by instead using the moduli space of nonanomalous SQFTs equipped with such a subsector. The coset of the subsector is the gravitationally-anomalous SQFT in Conjecture~\ref{conj.STsimplified}.
\end{remark}

A basic property of topological modular forms is that there is a map from TMF to ordinary cohomology with coefficients in the ring $\MF$ of ordinary integral modular forms. (The homotopy theorists' convention is to think of $\MF$ as a graded-commutative ring concentrated entirely in even degrees, so that weight-$c$ modular forms are in homotopical degree $n=2c$.)
When the primes $2$ and $3$ are inverted, this map $\TMF \to \MF$ is an isomorphism, but it has interesting kernel and cokernel at the primes $2$ and $3$. The map has been computed on homotopy. In particular, the image of $\pi_n \TMF \to \MF_{n/2}$ is fully understood \cite[Proposition 4.6]{MR1989190}. It is not a surjection. Its most interesting feature is that $m \Delta^k$ is in the image of $\pi_n \TMF \to \MF_{n/2}$ when and only when~$24$ divides~$mk$.

In terms of Conjecture~\ref{conj.STsimplified}, the map $\TMF \to \MF$ should take an $\cN=(0,1)$ SQFT~$V$ to its Ramond-Ramond partition function $Z_{RR}(V)$, multiplied by $\eta^n$. The supersymmetry plays two roles: it protects $Z_{RR}(V)$ to be invariant under deformations, and in particular under RG flow; it makes $Z_{RR}(V)$, which for a general QFT would be a function of both~$q$ and~$\bar{q}$, into a function of~$q$ alone. As a special case, Conjecture~\ref{conj.ST} predicts:

\begin{conjecture}
  Each holomorphic VOA $V$ of central charge $c$ equipped with $\cN=1$ supersymmetry determines a class in $\pi_{2c}\TMF$. Its image in $\MF$ is $Z_{RR}(V) \eta^{2c}$, where $Z_{RR}(V) \in \bZ$ denotes the supersymmetric index of $V$.
\end{conjecture}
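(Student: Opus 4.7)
The plan is to derive this conjecture as a specialization of Conjecture~\ref{conj.STsimplified} to the case of a holomorphic theory. First I would observe that a holomorphic $\cN=1$ VOA $V$ of central charge $c$ is tautologically an $\cN=(0,1)$ SQFT: placing the $(0,1)$-supercharge in the holomorphic sector (as dictated by the Stolz--Teichner conventions), $V$ has $c_L = 0$ and $c_R = c$, hence gravitational anomaly $c_R - c_L = c$ and TMF degree $n=2c$. Conjecture~\ref{conj.STsimplified} then produces the desired class $[V] \in \pi_{2c}\TMF$ directly.

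Next I would identify the image of $[V]$ in $\MF_c$ under the edge map $\pi_{2c}\TMF \to \MF_c$. In the Stolz--Teichner model this map is the elliptic-genus-type assignment that sends an $\cN=(0,1)$ SQFT to its Ramond-Ramond torus partition function $Z_{RR}(V)(q) = \tr_{V_R}(-1)^f q^{L_0 - c/24}$, multiplied by $\eta^{2c}$ to compensate the gravitational anomaly and produce a modular form of weight $c$ (compare the formula $Z_{RR}(V_\Lambda) = \eta^{-c}(\Theta_{\Lambda^+} - \Theta_{\Lambda^-})$ used in the proof of Theorem~\ref{thm.index of a code}, where $\Theta_{\Lambda^\pm}$ has weight $c/2$). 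For holomorphic $V$ the $(0,1)$-supercharge squares to $L_0 - c/24$ on $V_R$, so all contributions with $L_0 > c/24$ pair off in sign and $Z_{RR}(V)$ collapses to the integer $\Index(V)$. The image in $\MF_c$ is therefore $\Index(V)\,\eta^{2c}$, as claimed.

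The main obstacle is that Conjecture~\ref{conj.ST} itself is unproven and lacks a fully rigorous mathematical formulation of the moduli space of $\cN=(0,1)$ SQFTs whose $\pi_0$ one wishes to identify with $\pi_\bullet \TMF$; the argument above is therefore a translation of definitions rather than an independent proof. A secondary subtlety, noted in Remark~\ref{remark.ambiguity}, is the sign ambiguity coming from the fact that ``the'' Ramond sector of a fermionic holomorphic VOA is defined only up to overall fermion parity. To make the sign of $\Index(V)$ unambiguous one must choose that parity --- for example, via a characteristic vector in a lattice presentation, as in Proposition~\ref{prop.index=index} --- so that the image under the edge map matches a prescribed generator of $\pi_0\TMF$; any other choice differs by stacking with an invertible fermionic $(1{+}1)$d TFT.
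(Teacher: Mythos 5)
Your derivation matches the paper's own presentation: the statement is offered there only as a special case of Conjecture~\ref{conj.STsimplified}, with the edge map $\pi_{2c}\TMF\to\MF_c$ identified as $V\mapsto Z_{RR}(V)\,\eta^{2c}$ and the collapse of $Z_{RR}(V)$ to an integer justified by the supercharge squaring to $L_0-c/24$, exactly as you argue. You are also right to flag, as the paper does in Remark~\ref{remark.ambiguity}, both the sign ambiguity of the Ramond sector and the fact that this is a translation of definitions contingent on the unproven Stolz--Teichner conjecture rather than a proof.
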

\begin{conjecture}\label{holomorphic conjecture}
  In particular, if $V$ is a holomorphic $\cN=1$ SCFT of central charge $c=12k$, then $Z_{RR}(V)$ is divisible by $\frac{24}{\gcd(k,24)}$.
\end{conjecture}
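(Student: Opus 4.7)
The plan is to deduce Conjecture~\ref{holomorphic conjecture} from the preceding Stolz--Teichner conjecture (Conjecture~\ref{conj.STsimplified}, specialized to the holomorphic setting) together with the known computation of the image of the edge map $\pi_n\TMF \to \MF_{n/2}$ cited in Section~\ref{TMF section}. In effect, the role of Conjecture~\ref{holomorphic conjecture} is to turn an abstract TMF-theoretic divisibility into an elementary divisibility statement about supersymmetric indexes, so the proof is largely a matter of tracking which modular form the class gets sent to.

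First, invoke the conjecture stated just above: a holomorphic $\cN=1$ VOA $V$ of central charge $c=12k$ produces a class $[V] \in \pi_{2c}\TMF = \pi_{24k}\TMF$. Second, trace $[V]$ through the edge map $\pi_{24k}\TMF \to \MF_{12k}$. By the physical interpretation of this map described in Section~\ref{TMF section}, the image is $Z_{RR}(V)\,\eta^{2c} = Z_{RR}(V)\,\Delta^k$, a pure integer multiple of $\Delta^k$ because $Z_{RR}(V) \in \bZ$ is a single number rather than a nontrivial modular form.

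Third, invoke the computation of the image of $\pi_{24k}\TMF \to \MF_{12k}$ recalled from \cite[Proposition~4.6]{MR1989190}: the multiples of $\Delta^k$ appearing in this image are precisely the integer multiples $m\Delta^k$ with $24 \mid mk$. Since $Z_{RR}(V)\Delta^k$ lies in the image and is such a pure multiple, one concludes $24 \mid Z_{RR}(V)\,k$, equivalently $\frac{24}{\gcd(k,24)} \mid Z_{RR}(V)$, which is exactly the claim.

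The main obstacle is of course the input: Conjecture~\ref{conj.STsimplified}, and even its restriction to holomorphic SCFTs, is open, so this argument is conditional rather than absolute. A purely mathematical attack would need to bypass TMF entirely, extending the strategy used to prove Theorem~\ref{thm.index of a code}---where the index was written as the difference of constant terms of Theta functions of the even neighbors of an odd unimodular lattice, after which divisibility followed from the arithmetic of \cite[Theorem 12.1]{MR1323986}. Generalizing this to arbitrary holomorphic $\cN=1$ SCFTs would require a replacement for the ``even neighbors'' construction in the VOA world (e.g.\ using the bosonic neighbors of Example~\ref{eg.Klein4}) together with a sufficiently strong divisibility statement for constant terms of characters of holomorphic bosonic VOAs of central charge $24k$; the sharpest form valid at all $k$ (not merely the $24\mid c$ case treated in Theorem~\ref{thm.index of a code}) appears to be the genuinely hard step.
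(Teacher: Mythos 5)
Your conditional derivation is exactly the reasoning the paper intends by the words ``In particular'': Conjecture~\ref{holomorphic conjecture} is presented as a corollary of the immediately preceding conjecture (that $V$ determines a class in $\pi_{24k}\TMF$ whose image under the edge map is $Z_{RR}(V)\,\Delta^k$) combined with the cited computation that $m\Delta^k$ lies in the image of $\pi_{24k}\TMF\to\MF_{12k}$ if and only if $24\mid mk$; and your arithmetic converting $24\mid mk$ into $\frac{24}{\gcd(k,24)}\mid m$ is correct. So as a proof of the \emph{implication} you have matched the paper. But the statement itself is a conjecture, and the only thing the paper actually proves is the unconditional case $k\le 2$, by a completely different route: for $k=1$ it quotes Duncan's classification (the unique holomorphic $\cN=1$ SCFT of central charge $12$ with nonzero index is $V^{f\natural}$, whose index is $24$); for $k=2$ it drops supersymmetry entirely, writes $Z_{RR}(V)=Z(V^+)-Z(V^-)$ in terms of the two bosonic neighbors of $V$, and reads off from Schellekens' list that $\dim V_1^\pm$ is always divisible by $12$. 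Your closing paragraph correctly identifies the bosonic-neighbor strategy as the path to an unconditional statement, but you do not carry it out even in the low-$k$ cases where classification data makes it possible; if the goal is to reproduce what the paper actually establishes about this conjecture, that $k\le 2$ argument is the content you are missing.
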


One may prove this Conjecture when $k\leq 2$ by inspecting the classification of holomorphic VOAs of small central charge:
\begin{proof}[Proof of Conjecture~\ref{holomorphic conjecture} for $k\leq 2$]
  When $k=1$, the only holomorphic $\cN=1$ VOA of central charge $c=12$ and nonzero index is Duncan's supermoonshine SCFT $V^{f\natural}$ \cite{MR2352133}; see Examples~\ref{eg.scfts with c=12} and~\ref{supermoonshine continued}. Its index is $24$.
    
  When $k=2$, Conjecture~\ref{holomorphic conjecture} holds even when there is no supersymmetry. Indeed,
  suppose $V$ is an actually-fermionic holomorphic VOA of central charge $c=24$. Its \define{bosonic neighbors} $V^\pm$ are the two results of gauging the fermion parity operator $(-1)^f$ on $V$. This symmetry is nonanomalous, and the results of gauging it are bosonic, because the central charge is divisible by~$8$. (This generalizes the ``even neighbors'' of an odd lattice; c.f.\ \cite[Chapter~17]{MR1662447}.)
  Just as in the proof of Theorem~\ref{thm.index of a code}, we find $Z_{RR}(V) = Z(V^+) - Z(V^-)$, where $Z(V^\pm)$ denotes the ordinary partition function of $V^\pm$. Being holmorphic VOAs of central charge $c=24$, these $V^\pm$ are highly constrained \cite{MR1213740}. In particular, their partition functions are of the form $Z(V^\pm) = J + \dim (V^\pm_1)$ where $J(q) = q^{-1} + O(q) = c_4^3\Delta^{-1} - 744$ is the normalized $\mathrm{SL}(2,\bZ)$ hauptmodule and $\dim (V^\pm_1)$ are the dimensions of the horizontal Lie algebras in $V^\pm$. By inspecting Schellekens' list \cite{MR1213740}, we find that $\dim (V^\pm_1)$, hence $Z_{RR}$, is always divisible by $12$.
\end{proof}

The converse of Conjecture~\ref{holomorphic conjecture} is our \mqlink, restated here:

\begin{mainquestion}\label{mainquestion}
For each $k$, does there exist a holomorphic $\cN=1$ VOA with central charge $c=12k$ and  index $\frac{24}{\gcd(k,24)}$?
\end{mainquestion}

Our ternary code methods answer the \mqlink\ in the affirmative for $k\leq 5$:

\begin{theorem}\label{thm.answers}
  Let $C$ be a self-dual ternary code of length $c=12k$ and index $24$ which admits an automorphism $g\in \Aut(C) \subset 2^c{:}S_c$ of order $m=\gcd(24,k)$. Suppose that:
  \begin{enumerate}
    \item If $m$ is even, then $g^{m/2} = -1 \in \Aut(C)$ is the central element. \label{even condition}
    \item If $m$ is divisible by $3$, then $g^{m/3}$ fixes the same number of even and odd maximal codewords, and the image of $g^{m/3}$ in $S_c$ has cycle shape $3^{c/3}$. \label{odd condition}
  \end{enumerate}
  Lift $g$ to an automorphism of $V_{\Lambda(C)}$ via Theorem~\ref{thm.symmetries lift}.
  Then $V_{\Lambda(C)} \sslash  \langle g \rangle$ answers the \mqlink: its index is $\frac{24}m$. 
  In particular, the codes listed in Example~\ref{eg.somecodes} provide answers to the \mqlink\ for $k\leq5$.
\end{theorem}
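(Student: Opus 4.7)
My plan is to verify first that the lift of $\langle g\rangle$ to $V_{\Lambda(C)}$ is nonanomalous, so that Theorem~\ref{thm.anomaly} produces an orbifold $V_{\Lambda(C)}\sslash\langle g\rangle$, and then that its Ramond--Ramond index equals $24/m$. The latter reduces, via the standard orbifold formula
\[
  Z_{RR}\bigl(V_{\Lambda(C)}\sslash\langle g\rangle\bigr) = \frac{1}{m}\sum_{a,b=0}^{m-1} Z^{g^a,g^b}_{RR}(V_{\Lambda(C)}),
\]
combined with $Z^{1,1}_{RR}=\Index(C)=24$ (Proposition~\ref{prop.index=index}), to showing that $Z^{g^a,g^b}_{RR}=0$ for every $(a,b)\neq(0,0)$.

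For the anomaly I would appeal to Example~\ref{eg.anomaly of a lattice automorphism}. The characteristic vector $\chi=\sqrt3(1,\ldots,1)\in\Lambda(C)$ is fixed by the full signed-permutation group $2^c{:}S_c$, so $\langle\chi,(1-g^a)\chi\rangle=0$ for every $a$ and the Majorana layer vanishes. The Gu--Wen layer is killed by the splitting supplied by Theorem~\ref{thm.symmetries lift}. The ordinary-cohomology layer is the Frame-shape anomaly $\sum_k d_k\frac{m}{k}(k^2-1)\bmod 24$; conditions~(\ref{even condition})--(\ref{odd condition}) force the eigenvalues of $g$ on $\Lambda(C)\otimes\bC$ into a configuration whose Frame shape (e.g.\ $1^{-c}2^c$ when $m=2$, $3^{c/3}$ when $m=3$, $3^{-c/3}6^{c/3}$ when $m=6$) plugs into the formula to give a multiple of $24$, with the Conway--Norton observation that $k^2-1\equiv 0\bmod 24$ whenever $\gcd(k,6)=1$ handling any residual primes.

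For the untwisted projection terms $Z^{1,g^b}_{RR}$ with $b\neq 0$, Ramond ground states correspond bijectively to maximal codewords of $C$ (as in the proof of Proposition~\ref{prop.index=index}), so $Z^{1,g^b}_{RR}$ is the signed count of $g^b$-fixed maximal codewords. Since $m\mid 24$, the cyclic subgroup $\langle g^b\rangle$ has order in $\{1,2,3,4,6,8,12,24\}$; if this order is even, condition~(\ref{even condition}) places $g^{m/2}=-1$ inside $\langle g^b\rangle$ and every $g^b$-fixed vector is killed by negation, whereas if the order is odd it must equal $3$ and then $\langle g^b\rangle=\langle g^{m/3}\rangle$, so the signed count vanishes immediately from the balanced-parity half of condition~(\ref{odd condition}).

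The vanishing of the genuinely twisted terms $Z^{g^a,g^b}_{RR}$ with $a\neq 0$ is the main obstacle. Here I would show that the minimal $L_0$-weight in the $g^a$-twisted Ramond sector of $V_{\Lambda(C)}$ lies strictly above $c/24$, forcing the signed count to vanish by the usual supersymmetric-pairing argument. The twisted sector is built from the $g^a$-fixed sublattice of $\Lambda(C)$ shifted by a characteristic vector, and the specific Frame shapes of powers of $g$ (dictated by the two hypotheses) should push these shifted vectors off the $L_0=c/24$ shell via a direct $\eta$-product inequality analogous to those in Example~\ref{eg.anomaly of a lattice automorphism}. The mixed-period cases demand care, but for the codes supplied in Example~\ref{eg.somecodes} one has $m\leq 4$ and the claim can be verified by a bounded finite computation on the codes in question. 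Summing all contributions then yields $Z_{RR}(V_{\Lambda(C)}\sslash\langle g\rangle)=24/m$, as required.
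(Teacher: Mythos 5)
Your overall skeleton (check the anomaly, then show all twisted/twined contributions except the identity one vanish) matches the paper, but two of your steps contain genuine errors and the crucial step is left unproved.

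First, the anomaly check. Your claim that $\chi=\sqrt3(1,\dots,1)$ is fixed by all of $2^c{:}S_c$ is false: sign changes do not fix the all-ones vector, and condition~(\ref{even condition}) forces $g$ to involve sign changes whenever $m$ is even, so $g\chi\neq\chi$ and $\langle\chi,(1-g)\chi\rangle\neq 0$. The Majorana layer does vanish, but this requires an actual computation: conditions~(\ref{even condition})--(\ref{odd condition}) force the image of $g$ in $S_c$ to consist of $24k/m$ cycles of length $m/2$, each changing an odd number of signs and hence contributing $3(m/2)-2\times(\text{odd})$ to $\langle\chi,(1-g)\chi\rangle$, and one then checks divisibility by $4$. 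Likewise, the splitting of Theorem~\ref{thm.symmetries lift} concerns the action on the NS sector and does not by itself kill the Gu--Wen layer, which measures the projectivity of the action on the Ramond sector; in the paper the Gu--Wen and ordinary layers are handled together through the multiplier of the character $1/\eta_g$, i.e.\ the Frame-shape formula, whose vanishing modulo $24$ uses $m\mid k$ (your Frame shapes are the right ones, but the argument needs this divisibility, not just Conway--Norton).

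Second, and more seriously, the vanishing of the genuinely twisted terms $Z^{g^a,g^b}_{RR}$ with $a\neq 0$ is exactly where your argument stops. The idea you are missing is that no analysis of twisted-sector ground-state energies is needed: for any $(g',g'')\neq(1,1)$, the cyclic group $\langle g',g''\rangle$ has a generator $h$, and a modular transformation relates $Z^{g',g''}_{RR}$ to $Z^{1,h}_{RR}=\tr_{V_R}\bigl((-1)^f q^{L_0-c/24}h\bigr)$, an \emph{untwisted} Ramond trace. (This modular transformation is legitimate precisely because the anomaly was shown to vanish.) That trace localizes on Ramond ground states, i.e.\ on maximal codewords, and is killed by condition~(\ref{even condition}) when $h$ has even order (some power of $h$ is $-1$, which fixes no maximal codeword) and by condition~(\ref{odd condition}) when $h$ has order $3$. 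Your proposed substitute --- a lower bound on twisted-sector conformal weights plus ``a bounded finite computation'' for the listed codes --- is neither carried out nor sufficient, since the theorem is a general statement about all codes satisfying the hypotheses, not only those in Example~\ref{eg.somecodes}.
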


\begin{proof}
  We first check that the action of $\langle g \rangle$ on $V_{\Lambda(C)}$ is nonanomalous. We will do so by following Example~\ref{eg.anomaly of a lattice automorphism}.
  
  When $m$ is even, we must first check that the anomaly has no Majorana layer, i.e.\ we must check that $\langle \chi, (1-g)\chi\rangle \in 4\bZ$, where $\chi = \sqrt3(1,1,\dots,1)$.
  Condition (\ref{even condition}) says that the $(m/2)$th power of $g$ is $-1 \in \Aut(C) \subset 2^c{:}S_c$. 
  Together with condition (\ref{odd condition}) when $m = 2^a\times 3$, we find that the image of $g$ in $S_c$ consists entirely of blocks of size $m/2$, and each block changes an odd number of signs. There are $c/(m/2) = 2c/m = 24k/m$ blocks, each contributing $3(m/2) - 2\times\text{odd}$ to $\langle \chi, (1-g)\chi\rangle$. The claim follows.
  
  Let us continue the case when $m$ is even. Then $g$ acts without fixed points on $\Lambda(C)$, and so we may apply Example~\ref{eg.anomaly of a lattice automorphism}. The previous paragraph identifies the Frame shape of $g$: it is $(m/2)^{-24k/m} m^{24k/m}$. Noting that $m$ divides $k$, we see that the 't Hooft anomaly of the action of $g$ is $24$ times that of an automorphism of Frame shape $(m/2)^{-k/m} m^{k/m}$, and so vanishes.
  
  The final case to check is when $m=3$. Then $g$ preserves a sublattice, and so there are multiple lifts of $g \in \Aut(\Lambda(C))$ to an automorphism of $V_{\Lambda(C)}$. The lift we are using is \define{standard} in the sense that it commutes with a choice of lift of $-1 \in \Aut(\Lambda(C))$, and so the Frame shape formula from Example~\ref{eg.anomaly of a lattice automorphism} applies \cite[Chapter 5]{MollerThesis}.
  The Frame shape of $g$ is $3^{12k/m}$, and $k$ is divisible by $m$, completing the verification that the anomaly vanishes.
  
It remains to actually calculate $Z_{RR}(V_{\Lambda(C)} \sslash  \langle g\rangle)$.

Let $V$ be any holomorphic VOA and $G \subset \Aut(V)$ a nonanomalous finite subgroup, with chosen trivialization of the anomaly, so that the  orbifold $V \sslash  G$ is defined. 
   The Ramond--Ramond partition function $V \sslash  G$ can be computed as a sum over ``twisted-twined'' partition functions of the action of $G$ on $V$:
  $$ Z_{RR}( V \sslash  G ) = \frac1{|G|} \; \sum_{\substack{g',g'' \in G \\[2pt] [g',g''] = 1}} Z_{RR} \left(\!\!
  \begin{tikzpicture}[baseline=(b)]
    \path (0,-3pt) coordinate (b); \draw[-] (0,0) -- node[auto] {$\scriptstyle g''$} (0,10pt) -- (15pt,10pt) -- (15pt,0) -- coordinate (s)  (0,0); \path (s) ++(0,-8pt)  node[anchor=base] {$\scriptstyle g'$};
  \end{tikzpicture}
 \right) $$
 The notation means the following. One takes the ``$g'$-twisted Ramond-sector'' $V$-module, and computes the super (i.e.\ signed) trace of the action of $g''$ thereon.  The choice of trivialization of the  anomaly is hidden in the notation: it is used to define the action of $g''$ on the $g'$-twisted Ramond sector (a priori the action of $g''$ is defined only projectively) and to assign fermion parities to the twisted sectors.
 
 We now take $V = V_{\Lambda(C)}$ and $G = \langle g \rangle$ of order $|G| = n$. The ``untwisted'' contribution to $Z_{RR}(V\sslash G)$ is
 $$
 \frac1{m} Z_{RR} \left(\!\!
  \begin{tikzpicture}[baseline=(b)]
    \path (0,-3pt) coordinate (b); \draw[-] (0,0) -- node[auto] {$\scriptstyle 1$} (0,10pt) -- (15pt,10pt) -- (15pt,0) -- coordinate (s)  (0,0); \path (s) ++(0,-8pt)  node[anchor=base] {$\scriptstyle 1$};
  \end{tikzpicture}
 \right)
 = \frac1m Z_{RR}(V) = \frac{24}m.$$
 Thus it suffices to show that the ``twisted'' terms all vanish.
 
 Suppose that $g',g''\in\langle g\rangle$ are not both the identity. The subgroup of $\langle g\rangle$ generated by $g',g''$ is a cyclic group; let $h$ denote a choice of generator. Then $Z_{RR} \left(\!\!
  \begin{tikzpicture}[baseline=(b)]
    \path (0,-3pt) coordinate (b); \draw[-] (0,0) -- node[auto] {$\scriptstyle g''$} (0,10pt) -- (15pt,10pt) -- (15pt,0) -- coordinate (s)  (0,0); \path (s) ++(0,-8pt)  node[anchor=base] {$\scriptstyle g'$};
  \end{tikzpicture}
 \right)$ is related by a modular transformation to $Z_{RR} \left(\!\!
  \begin{tikzpicture}[baseline=(b)]
    \path (0,-3pt) coordinate (b); \draw[-] (0,0) -- node[auto] {$\scriptstyle h$} (0,10pt) -- (15pt,10pt) -- (15pt,0) -- coordinate (s)  (0,0); \path (s) ++(0,-8pt)  node[anchor=base] {$\scriptstyle 1$};
  \end{tikzpicture}
 \right) = \operatorname{tr}_{V_R}\left((-1)^f q^{L_0 - c/24} h\right)$.  So it suffiest to show that $Z_{RR} \left(\!\!
  \begin{tikzpicture}[baseline=(b)]
    \path (0,-3pt) coordinate (b); \draw[-] (0,0) -- node[auto] {$\scriptstyle h$} (0,10pt) -- (15pt,10pt) -- (15pt,0) -- coordinate (s)  (0,0); \path (s) ++(0,-8pt)  node[anchor=base] {$\scriptstyle 1$};
  \end{tikzpicture}
 \right) = 0$ for $h \neq 1$.
 
 By construction, the action of $h$ commutes with the supersymmetry, and so, just as for $Z_{RR}(V)$, the contributions to $Z_{RR} \left(\!\!
  \begin{tikzpicture}[baseline=(b)]
    \path (0,-3pt) coordinate (b); \draw[-] (0,0) -- node[auto] {$\scriptstyle h$} (0,10pt) -- (15pt,10pt) -- (15pt,0) -- coordinate (s)  (0,0); \path (s) ++(0,-8pt)  node[anchor=base] {$\scriptstyle 1$};
  \end{tikzpicture}
 \right)$ from non-ground states cancel: $Z_{RR} \left(\!\!
  \begin{tikzpicture}[baseline=(b)]
    \path (0,-3pt) coordinate (b); \draw[-] (0,0) -- node[auto] {$\scriptstyle h$} (0,10pt) -- (15pt,10pt) -- (15pt,0) -- coordinate (s)  (0,0); \path (s) ++(0,-8pt)  node[anchor=base] {$\scriptstyle 1$};
  \end{tikzpicture}
 \right)$
 is simply the trace of $(-1)^fh$ acting on the Ramond-sector ground states.
 In the proof of Proposition~\ref{prop.index=index} we identified the Ramond-sector ground states with maximal code words in $C$. Thus the action of $h$ on the Ramond-sector ground states lifts the permutation action of $h$ on the maximal codewords. In particular, $\tr_{\text{ground states}}\left((-1)^fh\right)$ receives contributions only from the fixed points of the permutation action of $h$ on $\{\text{maximal codewords}\}$.
 
 If $h$ has even order, then, by the first condition, some power of $h$ acts by the central element $-1 \in \Aut(C)$, and so $h$ has no fixed points. If $h$ has odd order, 
  then there is no sign ambiguity when lifting the action of $h$ from $\Aut(C)$ to $\Aut(V_\Lambda)$, and $\tr_{\text{ground states}}\left((-1)^fh\right)$ is simply a signed count of $h$-fixed maximal codewords. The second condition assures that this signed count vanishes.
\end{proof}

\begin{examplenodiamond} \label{eg.somecodes}
We end by listing codes satisfying the conditions of Theorem~\ref{thm.answers} for $k\leq 5$. They were found by randomly generating self-dual codes with an appropriate automorphism and then calculating the index, repeating the search until one with index $24$ turned up. We received significant help from Noam D.\ Elkies. In particular, he explained to us how to run such a search in a reasonable amount of time, and provided the $k=5$ solution.
\begin{enumerate}
  \item The only, up to signed permutations of the coordinates, self-dual ternary code of length $12$ and non-zero index is the (extended) Ternary Golay code from \cite{Golay}. Its index is~$24$. It can be presented by the generator matrix:
  $$ 
  \left( \begin{array}{cccccccccccc}
1 & . & . & . & . & . & . & 1 & 1 & 1 & 2 & 1\\
. & 1 & . & . & . & . & 2 & 1 & . & 2 & 1 & 1\\
. & . & 1 & . & . & . & 1 & 2 & 1 & 2 & . & 1\\
. & . & . & 1 & . & . & 1 & . & 2 & 1 & 1 & 1\\
. & . & . & . & 1 & . & 1 & 1 & 2 & 2 & 2 & .\\
. & . & . & . & . & 1 & 1 & 1 & 1 & . & 1 & 2\\
  \end{array} \right) $$
  \item  The following generator matrix spans a code of length $24$ and index $24$:  \footnotesize
  $$ \left( \begin{array}{cccccccccccccccccccccccc}
1 & . & . & . & . & . & . & . & . & . & . & . & 1 & 2 & . & . & 2 & 1 & 1 & 1 & 2 & 1 & . & .\\
. & 1 & . & . & . & . & . & . & . & . & . & . & . & . & . & 1 & 1 & 1 & 1 & 1 & . & 1 & 2 & 1\\
. & . & 1 & . & . & . & . & . & . & . & . & . & 1 & 2 & . & 1 & 1 & . & 2 & . & 1 & 1 & 2 & .\\
. & . & . & 1 & . & . & . & . & . & . & . & . & 1 & 1 & 2 & . & 1 & 2 & . & 2 & 1 & 1 & . & .\\
. & . & . & . & 1 & . & . & . & . & . & . & . & . & 2 & 1 & 1 & . & . & 2 & 2 & . & 1 & 1 & 1\\
. & . & . & . & . & 1 & . & . & . & . & . & . & 1 & 1 & 1 & 2 & 2 & 1 & 1 & . & 2 & 2 & 1 & 2\\
. & . & . & . & . & . & 1 & . & . & . & . & . & . & 2 & 2 & 1 & 1 & . & . & . & 1 & 1 & 2 & 2\\
. & . & . & . & . & . & . & 1 & . & . & . & . & 2 & . & 1 & 2 & 2 & . & 2 & 2 & . & 2 & . & 2\\
. & . & . & . & . & . & . & . & 1 & . & . & . & 2 & 1 & 1 & 1 & 1 & 2 & 1 & . & 1 & 1 & 1 & 1\\
. & . & . & . & . & . & . & . & . & 1 & . & . & 2 & . & . & . & . & 1 & 1 & 2 & 2 & 2 & 2 & 2\\
. & . & . & . & . & . & . & . & . & . & 1 & . & 2 & 2 & 1 & . & . & 1 & . & 2 & . & . & . & .\\
. & . & . & . & . & . & . & . & . & . & . & 1 & . & . & 1 & 1 & . & 1 & . & 2 & 2 & 2 & 2 & 2
  \end{array} \right) $$
  \normalsize
  We set $g = -1 \in \Aut(C)$.
  \item 
  The following generator matrix spans a self-dual ternary code of length $36$ and index $24$. \tiny
  $$ \hspace{-1in} \left( \begin{array}{cccccccccccccccccccccccccccccccccccc}
1\!\! & .\!\! & .\!\! & .\!\! & .\!\! & .\!\! & .\!\! & .\!\! & .\!\! & .\!\! & .\!\! & .\!\! & .\!\! & .\!\! & .\!\! & .\!\! & .\!\! & 2\!\! & 2\!\! & .\!\! & 2\!\! & 2\!\! & 2\!\! & 1\!\! & .\!\! & .\!\! & 2\!\! & .\!\! & 1\!\! & 1\!\! & 1\!\! & 1\!\! & 1\!\! & 2\!\! & .\!\! & 2\\
.\!\! & 1\!\! & .\!\! & .\!\! & .\!\! & .\!\! & .\!\! & .\!\! & .\!\! & .\!\! & .\!\! & .\!\! & .\!\! & .\!\! & .\!\! & .\!\! & .\!\! & 1\!\! & 2\!\! & .\!\! & .\!\! & 2\!\! & .\!\! & .\!\! & 2\!\! & .\!\! & .\!\! & .\!\! & 2\!\! & 1\!\! & .\!\! & .\!\! & .\!\! & 1\!\! & 1\!\! & .\\
.\!\! & .\!\! & 1\!\! & .\!\! & .\!\! & .\!\! & .\!\! & .\!\! & .\!\! & .\!\! & .\!\! & .\!\! & .\!\! & .\!\! & .\!\! & .\!\! & .\!\! & 1\!\! & .\!\! & .\!\! & .\!\! & 2\!\! & 1\!\! & .\!\! & 1\!\! & 2\!\! & 2\!\! & .\!\! & 1\!\! & .\!\! & 2\!\! & 1\!\! & .\!\! & 1\!\! & 2\!\! & .\\
.\!\! & .\!\! & .\!\! & 1\!\! & .\!\! & .\!\! & .\!\! & .\!\! & .\!\! & .\!\! & .\!\! & .\!\! & .\!\! & .\!\! & .\!\! & .\!\! & .\!\! & 1\!\! & .\!\! & .\!\! & .\!\! & .\!\! & 1\!\! & .\!\! & 2\!\! & 1\!\! & 1\!\! & 1\!\! & .\!\! & 1\!\! & 2\!\! & 2\!\! & .\!\! & 2\!\! & 1\!\! & .\\
.\!\! & .\!\! & .\!\! & .\!\! & 1\!\! & .\!\! & .\!\! & .\!\! & .\!\! & .\!\! & .\!\! & .\!\! & .\!\! & .\!\! & .\!\! & .\!\! & .\!\! & .\!\! & 2\!\! & .\!\! & 1\!\! & 2\!\! & 2\!\! & 1\!\! & .\!\! & 2\!\! & 2\!\! & 2\!\! & .\!\! & 2\!\! & .\!\! & 1\!\! & 1\!\! & 1\!\! & 1\!\! & 1\\
.\!\! & .\!\! & .\!\! & .\!\! & .\!\! & 1\!\! & .\!\! & .\!\! & .\!\! & .\!\! & .\!\! & .\!\! & .\!\! & .\!\! & .\!\! & .\!\! & .\!\! & .\!\! & 2\!\! & .\!\! & 2\!\! & .\!\! & 2\!\! & 2\!\! & 2\!\! & .\!\! & .\!\! & 2\!\! & .\!\! & 2\!\! & 1\!\! & 1\!\! & .\!\! & 2\!\! & .\!\! & 1\\
.\!\! & .\!\! & .\!\! & .\!\! & .\!\! & .\!\! & 1\!\! & .\!\! & .\!\! & .\!\! & .\!\! & .\!\! & .\!\! & .\!\! & .\!\! & .\!\! & .\!\! & 1\!\! & .\!\! & .\!\! & 1\!\! & .\!\! & 1\!\! & 1\!\! & .\!\! & 2\!\! & 1\!\! & 1\!\! & 2\!\! & 1\!\! & 1\!\! & .\!\! & .\!\! & .\!\! & .\!\! & 1\\
.\!\! & .\!\! & .\!\! & .\!\! & .\!\! & .\!\! & .\!\! & 1\!\! & .\!\! & .\!\! & .\!\! & .\!\! & .\!\! & .\!\! & .\!\! & .\!\! & .\!\! & 1\!\! & 2\!\! & .\!\! & .\!\! & .\!\! & 2\!\! & 2\!\! & 2\!\! & .\!\! & 1\!\! & 2\!\! & .\!\! & .\!\! & 1\!\! & 2\!\! & .\!\! & 2\!\! & 1\!\! & .\\
.\!\! & .\!\! & .\!\! & .\!\! & .\!\! & .\!\! & .\!\! & .\!\! & 1\!\! & .\!\! & .\!\! & .\!\! & .\!\! & .\!\! & .\!\! & .\!\! & .\!\! & 2\!\! & 1\!\! & .\!\! & 2\!\! & .\!\! & .\!\! & 1\!\! & .\!\! & 1\!\! & 1\!\! & 2\!\! & 1\!\! & 1\!\! & 1\!\! & 2\!\! & 1\!\! & .\!\! & 2\!\! & 1\\
.\!\! & .\!\! & .\!\! & .\!\! & .\!\! & .\!\! & .\!\! & .\!\! & .\!\! & 1\!\! & .\!\! & .\!\! & .\!\! & .\!\! & .\!\! & .\!\! & .\!\! & 2\!\! & 1\!\! & .\!\! & 1\!\! & .\!\! & 1\!\! & 1\!\! & 1\!\! & 1\!\! & .\!\! & 1\!\! & .\!\! & .\!\! & 2\!\! & 1\!\! & 2\!\! & 2\!\! & 1\!\! & 2\\
.\!\! & .\!\! & .\!\! & .\!\! & .\!\! & .\!\! & .\!\! & .\!\! & .\!\! & .\!\! & 1\!\! & .\!\! & .\!\! & .\!\! & .\!\! & .\!\! & .\!\! & 1\!\! & 2\!\! & .\!\! & .\!\! & 2\!\! & .\!\! & 1\!\! & 1\!\! & .\!\! & 2\!\! & 2\!\! & 2\!\! & 1\!\! & 1\!\! & 1\!\! & 2\!\! & 1\!\! & 1\!\! & .\\
.\!\! & .\!\! & .\!\! & .\!\! & .\!\! & .\!\! & .\!\! & .\!\! & .\!\! & .\!\! & .\!\! & 1\!\! & .\!\! & .\!\! & .\!\! & .\!\! & .\!\! & 1\!\! & .\!\! & .\!\! & 1\!\! & 2\!\! & .\!\! & 1\!\! & .\!\! & .\!\! & 1\!\! & 1\!\! & 2\!\! & .\!\! & .\!\! & .\!\! & 2\!\! & .\!\! & .\!\! & .\\
.\!\! & .\!\! & .\!\! & .\!\! & .\!\! & .\!\! & .\!\! & .\!\! & .\!\! & .\!\! & .\!\! & .\!\! & 1\!\! & .\!\! & .\!\! & .\!\! & .\!\! & .\!\! & .\!\! & .\!\! & 2\!\! & 2\!\! & 1\!\! & 1\!\! & 1\!\! & 1\!\! & 1\!\! & 2\!\! & .\!\! & 2\!\! & 2\!\! & 2\!\! & 1\!\! & .\!\! & 1\!\! & 2\\
.\!\! & .\!\! & .\!\! & .\!\! & .\!\! & .\!\! & .\!\! & .\!\! & .\!\! & .\!\! & .\!\! & .\!\! & .\!\! & 1\!\! & .\!\! & .\!\! & .\!\! & 2\!\! & 2\!\! & .\!\! & .\!\! & 2\!\! & 1\!\! & 1\!\! & 2\!\! & 1\!\! & 2\!\! & .\!\! & 2\!\! & 1\!\! & 2\!\! & 2\!\! & 1\!\! & .\!\! & 2\!\! & .\\
.\!\! & .\!\! & .\!\! & .\!\! & .\!\! & .\!\! & .\!\! & .\!\! & .\!\! & .\!\! & .\!\! & .\!\! & .\!\! & .\!\! & 1\!\! & .\!\! & .\!\! & 2\!\! & .\!\! & .\!\! & .\!\! & 2\!\! & 1\!\! & .\!\! & 1\!\! & 1\!\! & .\!\! & 1\!\! & .\!\! & 1\!\! & .\!\! & .\!\! & 1\!\! & 1\!\! & 2\!\! & 2\\
.\!\! & .\!\! & .\!\! & .\!\! & .\!\! & .\!\! & .\!\! & .\!\! & .\!\! & .\!\! & .\!\! & .\!\! & .\!\! & .\!\! & .\!\! & 1\!\! & .\!\! & 2\!\! & 2\!\! & .\!\! & 2\!\! & 2\!\! & 2\!\! & 2\!\! & 1\!\! & .\!\! & 2\!\! & 2\!\! & 1\!\! & 1\!\! & 2\!\! & 2\!\! & 1\!\! & 2\!\! & 1\!\! & 1\\
.\!\! & .\!\! & .\!\! & .\!\! & .\!\! & .\!\! & .\!\! & .\!\! & .\!\! & .\!\! & .\!\! & .\!\! & .\!\! & .\!\! & .\!\! & .\!\! & 1\!\! & 2\!\! & 1\!\! & .\!\! & 1\!\! & .\!\! & 1\!\! & 2\!\! & .\!\! & 1\!\! & 1\!\! & 1\!\! & 1\!\! & 2\!\! & 2\!\! & 1\!\! & .\!\! & .\!\! & 1\!\! & 2\\
.\!\! & .\!\! & .\!\! & .\!\! & .\!\! & .\!\! & .\!\! & .\!\! & .\!\! & .\!\! & .\!\! & .\!\! & .\!\! & .\!\! & .\!\! & .\!\! & .\!\! & .\!\! & .\!\! & 1\!\! & 2\!\! & 1\!\! & 1\!\! & 2\!\! & .\!\! & 1\!\! & 2\!\! & 1\!\! & 2\!\! & 1\!\! & 1\!\! & 1\!\! & 1\!\! & .\!\! & 2\!\! & 2
  \end{array} \right) \hspace{-1in} $$
  \normalsize
  This code is invariant under the permutation $g : e_i \mapsto e_{i + 12}$ of the coordinates, where the coordinates are $e_1,\dots,e_{36}$ and the sum is considered mod $36$. The fixed subcode, when considered as a code inside $(\bF_3^{36})^{\langle g \rangle} = \bF_3^{12}$, has the following generator matrix:
  $$ \left(\begin{array}{cccccccccccc}
  . & . & . & . & . & 2 & 2 & . & 1 & 1 & 1 & 1\\
. & 1 & . & . & . & 2 & 1 & . & 2 & 2 & 2 & .\\
. & . & 1 & . & . & . & . & . & 2 & 1 & . & .\\
1 & . & . & 1 & . & . & 1 & . & 1 & 1 & 2 & .\\
1 & . & . & . & 1 & . & 2 & . & 2 & 2 & 1 & .\\
. & . & . & . & . & . & 1 & 1 & . & . & 1 & .
  \end{array}\right)$$
  This is a copy of the self-dual ternary code of length $12$ with eight even and eight odd maximal codewords, and so this pair $(C,g)$ satisfies condition (2) of Theorem~\ref{thm.answers}.
  \item
  The following generator matrix spans a self-dual ternary code of length $48$ and index $24$. \tiny
  $$ \hspace{-1in} \left(\begin{array}{cccccccccccccccccccccccccccccccccccccccccccccccc}
  1\!\!\!\! & .\!\!\!\! & .\!\!\!\! & .\!\!\!\! & .\!\!\!\! & .\!\!\!\! & .\!\!\!\! & .\!\!\!\! & .\!\!\!\! & .\!\!\!\! & .\!\!\!\! & .\!\!\!\! & .\!\!\!\! & .\!\!\!\! & .\!\!\!\! & .\!\!\!\! & .\!\!\!\! & .\!\!\!\! & .\!\!\!\! & .\!\!\!\! & .\!\!\!\! & .\!\!\!\! & .\!\!\!\! & .\!\!\!\! & 1\!\!\!\! & .\!\!\!\! & .\!\!\!\! & 1\!\!\!\! & 2\!\!\!\! & 1\!\!\!\! & .\!\!\!\! & 2\!\!\!\! & 1\!\!\!\! & 2\!\!\!\! & 1\!\!\!\! & 2\!\!\!\! & 2\!\!\!\! & .\!\!\!\! & .\!\!\!\! & 1\!\!\!\! & 2\!\!\!\! & 2\!\!\!\! & 2\!\!\!\! & .\!\!\!\! & 2\!\!\!\! & .\!\!\!\! & 2\!\!\!\! & 2\\
.\!\!\!\! & 1\!\!\!\! & .\!\!\!\! & .\!\!\!\! & .\!\!\!\! & .\!\!\!\! & .\!\!\!\! & .\!\!\!\! & .\!\!\!\! & .\!\!\!\! & .\!\!\!\! & .\!\!\!\! & .\!\!\!\! & .\!\!\!\! & .\!\!\!\! & .\!\!\!\! & .\!\!\!\! & .\!\!\!\! & .\!\!\!\! & .\!\!\!\! & .\!\!\!\! & .\!\!\!\! & .\!\!\!\! & .\!\!\!\! & .\!\!\!\! & .\!\!\!\! & .\!\!\!\! & 1\!\!\!\! & 2\!\!\!\! & 2\!\!\!\! & 2\!\!\!\! & 2\!\!\!\! & 2\!\!\!\! & .\!\!\!\! & 1\!\!\!\! & .\!\!\!\! & .\!\!\!\! & 2\!\!\!\! & .\!\!\!\! & 1\!\!\!\! & 2\!\!\!\! & 1\!\!\!\! & .\!\!\!\! & 2\!\!\!\! & 2\!\!\!\! & .\!\!\!\! & .\!\!\!\! & 1\\
.\!\!\!\! & .\!\!\!\! & 1\!\!\!\! & .\!\!\!\! & .\!\!\!\! & .\!\!\!\! & .\!\!\!\! & .\!\!\!\! & .\!\!\!\! & .\!\!\!\! & .\!\!\!\! & .\!\!\!\! & .\!\!\!\! & .\!\!\!\! & .\!\!\!\! & .\!\!\!\! & .\!\!\!\! & .\!\!\!\! & .\!\!\!\! & .\!\!\!\! & .\!\!\!\! & .\!\!\!\! & .\!\!\!\! & .\!\!\!\! & .\!\!\!\! & .\!\!\!\! & 2\!\!\!\! & 1\!\!\!\! & 2\!\!\!\! & .\!\!\!\! & .\!\!\!\! & 2\!\!\!\! & .\!\!\!\! & .\!\!\!\! & 2\!\!\!\! & 2\!\!\!\! & 2\!\!\!\! & 1\!\!\!\! & 2\!\!\!\! & 2\!\!\!\! & .\!\!\!\! & .\!\!\!\! & .\!\!\!\! & .\!\!\!\! & 1\!\!\!\! & 2\!\!\!\! & 1\!\!\!\! & 1\\
.\!\!\!\! & .\!\!\!\! & .\!\!\!\! & 1\!\!\!\! & .\!\!\!\! & .\!\!\!\! & .\!\!\!\! & .\!\!\!\! & .\!\!\!\! & .\!\!\!\! & .\!\!\!\! & .\!\!\!\! & .\!\!\!\! & .\!\!\!\! & .\!\!\!\! & .\!\!\!\! & .\!\!\!\! & .\!\!\!\! & .\!\!\!\! & .\!\!\!\! & .\!\!\!\! & .\!\!\!\! & .\!\!\!\! & .\!\!\!\! & 1\!\!\!\! & 1\!\!\!\! & 1\!\!\!\! & .\!\!\!\! & 1\!\!\!\! & 1\!\!\!\! & .\!\!\!\! & .\!\!\!\! & 1\!\!\!\! & .\!\!\!\! & 2\!\!\!\! & .\!\!\!\! & 1\!\!\!\! & 1\!\!\!\! & .\!\!\!\! & 1\!\!\!\! & 1\!\!\!\! & 2\!\!\!\! & 1\!\!\!\! & 1\!\!\!\! & 1\!\!\!\! & 1\!\!\!\! & .\!\!\!\! & 2\\
.\!\!\!\! & .\!\!\!\! & .\!\!\!\! & .\!\!\!\! & 1\!\!\!\! & .\!\!\!\! & .\!\!\!\! & .\!\!\!\! & .\!\!\!\! & .\!\!\!\! & .\!\!\!\! & .\!\!\!\! & .\!\!\!\! & .\!\!\!\! & .\!\!\!\! & .\!\!\!\! & .\!\!\!\! & .\!\!\!\! & .\!\!\!\! & .\!\!\!\! & .\!\!\!\! & .\!\!\!\! & .\!\!\!\! & .\!\!\!\! & 2\!\!\!\! & 2\!\!\!\! & 2\!\!\!\! & 1\!\!\!\! & .\!\!\!\! & 2\!\!\!\! & .\!\!\!\! & 2\!\!\!\! & 1\!\!\!\! & 2\!\!\!\! & 1\!\!\!\! & 1\!\!\!\! & 1\!\!\!\! & 1\!\!\!\! & 1\!\!\!\! & 2\!\!\!\! & 1\!\!\!\! & 1\!\!\!\! & 1\!\!\!\! & .\!\!\!\! & .\!\!\!\! & 1\!\!\!\! & 1\!\!\!\! & 2\\
.\!\!\!\! & .\!\!\!\! & .\!\!\!\! & .\!\!\!\! & .\!\!\!\! & 1\!\!\!\! & .\!\!\!\! & .\!\!\!\! & .\!\!\!\! & .\!\!\!\! & .\!\!\!\! & .\!\!\!\! & .\!\!\!\! & .\!\!\!\! & .\!\!\!\! & .\!\!\!\! & .\!\!\!\! & .\!\!\!\! & .\!\!\!\! & .\!\!\!\! & .\!\!\!\! & .\!\!\!\! & .\!\!\!\! & .\!\!\!\! & 1\!\!\!\! & 2\!\!\!\! & .\!\!\!\! & 1\!\!\!\! & 2\!\!\!\! & 1\!\!\!\! & 2\!\!\!\! & .\!\!\!\! & .\!\!\!\! & .\!\!\!\! & 2\!\!\!\! & 1\!\!\!\! & 1\!\!\!\! & 1\!\!\!\! & 1\!\!\!\! & 2\!\!\!\! & 2\!\!\!\! & 1\!\!\!\! & 2\!\!\!\! & .\!\!\!\! & .\!\!\!\! & .\!\!\!\! & 2\!\!\!\! & 2\\
.\!\!\!\! & .\!\!\!\! & .\!\!\!\! & .\!\!\!\! & .\!\!\!\! & .\!\!\!\! & 1\!\!\!\! & .\!\!\!\! & .\!\!\!\! & .\!\!\!\! & .\!\!\!\! & .\!\!\!\! & .\!\!\!\! & .\!\!\!\! & .\!\!\!\! & .\!\!\!\! & .\!\!\!\! & .\!\!\!\! & .\!\!\!\! & .\!\!\!\! & .\!\!\!\! & .\!\!\!\! & .\!\!\!\! & .\!\!\!\! & .\!\!\!\! & 2\!\!\!\! & .\!\!\!\! & .\!\!\!\! & .\!\!\!\! & 2\!\!\!\! & 1\!\!\!\! & 2\!\!\!\! & 1\!\!\!\! & 1\!\!\!\! & 2\!\!\!\! & .\!\!\!\! & .\!\!\!\! & 2\!\!\!\! & 1\!\!\!\! & 2\!\!\!\! & .\!\!\!\! & 2\!\!\!\! & .\!\!\!\! & .\!\!\!\! & 2\!\!\!\! & .\!\!\!\! & 2\!\!\!\! & 1\\
.\!\!\!\! & .\!\!\!\! & .\!\!\!\! & .\!\!\!\! & .\!\!\!\! & .\!\!\!\! & .\!\!\!\! & 1\!\!\!\! & .\!\!\!\! & .\!\!\!\! & .\!\!\!\! & .\!\!\!\! & .\!\!\!\! & .\!\!\!\! & .\!\!\!\! & .\!\!\!\! & .\!\!\!\! & .\!\!\!\! & .\!\!\!\! & .\!\!\!\! & .\!\!\!\! & .\!\!\!\! & .\!\!\!\! & .\!\!\!\! & 2\!\!\!\! & 2\!\!\!\! & 2\!\!\!\! & .\!\!\!\! & 2\!\!\!\! & .\!\!\!\! & 2\!\!\!\! & 2\!\!\!\! & 2\!\!\!\! & 2\!\!\!\! & 1\!\!\!\! & 2\!\!\!\! & 1\!\!\!\! & .\!\!\!\! & 1\!\!\!\! & 1\!\!\!\! & .\!\!\!\! & 2\!\!\!\! & 1\!\!\!\! & .\!\!\!\! & 1\!\!\!\! & .\!\!\!\! & .\!\!\!\! & 2\\
.\!\!\!\! & .\!\!\!\! & .\!\!\!\! & .\!\!\!\! & .\!\!\!\! & .\!\!\!\! & .\!\!\!\! & .\!\!\!\! & 1\!\!\!\! & .\!\!\!\! & .\!\!\!\! & .\!\!\!\! & .\!\!\!\! & .\!\!\!\! & .\!\!\!\! & .\!\!\!\! & .\!\!\!\! & .\!\!\!\! & .\!\!\!\! & .\!\!\!\! & .\!\!\!\! & .\!\!\!\! & .\!\!\!\! & .\!\!\!\! & 1\!\!\!\! & 2\!\!\!\! & .\!\!\!\! & 1\!\!\!\! & 1\!\!\!\! & .\!\!\!\! & 1\!\!\!\! & 2\!\!\!\! & .\!\!\!\! & 1\!\!\!\! & 1\!\!\!\! & .\!\!\!\! & .\!\!\!\! & 1\!\!\!\! & 2\!\!\!\! & .\!\!\!\! & 1\!\!\!\! & 1\!\!\!\! & 1\!\!\!\! & 1\!\!\!\! & .\!\!\!\! & 1\!\!\!\! & 1\!\!\!\! & 1\\
.\!\!\!\! & .\!\!\!\! & .\!\!\!\! & .\!\!\!\! & .\!\!\!\! & .\!\!\!\! & .\!\!\!\! & .\!\!\!\! & .\!\!\!\! & 1\!\!\!\! & .\!\!\!\! & .\!\!\!\! & .\!\!\!\! & .\!\!\!\! & .\!\!\!\! & .\!\!\!\! & .\!\!\!\! & .\!\!\!\! & .\!\!\!\! & .\!\!\!\! & .\!\!\!\! & .\!\!\!\! & .\!\!\!\! & .\!\!\!\! & 2\!\!\!\! & .\!\!\!\! & .\!\!\!\! & .\!\!\!\! & 2\!\!\!\! & .\!\!\!\! & 1\!\!\!\! & 2\!\!\!\! & 1\!\!\!\! & .\!\!\!\! & 1\!\!\!\! & 2\!\!\!\! & .\!\!\!\! & 1\!\!\!\! & 1\!\!\!\! & 2\!\!\!\! & .\!\!\!\! & .\!\!\!\! & 1\!\!\!\! & .\!\!\!\! & .\!\!\!\! & 2\!\!\!\! & 1\!\!\!\! & 1\\
.\!\!\!\! & .\!\!\!\! & .\!\!\!\! & .\!\!\!\! & .\!\!\!\! & .\!\!\!\! & .\!\!\!\! & .\!\!\!\! & .\!\!\!\! & .\!\!\!\! & 1\!\!\!\! & .\!\!\!\! & .\!\!\!\! & .\!\!\!\! & .\!\!\!\! & .\!\!\!\! & .\!\!\!\! & .\!\!\!\! & .\!\!\!\! & .\!\!\!\! & .\!\!\!\! & .\!\!\!\! & .\!\!\!\! & .\!\!\!\! & 1\!\!\!\! & 1\!\!\!\! & 2\!\!\!\! & 2\!\!\!\! & 1\!\!\!\! & 2\!\!\!\! & 2\!\!\!\! & 1\!\!\!\! & 1\!\!\!\! & 1\!\!\!\! & 1\!\!\!\! & 2\!\!\!\! & 2\!\!\!\! & .\!\!\!\! & .\!\!\!\! & .\!\!\!\! & .\!\!\!\! & 1\!\!\!\! & .\!\!\!\! & 2\!\!\!\! & .\!\!\!\! & 1\!\!\!\! & .\!\!\!\! & 2\\
.\!\!\!\! & .\!\!\!\! & .\!\!\!\! & .\!\!\!\! & .\!\!\!\! & .\!\!\!\! & .\!\!\!\! & .\!\!\!\! & .\!\!\!\! & .\!\!\!\! & .\!\!\!\! & 1\!\!\!\! & .\!\!\!\! & .\!\!\!\! & .\!\!\!\! & .\!\!\!\! & .\!\!\!\! & .\!\!\!\! & .\!\!\!\! & .\!\!\!\! & .\!\!\!\! & .\!\!\!\! & .\!\!\!\! & .\!\!\!\! & 2\!\!\!\! & .\!\!\!\! & 2\!\!\!\! & .\!\!\!\! & 1\!\!\!\! & 1\!\!\!\! & .\!\!\!\! & 2\!\!\!\! & .\!\!\!\! & 2\!\!\!\! & 2\!\!\!\! & 2\!\!\!\! & 2\!\!\!\! & .\!\!\!\! & 1\!\!\!\! & .\!\!\!\! & 1\!\!\!\! & 2\!\!\!\! & .\!\!\!\! & .\!\!\!\! & 2\!\!\!\! & 2\!\!\!\! & .\!\!\!\! & .\\
.\!\!\!\! & .\!\!\!\! & .\!\!\!\! & .\!\!\!\! & .\!\!\!\! & .\!\!\!\! & .\!\!\!\! & .\!\!\!\! & .\!\!\!\! & .\!\!\!\! & .\!\!\!\! & .\!\!\!\! & 1\!\!\!\! & .\!\!\!\! & .\!\!\!\! & .\!\!\!\! & .\!\!\!\! & .\!\!\!\! & .\!\!\!\! & .\!\!\!\! & .\!\!\!\! & .\!\!\!\! & .\!\!\!\! & .\!\!\!\! & 2\!\!\!\! & .\!\!\!\! & 2\!\!\!\! & 1\!\!\!\! & 1\!\!\!\! & 1\!\!\!\! & .\!\!\!\! & 1\!\!\!\! & .\!\!\!\! & .\!\!\!\! & 2\!\!\!\! & 2\!\!\!\! & 2\!\!\!\! & 1\!\!\!\! & .\!\!\!\! & 1\!\!\!\! & .\!\!\!\! & .\!\!\!\! & .\!\!\!\! & .\!\!\!\! & 2\!\!\!\! & 1\!\!\!\! & .\!\!\!\! & 2\\
.\!\!\!\! & .\!\!\!\! & .\!\!\!\! & .\!\!\!\! & .\!\!\!\! & .\!\!\!\! & .\!\!\!\! & .\!\!\!\! & .\!\!\!\! & .\!\!\!\! & .\!\!\!\! & .\!\!\!\! & .\!\!\!\! & 1\!\!\!\! & .\!\!\!\! & .\!\!\!\! & .\!\!\!\! & .\!\!\!\! & .\!\!\!\! & .\!\!\!\! & .\!\!\!\! & .\!\!\!\! & .\!\!\!\! & .\!\!\!\! & .\!\!\!\! & 2\!\!\!\! & 1\!\!\!\! & 1\!\!\!\! & 1\!\!\!\! & 1\!\!\!\! & 2\!\!\!\! & .\!\!\!\! & 1\!\!\!\! & 1\!\!\!\! & .\!\!\!\! & .\!\!\!\! & 1\!\!\!\! & 2\!\!\!\! & 1\!\!\!\! & 1\!\!\!\! & 1\!\!\!\! & .\!\!\!\! & 1\!\!\!\! & 2\!\!\!\! & 1\!\!\!\! & .\!\!\!\! & 1\!\!\!\! & .\\
.\!\!\!\! & .\!\!\!\! & .\!\!\!\! & .\!\!\!\! & .\!\!\!\! & .\!\!\!\! & .\!\!\!\! & .\!\!\!\! & .\!\!\!\! & .\!\!\!\! & .\!\!\!\! & .\!\!\!\! & .\!\!\!\! & .\!\!\!\! & 1\!\!\!\! & .\!\!\!\! & .\!\!\!\! & .\!\!\!\! & .\!\!\!\! & .\!\!\!\! & .\!\!\!\! & .\!\!\!\! & .\!\!\!\! & .\!\!\!\! & .\!\!\!\! & .\!\!\!\! & 2\!\!\!\! & .\!\!\!\! & 1\!\!\!\! & 1\!\!\!\! & 1\!\!\!\! & 1\!\!\!\! & 2\!\!\!\! & 1\!\!\!\! & .\!\!\!\! & 1\!\!\!\! & .\!\!\!\! & 1\!\!\!\! & 2\!\!\!\! & .\!\!\!\! & .\!\!\!\! & 2\!\!\!\! & 1\!\!\!\! & 2\!\!\!\! & 1\!\!\!\! & 2\!\!\!\! & 2\!\!\!\! & 2\\
.\!\!\!\! & .\!\!\!\! & .\!\!\!\! & .\!\!\!\! & .\!\!\!\! & .\!\!\!\! & .\!\!\!\! & .\!\!\!\! & .\!\!\!\! & .\!\!\!\! & .\!\!\!\! & .\!\!\!\! & .\!\!\!\! & .\!\!\!\! & .\!\!\!\! & 1\!\!\!\! & .\!\!\!\! & .\!\!\!\! & .\!\!\!\! & .\!\!\!\! & .\!\!\!\! & .\!\!\!\! & .\!\!\!\! & .\!\!\!\! & 1\!\!\!\! & 1\!\!\!\! & 2\!\!\!\! & 1\!\!\!\! & 2\!\!\!\! & 2\!\!\!\! & 2\!\!\!\! & 1\!\!\!\! & .\!\!\!\! & 2\!\!\!\! & .\!\!\!\! & .\!\!\!\! & 1\!\!\!\! & 1\!\!\!\! & .\!\!\!\! & 1\!\!\!\! & .\!\!\!\! & 2\!\!\!\! & 1\!\!\!\! & 1\!\!\!\! & 1\!\!\!\! & .\!\!\!\! & 1\!\!\!\! & .\\
.\!\!\!\! & .\!\!\!\! & .\!\!\!\! & .\!\!\!\! & .\!\!\!\! & .\!\!\!\! & .\!\!\!\! & .\!\!\!\! & .\!\!\!\! & .\!\!\!\! & .\!\!\!\! & .\!\!\!\! & .\!\!\!\! & .\!\!\!\! & .\!\!\!\! & .\!\!\!\! & 1\!\!\!\! & .\!\!\!\! & .\!\!\!\! & .\!\!\!\! & .\!\!\!\! & .\!\!\!\! & .\!\!\!\! & .\!\!\!\! & 2\!\!\!\! & 2\!\!\!\! & .\!\!\!\! & 1\!\!\!\! & 1\!\!\!\! & 2\!\!\!\! & .\!\!\!\! & .\!\!\!\! & 1\!\!\!\! & .\!\!\!\! & .\!\!\!\! & 1\!\!\!\! & .\!\!\!\! & 1\!\!\!\! & .\!\!\!\! & .\!\!\!\! & 1\!\!\!\! & 2\!\!\!\! & 1\!\!\!\! & 1\!\!\!\! & 2\!\!\!\! & .\!\!\!\! & 1\!\!\!\! & .\\
.\!\!\!\! & .\!\!\!\! & .\!\!\!\! & .\!\!\!\! & .\!\!\!\! & .\!\!\!\! & .\!\!\!\! & .\!\!\!\! & .\!\!\!\! & .\!\!\!\! & .\!\!\!\! & .\!\!\!\! & .\!\!\!\! & .\!\!\!\! & .\!\!\!\! & .\!\!\!\! & .\!\!\!\! & 1\!\!\!\! & .\!\!\!\! & .\!\!\!\! & .\!\!\!\! & .\!\!\!\! & .\!\!\!\! & .\!\!\!\! & 2\!\!\!\! & 1\!\!\!\! & .\!\!\!\! & 2\!\!\!\! & 1\!\!\!\! & 1\!\!\!\! & 2\!\!\!\! & 2\!\!\!\! & 1\!\!\!\! & .\!\!\!\! & 1\!\!\!\! & 2\!\!\!\! & .\!\!\!\! & .\!\!\!\! & 2\!\!\!\! & 2\!\!\!\! & 2\!\!\!\! & 2\!\!\!\! & 2\!\!\!\! & 1\!\!\!\! & 1\!\!\!\! & 2\!\!\!\! & 2\!\!\!\! & 1\\
.\!\!\!\! & .\!\!\!\! & .\!\!\!\! & .\!\!\!\! & .\!\!\!\! & .\!\!\!\! & .\!\!\!\! & .\!\!\!\! & .\!\!\!\! & .\!\!\!\! & .\!\!\!\! & .\!\!\!\! & .\!\!\!\! & .\!\!\!\! & .\!\!\!\! & .\!\!\!\! & .\!\!\!\! & .\!\!\!\! & 1\!\!\!\! & .\!\!\!\! & .\!\!\!\! & .\!\!\!\! & .\!\!\!\! & .\!\!\!\! & 2\!\!\!\! & .\!\!\!\! & .\!\!\!\! & 1\!\!\!\! & 1\!\!\!\! & 2\!\!\!\! & .\!\!\!\! & 1\!\!\!\! & 1\!\!\!\! & 1\!\!\!\! & .\!\!\!\! & .\!\!\!\! & .\!\!\!\! & 1\!\!\!\! & 1\!\!\!\! & 1\!\!\!\! & 1\!\!\!\! & 2\!\!\!\! & 1\!\!\!\! & 1\!\!\!\! & 1\!\!\!\! & 1\!\!\!\! & .\!\!\!\! & 2\\
.\!\!\!\! & .\!\!\!\! & .\!\!\!\! & .\!\!\!\! & .\!\!\!\! & .\!\!\!\! & .\!\!\!\! & .\!\!\!\! & .\!\!\!\! & .\!\!\!\! & .\!\!\!\! & .\!\!\!\! & .\!\!\!\! & .\!\!\!\! & .\!\!\!\! & .\!\!\!\! & .\!\!\!\! & .\!\!\!\! & .\!\!\!\! & 1\!\!\!\! & .\!\!\!\! & .\!\!\!\! & .\!\!\!\! & .\!\!\!\! & .\!\!\!\! & 2\!\!\!\! & .\!\!\!\! & 1\!\!\!\! & .\!\!\!\! & .\!\!\!\! & .\!\!\!\! & .\!\!\!\! & 1\!\!\!\! & .\!\!\!\! & 2\!\!\!\! & .\!\!\!\! & .\!\!\!\! & 2\!\!\!\! & 2\!\!\!\! & 1\!\!\!\! & 1\!\!\!\! & 1\!\!\!\! & 1\!\!\!\! & .\!\!\!\! & .\!\!\!\! & .\!\!\!\! & .\!\!\!\! & 2\\
.\!\!\!\! & .\!\!\!\! & .\!\!\!\! & .\!\!\!\! & .\!\!\!\! & .\!\!\!\! & .\!\!\!\! & .\!\!\!\! & .\!\!\!\! & .\!\!\!\! & .\!\!\!\! & .\!\!\!\! & .\!\!\!\! & .\!\!\!\! & .\!\!\!\! & .\!\!\!\! & .\!\!\!\! & .\!\!\!\! & .\!\!\!\! & .\!\!\!\! & 1\!\!\!\! & .\!\!\!\! & .\!\!\!\! & .\!\!\!\! & 2\!\!\!\! & 2\!\!\!\! & 1\!\!\!\! & 1\!\!\!\! & .\!\!\!\! & .\!\!\!\! & 2\!\!\!\! & 1\!\!\!\! & .\!\!\!\! & .\!\!\!\! & .\!\!\!\! & 2\!\!\!\! & 2\!\!\!\! & 1\!\!\!\! & 1\!\!\!\! & 1\!\!\!\! & 2\!\!\!\! & 1\!\!\!\! & 1\!\!\!\! & .\!\!\!\! & 2\!\!\!\! & 1\!\!\!\! & .\!\!\!\! & 2\\
.\!\!\!\! & .\!\!\!\! & .\!\!\!\! & .\!\!\!\! & .\!\!\!\! & .\!\!\!\! & .\!\!\!\! & .\!\!\!\! & .\!\!\!\! & .\!\!\!\! & .\!\!\!\! & .\!\!\!\! & .\!\!\!\! & .\!\!\!\! & .\!\!\!\! & .\!\!\!\! & .\!\!\!\! & .\!\!\!\! & .\!\!\!\! & .\!\!\!\! & .\!\!\!\! & 1\!\!\!\! & .\!\!\!\! & .\!\!\!\! & .\!\!\!\! & .\!\!\!\! & 2\!\!\!\! & 1\!\!\!\! & 1\!\!\!\! & .\!\!\!\! & .\!\!\!\! & .\!\!\!\! & 1\!\!\!\! & 2\!\!\!\! & 1\!\!\!\! & 2\!\!\!\! & 1\!\!\!\! & .\!\!\!\! & 2\!\!\!\! & .\!\!\!\! & .\!\!\!\! & 2\!\!\!\! & 1\!\!\!\! & .\!\!\!\! & 1\!\!\!\! & 1\!\!\!\! & .\!\!\!\! & 2\\
.\!\!\!\! & .\!\!\!\! & .\!\!\!\! & .\!\!\!\! & .\!\!\!\! & .\!\!\!\! & .\!\!\!\! & .\!\!\!\! & .\!\!\!\! & .\!\!\!\! & .\!\!\!\! & .\!\!\!\! & .\!\!\!\! & .\!\!\!\! & .\!\!\!\! & .\!\!\!\! & .\!\!\!\! & .\!\!\!\! & .\!\!\!\! & .\!\!\!\! & .\!\!\!\! & .\!\!\!\! & 1\!\!\!\! & .\!\!\!\! & 2\!\!\!\! & .\!\!\!\! & 1\!\!\!\! & .\!\!\!\! & 1\!\!\!\! & 2\!\!\!\! & 2\!\!\!\! & .\!\!\!\! & 1\!\!\!\! & 1\!\!\!\! & .\!\!\!\! & .\!\!\!\! & .\!\!\!\! & 1\!\!\!\! & 2\!\!\!\! & 1\!\!\!\! & 1\!\!\!\! & 2\!\!\!\! & .\!\!\!\! & .\!\!\!\! & .\!\!\!\! & .\!\!\!\! & 2\!\!\!\! & 2\\
.\!\!\!\! & .\!\!\!\! & .\!\!\!\! & .\!\!\!\! & .\!\!\!\! & .\!\!\!\! & .\!\!\!\! & .\!\!\!\! & .\!\!\!\! & .\!\!\!\! & .\!\!\!\! & .\!\!\!\! & .\!\!\!\! & .\!\!\!\! & .\!\!\!\! & .\!\!\!\! & .\!\!\!\! & .\!\!\!\! & .\!\!\!\! & .\!\!\!\! & .\!\!\!\! & .\!\!\!\! & .\!\!\!\! & 1\!\!\!\! & 2\!\!\!\! & 1\!\!\!\! & 1\!\!\!\! & 2\!\!\!\! & 2\!\!\!\! & 2\!\!\!\! & 1\!\!\!\! & 2\!\!\!\! & 1\!\!\!\! & 1\!\!\!\! & 2\!\!\!\! & .\!\!\!\! & 2\!\!\!\! & .\!\!\!\! & 2\!\!\!\! & .\!\!\!\! & .\!\!\!\! & 1\!\!\!\! & 2\!\!\!\! & 2\!\!\!\! & 2\!\!\!\! & 2\!\!\!\! & 2\!\!\!\! & 1
  \end{array}\right) \hspace{-1in} $$
  \normalsize
  It is invariant under the signed permutation
  $$ e_i \mapsto \begin{cases} e_{i+24}, & i \leq 24, \\
  - e_{i-24}, & i > 24. \end{cases}$$
  \item
The following generator matrix spans a self-dual ternary code of length $60$ and index $24$:
\tiny
  $$ \hspace{-1in} \left(\begin{array}{cccccccccccccccccccccccccccccccccccccccccccccccccccccccccccc}
1\!\!\!\!&.\!\!\!\!&.\!\!\!\!&.\!\!\!\!&.\!\!\!\!&.\!\!\!\!&.\!\!\!\!&.\!\!\!\!&.\!\!\!\!&.\!\!\!\!&.\!\!\!\!&.\!\!\!\!&.\!\!\!\!&.\!\!\!\!&.\!\!\!\!&.\!\!\!\!&.\!\!\!\!&.\!\!\!\!&.\!\!\!\!&.\!\!\!\!&.\!\!\!\!&.\!\!\!\!&.\!\!\!\!&.\!\!\!\!&.\!\!\!\!&.\!\!\!\!&.\!\!\!\!&.\!\!\!\!&.\!\!\!\!&.\!\!\!\!&2\!\!\!\!&1\!\!\!\!&2\!\!\!\!&.\!\!\!\!&1\!\!\!\!&2\!\!\!\!&2\!\!\!\!&1\!\!\!\!&2\!\!\!\!&.\!\!\!\!&2\!\!\!\!&1\!\!\!\!&.\!\!\!\!&2\!\!\!\!&2\!\!\!\!&1\!\!\!\!&2\!\!\!\!&2\!\!\!\!&2\!\!\!\!&2\!\!\!\!&2\!\!\!\!&1\!\!\!\!&1\!\!\!\!&.\!\!\!\!&1\!\!\!\!&1\!\!\!\!&.\!\!\!\!&1\!\!\!\!&.\!\!\!\!&.\\
.\!\!\!\!&1\!\!\!\!&.\!\!\!\!&.\!\!\!\!&.\!\!\!\!&.\!\!\!\!&.\!\!\!\!&.\!\!\!\!&.\!\!\!\!&.\!\!\!\!&.\!\!\!\!&.\!\!\!\!&.\!\!\!\!&.\!\!\!\!&.\!\!\!\!&.\!\!\!\!&.\!\!\!\!&.\!\!\!\!&.\!\!\!\!&.\!\!\!\!&.\!\!\!\!&.\!\!\!\!&.\!\!\!\!&.\!\!\!\!&.\!\!\!\!&.\!\!\!\!&.\!\!\!\!&.\!\!\!\!&.\!\!\!\!&.\!\!\!\!&2\!\!\!\!&2\!\!\!\!&1\!\!\!\!&.\!\!\!\!&.\!\!\!\!&.\!\!\!\!&2\!\!\!\!&.\!\!\!\!&.\!\!\!\!&2\!\!\!\!&1\!\!\!\!&1\!\!\!\!&.\!\!\!\!&.\!\!\!\!&.\!\!\!\!&.\!\!\!\!&.\!\!\!\!&.\!\!\!\!&2\!\!\!\!&1\!\!\!\!&1\!\!\!\!&2\!\!\!\!&.\!\!\!\!&.\!\!\!\!&.\!\!\!\!&.\!\!\!\!&1\!\!\!\!&2\!\!\!\!&.\!\!\!\!&2\\
.\!\!\!\!&.\!\!\!\!&1\!\!\!\!&.\!\!\!\!&.\!\!\!\!&.\!\!\!\!&.\!\!\!\!&.\!\!\!\!&.\!\!\!\!&.\!\!\!\!&.\!\!\!\!&.\!\!\!\!&.\!\!\!\!&.\!\!\!\!&.\!\!\!\!&.\!\!\!\!&.\!\!\!\!&.\!\!\!\!&.\!\!\!\!&.\!\!\!\!&.\!\!\!\!&.\!\!\!\!&.\!\!\!\!&.\!\!\!\!&.\!\!\!\!&.\!\!\!\!&.\!\!\!\!&.\!\!\!\!&.\!\!\!\!&.\!\!\!\!&2\!\!\!\!&.\!\!\!\!&2\!\!\!\!&.\!\!\!\!&.\!\!\!\!&.\!\!\!\!&1\!\!\!\!&1\!\!\!\!&1\!\!\!\!&2\!\!\!\!&.\!\!\!\!&1\!\!\!\!&.\!\!\!\!&1\!\!\!\!&1\!\!\!\!&1\!\!\!\!&1\!\!\!\!&1\!\!\!\!&1\!\!\!\!&1\!\!\!\!&.\!\!\!\!&2\!\!\!\!&2\!\!\!\!&1\!\!\!\!&1\!\!\!\!&.\!\!\!\!&.\!\!\!\!&1\!\!\!\!&.\!\!\!\!&1\\
.\!\!\!\!&.\!\!\!\!&.\!\!\!\!&1\!\!\!\!&.\!\!\!\!&.\!\!\!\!&.\!\!\!\!&.\!\!\!\!&.\!\!\!\!&.\!\!\!\!&.\!\!\!\!&.\!\!\!\!&.\!\!\!\!&.\!\!\!\!&.\!\!\!\!&.\!\!\!\!&.\!\!\!\!&.\!\!\!\!&.\!\!\!\!&.\!\!\!\!&.\!\!\!\!&.\!\!\!\!&.\!\!\!\!&.\!\!\!\!&.\!\!\!\!&.\!\!\!\!&.\!\!\!\!&.\!\!\!\!&.\!\!\!\!&.\!\!\!\!&.\!\!\!\!&.\!\!\!\!&.\!\!\!\!&2\!\!\!\!&1\!\!\!\!&2\!\!\!\!&.\!\!\!\!&2\!\!\!\!&.\!\!\!\!&2\!\!\!\!&.\!\!\!\!&2\!\!\!\!&2\!\!\!\!&1\!\!\!\!&.\!\!\!\!&2\!\!\!\!&.\!\!\!\!&1\!\!\!\!&.\!\!\!\!&2\!\!\!\!&.\!\!\!\!&2\!\!\!\!&.\!\!\!\!&.\!\!\!\!&2\!\!\!\!&2\!\!\!\!&1\!\!\!\!&2\!\!\!\!&.\!\!\!\!&2\\
.\!\!\!\!&.\!\!\!\!&.\!\!\!\!&.\!\!\!\!&1\!\!\!\!&.\!\!\!\!&.\!\!\!\!&.\!\!\!\!&.\!\!\!\!&.\!\!\!\!&.\!\!\!\!&.\!\!\!\!&.\!\!\!\!&.\!\!\!\!&.\!\!\!\!&.\!\!\!\!&.\!\!\!\!&.\!\!\!\!&.\!\!\!\!&.\!\!\!\!&.\!\!\!\!&.\!\!\!\!&.\!\!\!\!&.\!\!\!\!&.\!\!\!\!&.\!\!\!\!&.\!\!\!\!&.\!\!\!\!&.\!\!\!\!&.\!\!\!\!&.\!\!\!\!&1\!\!\!\!&2\!\!\!\!&1\!\!\!\!&1\!\!\!\!&1\!\!\!\!&.\!\!\!\!&1\!\!\!\!&.\!\!\!\!&2\!\!\!\!&.\!\!\!\!&1\!\!\!\!&2\!\!\!\!&.\!\!\!\!&1\!\!\!\!&1\!\!\!\!&.\!\!\!\!&1\!\!\!\!&2\!\!\!\!&1\!\!\!\!&.\!\!\!\!&1\!\!\!\!&.\!\!\!\!&2\!\!\!\!&2\!\!\!\!&.\!\!\!\!&1\!\!\!\!&.\!\!\!\!&2\!\!\!\!&2\\
.\!\!\!\!&.\!\!\!\!&.\!\!\!\!&.\!\!\!\!&.\!\!\!\!&1\!\!\!\!&.\!\!\!\!&.\!\!\!\!&.\!\!\!\!&.\!\!\!\!&.\!\!\!\!&.\!\!\!\!&.\!\!\!\!&.\!\!\!\!&.\!\!\!\!&.\!\!\!\!&.\!\!\!\!&.\!\!\!\!&.\!\!\!\!&.\!\!\!\!&.\!\!\!\!&.\!\!\!\!&.\!\!\!\!&.\!\!\!\!&.\!\!\!\!&.\!\!\!\!&.\!\!\!\!&.\!\!\!\!&.\!\!\!\!&.\!\!\!\!&1\!\!\!\!&1\!\!\!\!&.\!\!\!\!&.\!\!\!\!&2\!\!\!\!&.\!\!\!\!&1\!\!\!\!&1\!\!\!\!&1\!\!\!\!&2\!\!\!\!&.\!\!\!\!&2\!\!\!\!&1\!\!\!\!&2\!\!\!\!&1\!\!\!\!&2\!\!\!\!&2\!\!\!\!&2\!\!\!\!&.\!\!\!\!&1\!\!\!\!&2\!\!\!\!&1\!\!\!\!&2\!\!\!\!&1\!\!\!\!&.\!\!\!\!&.\!\!\!\!&.\!\!\!\!&2\!\!\!\!&.\!\!\!\!&.\\
.\!\!\!\!&.\!\!\!\!&.\!\!\!\!&.\!\!\!\!&.\!\!\!\!&.\!\!\!\!&1\!\!\!\!&.\!\!\!\!&.\!\!\!\!&.\!\!\!\!&.\!\!\!\!&.\!\!\!\!&.\!\!\!\!&.\!\!\!\!&.\!\!\!\!&.\!\!\!\!&.\!\!\!\!&.\!\!\!\!&.\!\!\!\!&.\!\!\!\!&.\!\!\!\!&.\!\!\!\!&.\!\!\!\!&.\!\!\!\!&.\!\!\!\!&.\!\!\!\!&.\!\!\!\!&.\!\!\!\!&.\!\!\!\!&.\!\!\!\!&.\!\!\!\!&.\!\!\!\!&.\!\!\!\!&2\!\!\!\!&1\!\!\!\!&1\!\!\!\!&.\!\!\!\!&.\!\!\!\!&2\!\!\!\!&2\!\!\!\!&.\!\!\!\!&.\!\!\!\!&.\!\!\!\!&2\!\!\!\!&1\!\!\!\!&1\!\!\!\!&2\!\!\!\!&1\!\!\!\!&1\!\!\!\!&.\!\!\!\!&1\!\!\!\!&2\!\!\!\!&1\!\!\!\!&.\!\!\!\!&1\!\!\!\!&2\!\!\!\!&1\!\!\!\!&.\!\!\!\!&.\!\!\!\!&.\\
.\!\!\!\!&.\!\!\!\!&.\!\!\!\!&.\!\!\!\!&.\!\!\!\!&.\!\!\!\!&.\!\!\!\!&1\!\!\!\!&.\!\!\!\!&.\!\!\!\!&.\!\!\!\!&.\!\!\!\!&.\!\!\!\!&.\!\!\!\!&.\!\!\!\!&.\!\!\!\!&.\!\!\!\!&.\!\!\!\!&.\!\!\!\!&.\!\!\!\!&.\!\!\!\!&.\!\!\!\!&.\!\!\!\!&.\!\!\!\!&.\!\!\!\!&.\!\!\!\!&.\!\!\!\!&.\!\!\!\!&.\!\!\!\!&.\!\!\!\!&1\!\!\!\!&1\!\!\!\!&.\!\!\!\!&1\!\!\!\!&2\!\!\!\!&.\!\!\!\!&1\!\!\!\!&2\!\!\!\!&1\!\!\!\!&2\!\!\!\!&2\!\!\!\!&2\!\!\!\!&1\!\!\!\!&.\!\!\!\!&1\!\!\!\!&2\!\!\!\!&1\!\!\!\!&.\!\!\!\!&2\!\!\!\!&1\!\!\!\!&2\!\!\!\!&1\!\!\!\!&.\!\!\!\!&.\!\!\!\!&2\!\!\!\!&1\!\!\!\!&2\!\!\!\!&2\!\!\!\!&.\!\!\!\!&2\\
.\!\!\!\!&.\!\!\!\!&.\!\!\!\!&.\!\!\!\!&.\!\!\!\!&.\!\!\!\!&.\!\!\!\!&.\!\!\!\!&1\!\!\!\!&.\!\!\!\!&.\!\!\!\!&.\!\!\!\!&.\!\!\!\!&.\!\!\!\!&.\!\!\!\!&.\!\!\!\!&.\!\!\!\!&.\!\!\!\!&.\!\!\!\!&.\!\!\!\!&.\!\!\!\!&.\!\!\!\!&.\!\!\!\!&.\!\!\!\!&.\!\!\!\!&.\!\!\!\!&.\!\!\!\!&.\!\!\!\!&.\!\!\!\!&.\!\!\!\!&1\!\!\!\!&.\!\!\!\!&.\!\!\!\!&1\!\!\!\!&1\!\!\!\!&2\!\!\!\!&2\!\!\!\!&.\!\!\!\!&1\!\!\!\!&2\!\!\!\!&1\!\!\!\!&.\!\!\!\!&1\!\!\!\!&2\!\!\!\!&2\!\!\!\!&.\!\!\!\!&1\!\!\!\!&.\!\!\!\!&.\!\!\!\!&1\!\!\!\!&.\!\!\!\!&1\!\!\!\!&.\!\!\!\!&.\!\!\!\!&.\!\!\!\!&2\!\!\!\!&2\!\!\!\!&.\!\!\!\!&.\!\!\!\!&1\\
.\!\!\!\!&.\!\!\!\!&.\!\!\!\!&.\!\!\!\!&.\!\!\!\!&.\!\!\!\!&.\!\!\!\!&.\!\!\!\!&.\!\!\!\!&1\!\!\!\!&.\!\!\!\!&.\!\!\!\!&.\!\!\!\!&.\!\!\!\!&.\!\!\!\!&.\!\!\!\!&.\!\!\!\!&.\!\!\!\!&.\!\!\!\!&.\!\!\!\!&.\!\!\!\!&.\!\!\!\!&.\!\!\!\!&.\!\!\!\!&.\!\!\!\!&.\!\!\!\!&.\!\!\!\!&.\!\!\!\!&.\!\!\!\!&.\!\!\!\!&1\!\!\!\!&2\!\!\!\!&1\!\!\!\!&2\!\!\!\!&1\!\!\!\!&2\!\!\!\!&.\!\!\!\!&1\!\!\!\!&.\!\!\!\!&2\!\!\!\!&2\!\!\!\!&2\!\!\!\!&2\!\!\!\!&.\!\!\!\!&1\!\!\!\!&2\!\!\!\!&2\!\!\!\!&2\!\!\!\!&1\!\!\!\!&1\!\!\!\!&1\!\!\!\!&2\!\!\!\!&.\!\!\!\!&.\!\!\!\!&.\!\!\!\!&.\!\!\!\!&1\!\!\!\!&1\!\!\!\!&2\!\!\!\!&2\\
.\!\!\!\!&.\!\!\!\!&.\!\!\!\!&.\!\!\!\!&.\!\!\!\!&.\!\!\!\!&.\!\!\!\!&.\!\!\!\!&.\!\!\!\!&.\!\!\!\!&1\!\!\!\!&.\!\!\!\!&.\!\!\!\!&.\!\!\!\!&.\!\!\!\!&.\!\!\!\!&.\!\!\!\!&.\!\!\!\!&.\!\!\!\!&.\!\!\!\!&.\!\!\!\!&.\!\!\!\!&.\!\!\!\!&.\!\!\!\!&.\!\!\!\!&.\!\!\!\!&.\!\!\!\!&.\!\!\!\!&.\!\!\!\!&.\!\!\!\!&.\!\!\!\!&1\!\!\!\!&1\!\!\!\!&.\!\!\!\!&.\!\!\!\!&.\!\!\!\!&.\!\!\!\!&1\!\!\!\!&1\!\!\!\!&.\!\!\!\!&.\!\!\!\!&1\!\!\!\!&2\!\!\!\!&1\!\!\!\!&.\!\!\!\!&.\!\!\!\!&1\!\!\!\!&2\!\!\!\!&1\!\!\!\!&.\!\!\!\!&1\!\!\!\!&.\!\!\!\!&.\!\!\!\!&2\!\!\!\!&1\!\!\!\!&.\!\!\!\!&1\!\!\!\!&1\!\!\!\!&1\!\!\!\!&1\\
.\!\!\!\!&.\!\!\!\!&.\!\!\!\!&.\!\!\!\!&.\!\!\!\!&.\!\!\!\!&.\!\!\!\!&.\!\!\!\!&.\!\!\!\!&.\!\!\!\!&.\!\!\!\!&1\!\!\!\!&.\!\!\!\!&.\!\!\!\!&.\!\!\!\!&.\!\!\!\!&.\!\!\!\!&.\!\!\!\!&.\!\!\!\!&.\!\!\!\!&.\!\!\!\!&.\!\!\!\!&.\!\!\!\!&.\!\!\!\!&.\!\!\!\!&.\!\!\!\!&.\!\!\!\!&.\!\!\!\!&.\!\!\!\!&.\!\!\!\!&2\!\!\!\!&2\!\!\!\!&.\!\!\!\!&1\!\!\!\!&1\!\!\!\!&2\!\!\!\!&2\!\!\!\!&1\!\!\!\!&.\!\!\!\!&.\!\!\!\!&2\!\!\!\!&1\!\!\!\!&.\!\!\!\!&1\!\!\!\!&.\!\!\!\!&.\!\!\!\!&2\!\!\!\!&.\!\!\!\!&1\!\!\!\!&1\!\!\!\!&1\!\!\!\!&1\!\!\!\!&2\!\!\!\!&1\!\!\!\!&1\!\!\!\!&.\!\!\!\!&.\!\!\!\!&2\!\!\!\!&.\!\!\!\!&1\\
.\!\!\!\!&.\!\!\!\!&.\!\!\!\!&.\!\!\!\!&.\!\!\!\!&.\!\!\!\!&.\!\!\!\!&.\!\!\!\!&.\!\!\!\!&.\!\!\!\!&.\!\!\!\!&.\!\!\!\!&1\!\!\!\!&.\!\!\!\!&.\!\!\!\!&.\!\!\!\!&.\!\!\!\!&.\!\!\!\!&.\!\!\!\!&.\!\!\!\!&.\!\!\!\!&.\!\!\!\!&.\!\!\!\!&.\!\!\!\!&.\!\!\!\!&.\!\!\!\!&.\!\!\!\!&.\!\!\!\!&.\!\!\!\!&.\!\!\!\!&2\!\!\!\!&1\!\!\!\!&.\!\!\!\!&2\!\!\!\!&2\!\!\!\!&2\!\!\!\!&.\!\!\!\!&2\!\!\!\!&2\!\!\!\!&.\!\!\!\!&2\!\!\!\!&1\!\!\!\!&1\!\!\!\!&1\!\!\!\!&.\!\!\!\!&2\!\!\!\!&.\!\!\!\!&1\!\!\!\!&2\!\!\!\!&.\!\!\!\!&.\!\!\!\!&2\!\!\!\!&1\!\!\!\!&.\!\!\!\!&1\!\!\!\!&1\!\!\!\!&.\!\!\!\!&2\!\!\!\!&2\!\!\!\!&.\\
.\!\!\!\!&.\!\!\!\!&.\!\!\!\!&.\!\!\!\!&.\!\!\!\!&.\!\!\!\!&.\!\!\!\!&.\!\!\!\!&.\!\!\!\!&.\!\!\!\!&.\!\!\!\!&.\!\!\!\!&.\!\!\!\!&1\!\!\!\!&.\!\!\!\!&.\!\!\!\!&.\!\!\!\!&.\!\!\!\!&.\!\!\!\!&.\!\!\!\!&.\!\!\!\!&.\!\!\!\!&.\!\!\!\!&.\!\!\!\!&.\!\!\!\!&.\!\!\!\!&.\!\!\!\!&.\!\!\!\!&.\!\!\!\!&.\!\!\!\!&1\!\!\!\!&.\!\!\!\!&.\!\!\!\!&2\!\!\!\!&1\!\!\!\!&.\!\!\!\!&1\!\!\!\!&2\!\!\!\!&.\!\!\!\!&.\!\!\!\!&2\!\!\!\!&.\!\!\!\!&2\!\!\!\!&.\!\!\!\!&1\!\!\!\!&.\!\!\!\!&1\!\!\!\!&2\!\!\!\!&.\!\!\!\!&2\!\!\!\!&1\!\!\!\!&1\!\!\!\!&.\!\!\!\!&1\!\!\!\!&1\!\!\!\!&.\!\!\!\!&1\!\!\!\!&.\!\!\!\!&2\!\!\!\!&.\\
.\!\!\!\!&.\!\!\!\!&.\!\!\!\!&.\!\!\!\!&.\!\!\!\!&.\!\!\!\!&.\!\!\!\!&.\!\!\!\!&.\!\!\!\!&.\!\!\!\!&.\!\!\!\!&.\!\!\!\!&.\!\!\!\!&.\!\!\!\!&1\!\!\!\!&.\!\!\!\!&.\!\!\!\!&.\!\!\!\!&.\!\!\!\!&.\!\!\!\!&.\!\!\!\!&.\!\!\!\!&.\!\!\!\!&.\!\!\!\!&.\!\!\!\!&.\!\!\!\!&.\!\!\!\!&.\!\!\!\!&.\!\!\!\!&.\!\!\!\!&2\!\!\!\!&2\!\!\!\!&1\!\!\!\!&2\!\!\!\!&.\!\!\!\!&1\!\!\!\!&1\!\!\!\!&.\!\!\!\!&1\!\!\!\!&.\!\!\!\!&1\!\!\!\!&1\!\!\!\!&1\!\!\!\!&1\!\!\!\!&1\!\!\!\!&1\!\!\!\!&.\!\!\!\!&1\!\!\!\!&2\!\!\!\!&.\!\!\!\!&.\!\!\!\!&2\!\!\!\!&.\!\!\!\!&.\!\!\!\!&1\!\!\!\!&2\!\!\!\!&1\!\!\!\!&.\!\!\!\!&.\!\!\!\!&1\\
.\!\!\!\!&.\!\!\!\!&.\!\!\!\!&.\!\!\!\!&.\!\!\!\!&.\!\!\!\!&.\!\!\!\!&.\!\!\!\!&.\!\!\!\!&.\!\!\!\!&.\!\!\!\!&.\!\!\!\!&.\!\!\!\!&.\!\!\!\!&.\!\!\!\!&1\!\!\!\!&.\!\!\!\!&.\!\!\!\!&.\!\!\!\!&.\!\!\!\!&.\!\!\!\!&.\!\!\!\!&.\!\!\!\!&.\!\!\!\!&.\!\!\!\!&.\!\!\!\!&.\!\!\!\!&.\!\!\!\!&.\!\!\!\!&.\!\!\!\!&2\!\!\!\!&1\!\!\!\!&.\!\!\!\!&2\!\!\!\!&.\!\!\!\!&2\!\!\!\!&1\!\!\!\!&.\!\!\!\!&2\!\!\!\!&.\!\!\!\!&1\!\!\!\!&1\!\!\!\!&.\!\!\!\!&.\!\!\!\!&.\!\!\!\!&1\!\!\!\!&.\!\!\!\!&.\!\!\!\!&.\!\!\!\!&1\!\!\!\!&1\!\!\!\!&2\!\!\!\!&1\!\!\!\!&.\!\!\!\!&2\!\!\!\!&.\!\!\!\!&2\!\!\!\!&2\!\!\!\!&.\!\!\!\!&1\\
.\!\!\!\!&.\!\!\!\!&.\!\!\!\!&.\!\!\!\!&.\!\!\!\!&.\!\!\!\!&.\!\!\!\!&.\!\!\!\!&.\!\!\!\!&.\!\!\!\!&.\!\!\!\!&.\!\!\!\!&.\!\!\!\!&.\!\!\!\!&.\!\!\!\!&.\!\!\!\!&1\!\!\!\!&.\!\!\!\!&.\!\!\!\!&.\!\!\!\!&.\!\!\!\!&.\!\!\!\!&.\!\!\!\!&.\!\!\!\!&.\!\!\!\!&.\!\!\!\!&.\!\!\!\!&.\!\!\!\!&.\!\!\!\!&.\!\!\!\!&.\!\!\!\!&2\!\!\!\!&1\!\!\!\!&1\!\!\!\!&1\!\!\!\!&.\!\!\!\!&1\!\!\!\!&2\!\!\!\!&2\!\!\!\!&1\!\!\!\!&2\!\!\!\!&2\!\!\!\!&2\!\!\!\!&2\!\!\!\!&.\!\!\!\!&2\!\!\!\!&.\!\!\!\!&.\!\!\!\!&2\!\!\!\!&1\!\!\!\!&.\!\!\!\!&.\!\!\!\!&1\!\!\!\!&2\!\!\!\!&2\!\!\!\!&2\!\!\!\!&.\!\!\!\!&.\!\!\!\!&1\!\!\!\!&.\\
.\!\!\!\!&.\!\!\!\!&.\!\!\!\!&.\!\!\!\!&.\!\!\!\!&.\!\!\!\!&.\!\!\!\!&.\!\!\!\!&.\!\!\!\!&.\!\!\!\!&.\!\!\!\!&.\!\!\!\!&.\!\!\!\!&.\!\!\!\!&.\!\!\!\!&.\!\!\!\!&.\!\!\!\!&1\!\!\!\!&.\!\!\!\!&.\!\!\!\!&.\!\!\!\!&.\!\!\!\!&.\!\!\!\!&.\!\!\!\!&.\!\!\!\!&.\!\!\!\!&.\!\!\!\!&.\!\!\!\!&.\!\!\!\!&.\!\!\!\!&.\!\!\!\!&1\!\!\!\!&2\!\!\!\!&.\!\!\!\!&.\!\!\!\!&1\!\!\!\!&1\!\!\!\!&.\!\!\!\!&1\!\!\!\!&2\!\!\!\!&.\!\!\!\!&2\!\!\!\!&.\!\!\!\!&.\!\!\!\!&2\!\!\!\!&.\!\!\!\!&.\!\!\!\!&.\!\!\!\!&.\!\!\!\!&1\!\!\!\!&2\!\!\!\!&1\!\!\!\!&1\!\!\!\!&2\!\!\!\!&.\!\!\!\!&.\!\!\!\!&1\!\!\!\!&2\!\!\!\!&2\!\!\!\!&1\\
.\!\!\!\!&.\!\!\!\!&.\!\!\!\!&.\!\!\!\!&.\!\!\!\!&.\!\!\!\!&.\!\!\!\!&.\!\!\!\!&.\!\!\!\!&.\!\!\!\!&.\!\!\!\!&.\!\!\!\!&.\!\!\!\!&.\!\!\!\!&.\!\!\!\!&.\!\!\!\!&.\!\!\!\!&.\!\!\!\!&1\!\!\!\!&.\!\!\!\!&.\!\!\!\!&.\!\!\!\!&.\!\!\!\!&.\!\!\!\!&.\!\!\!\!&.\!\!\!\!&.\!\!\!\!&.\!\!\!\!&.\!\!\!\!&.\!\!\!\!&2\!\!\!\!&.\!\!\!\!&.\!\!\!\!&.\!\!\!\!&2\!\!\!\!&.\!\!\!\!&2\!\!\!\!&1\!\!\!\!&1\!\!\!\!&1\!\!\!\!&2\!\!\!\!&1\!\!\!\!&.\!\!\!\!&1\!\!\!\!&2\!\!\!\!&.\!\!\!\!&2\!\!\!\!&1\!\!\!\!&2\!\!\!\!&.\!\!\!\!&2\!\!\!\!&1\!\!\!\!&2\!\!\!\!&1\!\!\!\!&.\!\!\!\!&.\!\!\!\!&1\!\!\!\!&1\!\!\!\!&2\!\!\!\!&.\\
.\!\!\!\!&.\!\!\!\!&.\!\!\!\!&.\!\!\!\!&.\!\!\!\!&.\!\!\!\!&.\!\!\!\!&.\!\!\!\!&.\!\!\!\!&.\!\!\!\!&.\!\!\!\!&.\!\!\!\!&.\!\!\!\!&.\!\!\!\!&.\!\!\!\!&.\!\!\!\!&.\!\!\!\!&.\!\!\!\!&.\!\!\!\!&1\!\!\!\!&.\!\!\!\!&.\!\!\!\!&.\!\!\!\!&.\!\!\!\!&.\!\!\!\!&.\!\!\!\!&.\!\!\!\!&.\!\!\!\!&.\!\!\!\!&.\!\!\!\!&2\!\!\!\!&.\!\!\!\!&2\!\!\!\!&2\!\!\!\!&.\!\!\!\!&2\!\!\!\!&.\!\!\!\!&2\!\!\!\!&1\!\!\!\!&2\!\!\!\!&2\!\!\!\!&2\!\!\!\!&2\!\!\!\!&2\!\!\!\!&2\!\!\!\!&2\!\!\!\!&1\!\!\!\!&.\!\!\!\!&2\!\!\!\!&1\!\!\!\!&2\!\!\!\!&2\!\!\!\!&1\!\!\!\!&2\!\!\!\!&1\!\!\!\!&1\!\!\!\!&2\!\!\!\!&2\!\!\!\!&2\!\!\!\!&1\\
.\!\!\!\!&.\!\!\!\!&.\!\!\!\!&.\!\!\!\!&.\!\!\!\!&.\!\!\!\!&.\!\!\!\!&.\!\!\!\!&.\!\!\!\!&.\!\!\!\!&.\!\!\!\!&.\!\!\!\!&.\!\!\!\!&.\!\!\!\!&.\!\!\!\!&.\!\!\!\!&.\!\!\!\!&.\!\!\!\!&.\!\!\!\!&.\!\!\!\!&1\!\!\!\!&.\!\!\!\!&.\!\!\!\!&.\!\!\!\!&.\!\!\!\!&.\!\!\!\!&.\!\!\!\!&.\!\!\!\!&.\!\!\!\!&.\!\!\!\!&.\!\!\!\!&2\!\!\!\!&.\!\!\!\!&.\!\!\!\!&.\!\!\!\!&.\!\!\!\!&1\!\!\!\!&1\!\!\!\!&1\!\!\!\!&.\!\!\!\!&1\!\!\!\!&.\!\!\!\!&2\!\!\!\!&1\!\!\!\!&1\!\!\!\!&2\!\!\!\!&1\!\!\!\!&1\!\!\!\!&1\!\!\!\!&1\!\!\!\!&1\!\!\!\!&2\!\!\!\!&2\!\!\!\!&1\!\!\!\!&2\!\!\!\!&1\!\!\!\!&2\!\!\!\!&1\!\!\!\!&2\!\!\!\!&1\\
.\!\!\!\!&.\!\!\!\!&.\!\!\!\!&.\!\!\!\!&.\!\!\!\!&.\!\!\!\!&.\!\!\!\!&.\!\!\!\!&.\!\!\!\!&.\!\!\!\!&.\!\!\!\!&.\!\!\!\!&.\!\!\!\!&.\!\!\!\!&.\!\!\!\!&.\!\!\!\!&.\!\!\!\!&.\!\!\!\!&.\!\!\!\!&.\!\!\!\!&.\!\!\!\!&1\!\!\!\!&.\!\!\!\!&.\!\!\!\!&.\!\!\!\!&.\!\!\!\!&.\!\!\!\!&.\!\!\!\!&.\!\!\!\!&.\!\!\!\!&.\!\!\!\!&2\!\!\!\!&.\!\!\!\!&2\!\!\!\!&2\!\!\!\!&2\!\!\!\!&.\!\!\!\!&1\!\!\!\!&2\!\!\!\!&1\!\!\!\!&.\!\!\!\!&.\!\!\!\!&2\!\!\!\!&1\!\!\!\!&1\!\!\!\!&.\!\!\!\!&2\!\!\!\!&2\!\!\!\!&2\!\!\!\!&1\!\!\!\!&1\!\!\!\!&2\!\!\!\!&2\!\!\!\!&.\!\!\!\!&2\!\!\!\!&.\!\!\!\!&2\!\!\!\!&.\!\!\!\!&1\!\!\!\!&.\\
.\!\!\!\!&.\!\!\!\!&.\!\!\!\!&.\!\!\!\!&.\!\!\!\!&.\!\!\!\!&.\!\!\!\!&.\!\!\!\!&.\!\!\!\!&.\!\!\!\!&.\!\!\!\!&.\!\!\!\!&.\!\!\!\!&.\!\!\!\!&.\!\!\!\!&.\!\!\!\!&.\!\!\!\!&.\!\!\!\!&.\!\!\!\!&.\!\!\!\!&.\!\!\!\!&.\!\!\!\!&1\!\!\!\!&.\!\!\!\!&.\!\!\!\!&.\!\!\!\!&.\!\!\!\!&.\!\!\!\!&.\!\!\!\!&.\!\!\!\!&2\!\!\!\!&2\!\!\!\!&1\!\!\!\!&2\!\!\!\!&1\!\!\!\!&2\!\!\!\!&2\!\!\!\!&2\!\!\!\!&1\!\!\!\!&2\!\!\!\!&2\!\!\!\!&1\!\!\!\!&1\!\!\!\!&2\!\!\!\!&1\!\!\!\!&2\!\!\!\!&.\!\!\!\!&.\!\!\!\!&.\!\!\!\!&2\!\!\!\!&.\!\!\!\!&1\!\!\!\!&.\!\!\!\!&.\!\!\!\!&.\!\!\!\!&.\!\!\!\!&2\!\!\!\!&.\!\!\!\!&.\!\!\!\!&2\\
.\!\!\!\!&.\!\!\!\!&.\!\!\!\!&.\!\!\!\!&.\!\!\!\!&.\!\!\!\!&.\!\!\!\!&.\!\!\!\!&.\!\!\!\!&.\!\!\!\!&.\!\!\!\!&.\!\!\!\!&.\!\!\!\!&.\!\!\!\!&.\!\!\!\!&.\!\!\!\!&.\!\!\!\!&.\!\!\!\!&.\!\!\!\!&.\!\!\!\!&.\!\!\!\!&.\!\!\!\!&.\!\!\!\!&1\!\!\!\!&.\!\!\!\!&.\!\!\!\!&.\!\!\!\!&.\!\!\!\!&.\!\!\!\!&.\!\!\!\!&2\!\!\!\!&1\!\!\!\!&.\!\!\!\!&.\!\!\!\!&.\!\!\!\!&2\!\!\!\!&2\!\!\!\!&2\!\!\!\!&1\!\!\!\!&.\!\!\!\!&1\!\!\!\!&1\!\!\!\!&.\!\!\!\!&.\!\!\!\!&2\!\!\!\!&2\!\!\!\!&2\!\!\!\!&1\!\!\!\!&.\!\!\!\!&1\!\!\!\!&2\!\!\!\!&2\!\!\!\!&.\!\!\!\!&1\!\!\!\!&1\!\!\!\!&1\!\!\!\!&2\!\!\!\!&.\!\!\!\!&2\!\!\!\!&.\\
.\!\!\!\!&.\!\!\!\!&.\!\!\!\!&.\!\!\!\!&.\!\!\!\!&.\!\!\!\!&.\!\!\!\!&.\!\!\!\!&.\!\!\!\!&.\!\!\!\!&.\!\!\!\!&.\!\!\!\!&.\!\!\!\!&.\!\!\!\!&.\!\!\!\!&.\!\!\!\!&.\!\!\!\!&.\!\!\!\!&.\!\!\!\!&.\!\!\!\!&.\!\!\!\!&.\!\!\!\!&.\!\!\!\!&.\!\!\!\!&1\!\!\!\!&.\!\!\!\!&.\!\!\!\!&.\!\!\!\!&.\!\!\!\!&.\!\!\!\!&1\!\!\!\!&1\!\!\!\!&2\!\!\!\!&1\!\!\!\!&1\!\!\!\!&2\!\!\!\!&1\!\!\!\!&2\!\!\!\!&.\!\!\!\!&.\!\!\!\!&2\!\!\!\!&.\!\!\!\!&.\!\!\!\!&2\!\!\!\!&.\!\!\!\!&.\!\!\!\!&2\!\!\!\!&.\!\!\!\!&1\!\!\!\!&2\!\!\!\!&1\!\!\!\!&.\!\!\!\!&1\!\!\!\!&.\!\!\!\!&1\!\!\!\!&1\!\!\!\!&2\!\!\!\!&.\!\!\!\!&2\!\!\!\!&2\\
.\!\!\!\!&.\!\!\!\!&.\!\!\!\!&.\!\!\!\!&.\!\!\!\!&.\!\!\!\!&.\!\!\!\!&.\!\!\!\!&.\!\!\!\!&.\!\!\!\!&.\!\!\!\!&.\!\!\!\!&.\!\!\!\!&.\!\!\!\!&.\!\!\!\!&.\!\!\!\!&.\!\!\!\!&.\!\!\!\!&.\!\!\!\!&.\!\!\!\!&.\!\!\!\!&.\!\!\!\!&.\!\!\!\!&.\!\!\!\!&.\!\!\!\!&1\!\!\!\!&.\!\!\!\!&.\!\!\!\!&.\!\!\!\!&.\!\!\!\!&.\!\!\!\!&2\!\!\!\!&2\!\!\!\!&2\!\!\!\!&1\!\!\!\!&2\!\!\!\!&1\!\!\!\!&1\!\!\!\!&2\!\!\!\!&.\!\!\!\!&.\!\!\!\!&2\!\!\!\!&.\!\!\!\!&1\!\!\!\!&1\!\!\!\!&.\!\!\!\!&1\!\!\!\!&.\!\!\!\!&2\!\!\!\!&.\!\!\!\!&2\!\!\!\!&2\!\!\!\!&.\!\!\!\!&1\!\!\!\!&1\!\!\!\!&2\!\!\!\!&1\!\!\!\!&.\!\!\!\!&2\!\!\!\!&.\\
.\!\!\!\!&.\!\!\!\!&.\!\!\!\!&.\!\!\!\!&.\!\!\!\!&.\!\!\!\!&.\!\!\!\!&.\!\!\!\!&.\!\!\!\!&.\!\!\!\!&.\!\!\!\!&.\!\!\!\!&.\!\!\!\!&.\!\!\!\!&.\!\!\!\!&.\!\!\!\!&.\!\!\!\!&.\!\!\!\!&.\!\!\!\!&.\!\!\!\!&.\!\!\!\!&.\!\!\!\!&.\!\!\!\!&.\!\!\!\!&.\!\!\!\!&.\!\!\!\!&1\!\!\!\!&.\!\!\!\!&.\!\!\!\!&.\!\!\!\!&.\!\!\!\!&.\!\!\!\!&2\!\!\!\!&2\!\!\!\!&1\!\!\!\!&1\!\!\!\!&.\!\!\!\!&2\!\!\!\!&1\!\!\!\!&.\!\!\!\!&.\!\!\!\!&2\!\!\!\!&2\!\!\!\!&.\!\!\!\!&2\!\!\!\!&2\!\!\!\!&.\!\!\!\!&2\!\!\!\!&2\!\!\!\!&2\!\!\!\!&1\!\!\!\!&1\!\!\!\!&2\!\!\!\!&.\!\!\!\!&2\!\!\!\!&1\!\!\!\!&2\!\!\!\!&.\!\!\!\!&2\!\!\!\!&.\\
.\!\!\!\!&.\!\!\!\!&.\!\!\!\!&.\!\!\!\!&.\!\!\!\!&.\!\!\!\!&.\!\!\!\!&.\!\!\!\!&.\!\!\!\!&.\!\!\!\!&.\!\!\!\!&.\!\!\!\!&.\!\!\!\!&.\!\!\!\!&.\!\!\!\!&.\!\!\!\!&.\!\!\!\!&.\!\!\!\!&.\!\!\!\!&.\!\!\!\!&.\!\!\!\!&.\!\!\!\!&.\!\!\!\!&.\!\!\!\!&.\!\!\!\!&.\!\!\!\!&.\!\!\!\!&1\!\!\!\!&.\!\!\!\!&.\!\!\!\!&2\!\!\!\!&.\!\!\!\!&2\!\!\!\!&.\!\!\!\!&.\!\!\!\!&.\!\!\!\!&1\!\!\!\!&1\!\!\!\!&2\!\!\!\!&1\!\!\!\!&2\!\!\!\!&1\!\!\!\!&2\!\!\!\!&.\!\!\!\!&.\!\!\!\!&2\!\!\!\!&2\!\!\!\!&.\!\!\!\!&2\!\!\!\!&2\!\!\!\!&2\!\!\!\!&1\!\!\!\!&1\!\!\!\!&.\!\!\!\!&1\!\!\!\!&2\!\!\!\!&.\!\!\!\!&2\!\!\!\!&2\!\!\!\!&.\\
.\!\!\!\!&.\!\!\!\!&.\!\!\!\!&.\!\!\!\!&.\!\!\!\!&.\!\!\!\!&.\!\!\!\!&.\!\!\!\!&.\!\!\!\!&.\!\!\!\!&.\!\!\!\!&.\!\!\!\!&.\!\!\!\!&.\!\!\!\!&.\!\!\!\!&.\!\!\!\!&.\!\!\!\!&.\!\!\!\!&.\!\!\!\!&.\!\!\!\!&.\!\!\!\!&.\!\!\!\!&.\!\!\!\!&.\!\!\!\!&.\!\!\!\!&.\!\!\!\!&.\!\!\!\!&.\!\!\!\!&1\!\!\!\!&.\!\!\!\!&1\!\!\!\!&.\!\!\!\!&1\!\!\!\!&2\!\!\!\!&1\!\!\!\!&1\!\!\!\!&.\!\!\!\!&2\!\!\!\!&1\!\!\!\!&2\!\!\!\!&2\!\!\!\!&.\!\!\!\!&.\!\!\!\!&1\!\!\!\!&2\!\!\!\!&1\!\!\!\!&2\!\!\!\!&1\!\!\!\!&.\!\!\!\!&.\!\!\!\!&.\!\!\!\!&1\!\!\!\!&2\!\!\!\!&1\!\!\!\!&2\!\!\!\!&2\!\!\!\!&1\!\!\!\!&1\!\!\!\!&1\!\!\!\!&2\\
.\!\!\!\!&.\!\!\!\!&.\!\!\!\!&.\!\!\!\!&.\!\!\!\!&.\!\!\!\!&.\!\!\!\!&.\!\!\!\!&.\!\!\!\!&.\!\!\!\!&.\!\!\!\!&.\!\!\!\!&.\!\!\!\!&.\!\!\!\!&.\!\!\!\!&.\!\!\!\!&.\!\!\!\!&.\!\!\!\!&.\!\!\!\!&.\!\!\!\!&.\!\!\!\!&.\!\!\!\!&.\!\!\!\!&.\!\!\!\!&.\!\!\!\!&.\!\!\!\!&.\!\!\!\!&.\!\!\!\!&.\!\!\!\!&1\!\!\!\!&1\!\!\!\!&1\!\!\!\!&1\!\!\!\!&2\!\!\!\!&1\!\!\!\!&2\!\!\!\!&1\!\!\!\!&.\!\!\!\!&1\!\!\!\!&.\!\!\!\!&1\!\!\!\!&1\!\!\!\!&.\!\!\!\!&1\!\!\!\!&.\!\!\!\!&1\!\!\!\!&1\!\!\!\!&.\!\!\!\!&.\!\!\!\!&1\!\!\!\!&.\!\!\!\!&.\!\!\!\!&.\!\!\!\!&.\!\!\!\!&1\!\!\!\!&2\!\!\!\!&2\!\!\!\!&.\!\!\!\!&.\!\!\!\!&.
  \end{array} \right)\hspace{-1in} $$
  \normalsize
  Since $5$ is coprime to $24$, we do not need any automorphism. \hfill\ensuremath{\Diamond}
\end{enumerate}
\end{examplenodiamond}

%\bibliography{codes}{}

\begin{thebibliography}{CDD{\etalchar{+}}15}

\bibitem[BGK16]{BGK2016}
Lakshya Bhardwaj, Davide Gaiotto, and Anton Kapustin.
\newblock State sum constructions of spin-tfts and string net constructions of
  fermionic phases of matter.
\newblock 2016.
\newblock \arXiv{1605.01640}.

\bibitem[Bor95]{MR1323986}
Richard~E. Borcherds.
\newblock Automorphic forms on {${\rm O}_{s+2,2}({\bf R})$} and infinite
  products.
\newblock {\em Invent. Math.}, 120(1):161--213, 1995.
\newblock \DOI{10.1007/BF01241126}.

\bibitem[CCN{\etalchar{+}}85]{ATLAS}
J.~H. Conway, R.~T. Curtis, S.~P. Norton, R.~A. Parker, and R.~A. Wilson.
\newblock {\em Atlas of finite groups}.
\newblock Oxford University Press, Eynsham, 1985.
\newblock Maximal subgroups and ordinary characters for simple groups, With
  computational assistance from J. G. Thackray.

\bibitem[CDD{\etalchar{+}}15]{MR3373711}
Miranda C.~N. Cheng, Xi~Dong, John F.~R. Duncan, Sarah Harrison, Shamit Kachru,
  and Timm Wrase.
\newblock Mock modular {M}athieu moonshine modules.
\newblock {\em Res. Math. Sci.}, 2:Art. 13, 89, 2015.
\newblock \DOI{10.1186/s40687-015-0034-9}. \arXiv{1406.5502}.

\bibitem[CM16]{CarnahanMiyamoto}
Scott Carnahan and Masahiko Miyamoto.
\newblock Regularity of fixed-point vertex operator subalgebras.
\newblock 2016.
\newblock \arXiv{1603.05645}.

\bibitem[CN79]{MR554399}
J.~H. Conway and S.~P. Norton.
\newblock Monstrous moonshine.
\newblock {\em Bull. London Math. Soc.}, 11(3):308--339, 1979.
\newblock \DOI{10.1112/blms/11.3.308}.

\bibitem[CS99]{MR1662447}
J.~H. Conway and N.~J.~A. Sloane.
\newblock {\em Sphere packings, lattices and groups}, volume 290 of {\em
  Grundlehren der Mathematischen Wissenschaften [Fundamental Principles of
  Mathematical Sciences]}.
\newblock Springer-Verlag, New York, third edition, 1999.
\newblock With additional contributions by E. Bannai, R. E. Borcherds, J.
  Leech, S. P. Norton, A. M. Odlyzko, R. A. Parker, L. Queen and B. B. Venkov.

\bibitem[DFHH14]{MR3223024}
Christopher~L. Douglas, John Francis, Andr\'e~G. Henriques, and Michael~A.
  Hill, editors.
\newblock {\em Topological modular forms}, volume 201 of {\em Mathematical
  Surveys and Monographs}.
\newblock American Mathematical Society, Providence, RI, 2014.

\bibitem[DGH88]{MR968697}
L.~Dixon, P.~Ginsparg, and J.~Harvey.
\newblock Beauty and the beast: superconformal symmetry in a {M}onster module.
\newblock {\em Comm. Math. Phys.}, 119(2):221--241, 1988.

\bibitem[DN99]{MR1745258}
Chongying Dong and Kiyokazu Nagatomo.
\newblock Automorphism groups and twisted modules for lattice vertex operator
  algebras.
\newblock In {\em Recent developments in quantum affine algebras and related
  topics ({R}aleigh, {NC}, 1998)}, volume 248 of {\em Contemp. Math.}, pages
  117--133. Amer. Math. Soc., Providence, RI, 1999.
\newblock \arXiv{math/9808088}. \DOI{10.1090/conm/248/03821}.

\bibitem[Don93]{MR1245855}
Chongying Dong.
\newblock Vertex algebras associated with even lattices.
\newblock {\em J. Algebra}, 161(1):245--265, 1993.
\newblock \DOI{10.1006/jabr.1993.1217}.

\bibitem[Dun07]{MR2352133}
John~F. Duncan.
\newblock Super-moonshine for {C}onway's largest sporadic group.
\newblock {\em Duke Math. J.}, 139(2):255--315, 2007.
\newblock \DOI{10.1215/S0012-7094-07-13922-X}. \arXiv{math/0502267}.

\bibitem[DVVV89]{MR1003430}
Robbert Dijkgraaf, Cumrun Vafa, Erik Verlinde, and Herman Verlinde.
\newblock The operator algebra of orbifold models.
\newblock {\em Comm. Math. Phys.}, 123(3):485--526, 1989.

\bibitem[Elk00]{MR1794130}
Noam~D. Elkies.
\newblock Lattices, linear codes, and invariants. {II}.
\newblock {\em Notices Amer. Math. Soc.}, 47(11):1382--1391, 2000.

\bibitem[FBZ04]{MR2082709}
Edward Frenkel and David Ben-Zvi.
\newblock {\em Vertex algebras and algebraic curves}, volume~88 of {\em
  Mathematical Surveys and Monographs}.
\newblock American Mathematical Society, Providence, RI, second edition, 2004.
\newblock \DOI{10.1090/surv/088}.

\bibitem[FLM88]{MR996026}
Igor Frenkel, James Lepowsky, and Arne Meurman.
\newblock {\em Vertex operator algebras and the {M}onster}, volume 134 of {\em
  Pure and Applied Mathematics}.
\newblock Academic Press, Inc., Boston, MA, 1988.

\bibitem[Fra70]{MR0269751}
J.~S. Frame.
\newblock The characters of the {W}eyl group {$E_{8}$}.
\newblock In {\em Computational {P}roblems in {A}bstract {A}lgebra ({P}roc.
  {C}onf., {O}xford, 1967)}, pages 111--130. Pergamon, Oxford, 1970.

\bibitem[Fre08]{MR2434259}
Daniel~S. Freed.
\newblock Pions and generalized cohomology.
\newblock {\em J. Differential Geom.}, 80(1):45--77, 2008.
\newblock \arXiv{hep-th/0607134}.

\bibitem[FT12]{FreedTeleman2012}
Daniel~S. Freed and Constantin Teleman.
\newblock Relative quantum field theory.
\newblock 2012.
\newblock \arXiv{1212.1692}.

\bibitem[GJF17]{GJF2017}
Davide Gaiotto and Theo Johnson-Freyd.
\newblock Symmetry protected topological phases and generalized cohomology.
\newblock 2017.
\newblock \arXiv{1712.07950}.

\bibitem[GO85]{MR791865}
P.~Goddard and D.~Olive.
\newblock Kac-{M}oody algebras, conformal symmetry and critical exponents.
\newblock {\em Nuclear Phys. B}, 257(2):226--252, 1985.
\newblock \DOI{10.1016/0550-3213(85)90344-X}.

\bibitem[Gol49]{Golay}
M.~Golay.
\newblock Notes on digital coding.
\newblock {\em Proceedings of the Institute of Radio Engineers}, page 657,
  1949.

\bibitem[GPRV13]{GPRV}
Matthias~R. Gaberdiel, Daniel Persson, Henrik Ronellenfitsch, and Roberto
  Volpato.
\newblock Generalized {M}athieu {M}oonshine.
\newblock {\em Commun. Number Theory Phys.}, 7(1):145--223, 2013.
\newblock \DOI{10.4310/CNTP.2013.v7.n1.a5}. \arXiv{1211.7074}.

\bibitem[GW14]{GuWen2014}
Zheng-Cheng Gu and Xiao-Gang Wen.
\newblock {Symmetry-protected topological orders for interacting fermions:
  Fermionic topological nonlinear sigma models and a special group
  supercohomology theory}.
\newblock {\em Phys. Rev.}, B90(11):115141, 2014.
\newblock \DOI{10.1103/PhysRevB.90.115141}. \arXiv{1201.2648}.

\bibitem[HK07]{HeluaniKac2007}
Reimundo Heluani and Victor~G. Kac.
\newblock Susy lattice vertex algebras.
\newblock 2007.
\newblock \arXiv{0710.1587}.

\bibitem[HM09]{MR2522420}
Masaaki Harada and Akihiro Munemasa.
\newblock A complete classification of ternary self-dual codes of length 24.
\newblock {\em J. Combin. Theory Ser. A}, 116(5):1063--1072, 2009.
\newblock \DOI{10.1016/j.jcta.2008.11.011}. \arXiv{0804.0637}.

\bibitem[Hop02]{MR1989190}
M.~J. Hopkins.
\newblock Algebraic topology and modular forms.
\newblock In {\em Proceedings of the {I}nternational {C}ongress of
  {M}athematicians, {V}ol. {I} ({B}eijing, 2002)}, pages 291--317. Higher Ed.
  Press, Beijing, 2002.
\newblock \arXiv{math/0212397}.

\bibitem[Kac98]{MR1651389}
Victor Kac.
\newblock {\em Vertex algebras for beginners}, volume~10 of {\em University
  Lecture Series}.
\newblock American Mathematical Society, Providence, RI, second edition, 1998.
\newblock \DOI{10.1090/ulect/010}.

\bibitem[Kir02]{MR1923177}
Alexander Kirillov, Jr.
\newblock Modular categories and orbifold models.
\newblock {\em Comm. Math. Phys.}, 229(2):309--335, 2002.
\newblock \DOI{10.1007/s002200200650}. \arXiv{math/0104242}.

\bibitem[Lep85]{MR820716}
J.~Lepowsky.
\newblock Calculus of twisted vertex operators.
\newblock {\em Proc. Nat. Acad. Sci. U.S.A.}, 82(24):8295--8299, 1985.
\newblock \DOI{10.1073/pnas.82.24.8295}.

\bibitem[LPS81]{MR633414}
Jeffrey~S. Leon, Vera Pless, and N.~J.~A. Sloane.
\newblock On ternary self-dual codes of length {$24$}.
\newblock {\em IEEE Trans. Inform. Theory}, 27(2):176--180, 1981.
\newblock \DOI{10.1109/TIT.1981.1056328}.

\bibitem[LX95]{MR1325775}
Hai~Sheng Li and Xiaoping Xu.
\newblock A characterization of vertex algebras associated to even lattices.
\newblock {\em J. Algebra}, 173(2):253--270, 1995.
\newblock \DOI{10.1006/jabr.1995.1087}.

\bibitem[M{\"{o}}l16]{MollerThesis}
Sven M{\"{o}}ller.
\newblock {\em A cyclic orbifold theory for holomorphic vertex operator
  algebras and applications}.
\newblock PhD thesis, Technische Universit\"at Darmstadt, 2016.
\newblock \arXiv{1611.09843}.

\bibitem[M{\"u}g10]{MR2730815}
Michael M{\"u}ger.
\newblock On superselection theory of quantum fields in low dimensions.
\newblock In {\em X{VI}th {I}nternational {C}ongress on {M}athematical
  {P}hysics}, pages 496--503. World Sci. Publ., Hackensack, NJ, 2010.
\newblock \DOI{10.1142/9789814304634_0041}.

\bibitem[OP44]{MR0010153}
R.~E. O'Connor and G.~Pall.
\newblock The construction of integral quadratic forms of determinant 1.
\newblock {\em Duke Math. J.}, 11:319--331, 1944.

\bibitem[Ple69]{MR0245455}
Vera Pless.
\newblock On a new family of symmetry codes and related new five-designs.
\newblock {\em Bull. Amer. Math. Soc.}, 75:1339--1342, 1969.
\newblock \DOI{10.1090/S0002-9904-1969-12418-3}.

\bibitem[Sch93]{MR1213740}
A.~N. Schellekens.
\newblock Meromorphic {$c=24$} conformal field theories.
\newblock {\em Comm. Math. Phys.}, 153(1):159--185, 1993.

\bibitem[ST04]{MR2079378}
Stephan Stolz and Peter Teichner.
\newblock What is an elliptic object?
\newblock In {\em Topology, geometry and quantum field theory}, volume 308 of
  {\em London Math. Soc. Lecture Note Ser.}, pages 247--343. Cambridge Univ.
  Press, Cambridge, 2004.
\newblock \DOI{10.1017/CBO9780511526398.013}.

\bibitem[ST11]{MR2742432}
Stephan Stolz and Peter Teichner.
\newblock Supersymmetric field theories and generalized cohomology.
\newblock In {\em Mathematical foundations of quantum field theory and
  perturbative string theory}, volume~83 of {\em Proc. Sympos. Pure Math.},
  pages 279--340. Amer. Math. Soc., Providence, RI, 2011.
\newblock \arXiv{1108.0189}.

\bibitem[WG17]{WangGu2017}
Qing-Rui Wang and Zheng-Cheng Gu.
\newblock Towards a complete classification of fermionic symmetry protected
  topological phases in 3d and a general group supercohomology theory.
\newblock 2017.
\newblock \arXiv{1703.10937}.

\bibitem[Wit87]{MR885560}
Edward Witten.
\newblock Elliptic genera and quantum field theory.
\newblock {\em Comm. Math. Phys.}, 109(4):525--536, 1987.

\bibitem[Wit88]{MR970288}
Edward Witten.
\newblock The index of the {D}irac operator in loop space.
\newblock In {\em Elliptic curves and modular forms in algebraic topology
  ({P}rinceton, {NJ}, 1986)}, volume 1326 of {\em Lecture Notes in Math.},
  pages 161--181. Springer, Berlin, 1988.
\newblock \DOI{10.1007/BFb0078045}.

\end{thebibliography}
%\bibliographystyle{alpha}

\newcommand{\etalchar}[1]{$^{#1}$}

\end{document}